\newtheorem{definition}{Definition}
\newtheorem{proposition}{Proposition}
\newtheorem{theorem}{Theorem}
\newtheorem{remark}{Remark}
\newtheorem{lemma}{Lemma}
\def\BibTeX{{\rm B\kern-.05em{\sc i\kern-.025em b}\kern-.08em
    T\kern-.1667em\lower.7ex\hbox{E}\kern-.125emX}}
\begin{document}
\title{A Hamilton-Jacobi Reachability Framework with Soft Constraints for Safety-Critical Systems}
\author{Chams Eddine Mballo, Donggun Lee, and Claire J. Tomlin, \IEEEmembership{Fellow, IEEE}
\thanks{Chams Eddine Mballo and Claire J. Tomlin are with the Department of Electrical Engineering and Computer Sciences, University of California, Berkeley, CA 94704, USA (e-mail: cmballo@berkeley.edu; tomlin@eecs.berkeley.edu).}%
\thanks{Donggun Lee is with the Department of Mechanical and Aerospace Engineering, North Carolina State University, Raleigh, NC 27606, USA (e-mail: dlee48@ncsu.edu).}%
}

\maketitle

\begin{abstract}

Traditional reachability methods provide formal guarantees of safety under bounded disturbances. However, they strictly enforce state constraints as inviolable, which can result in overly conservative or infeasible solutions in complex operational scenarios. Many constraints encountered in practice—such as bounds on battery state-of-charge in electric vehicles, recommended speed envelopes, and comfort constraints in passenger-carrying vehicles—are inherently soft. Soft constraints allow temporary violations within predefined safety margins to accommodate uncertainty and competing operational demands, albeit at a cost such as increased wear or higher operational expenses. This paper introduces a novel soft‑constrained reachability framework that extends Hamilton–Jacobi reachability analysis for the formal verification of safety‑critical systems subject to both hard and soft constraints. Specifically, the framework characterizes a subset of the state space—referred to as the \emph{soft-constrained reach–avoid set}—from which the system is guaranteed to reach a desired set safely, under worst-case disturbances, while ensuring that cumulative soft-constraint violations remain within a user-specified budget. The framework comprises two principal components: (i) an augmented-state model with an auxiliary budget state that tracks soft-constraint violations, and (ii) a regularization-based approximation of the discontinuous Hamilton–Jacobi value function associated with the reach–avoid differential game studied herein. The effectiveness of the proposed framework is demonstrated through numerical examples involving the landing of a simple point-mass model and a fixed-wing aircraft executing an emergency descent, both under wind disturbances. The simulation results validate the framework’s ability to simultaneously manage both hard and soft constraints in safety-critical settings.

\end{abstract}

\begin{IEEEkeywords}
Reachability analysis, safety trade-offs, soft constraints, safety constraints, safe control, safety-critical systems, system verification, control theory, game theory
\end{IEEEkeywords}

\section{Introduction}
\label{sec:introduction}
\IEEEPARstart{T}{he} verification and validation of safety-critical systems has been the subject of tremendous research efforts, primarily due to the detrimental consequences of their failure, including loss of human life and significant infrastructure damage.

Hamilton-Jacobi (HJ) reachability is one of several approaches for the verification of complex safety-critical systems. It characterizes a subset of the state-space, known as the \textit{reach-avoid set}, wherein a dynamical system can achieve specific goals, such as reaching a desired set of states while avoiding unsafe states. This subset is implicitly characterized by the viscosity solution of a first-order partial differential equation, known as the HJ Partial Differential Equation (HJ PDE) \cite{c12}. For each state in this subset, an optimal (feedback) strategy can be synthesized from the gradient of this solution, which guarantees goal achievement. Moreover, HJ reachability analysis can accommodate bounded disturbances and model uncertainties via the Hamilton–Jacobi–Isaacs (HJI) PDE \cite{c11}.

HJ reachability analysis has found applications in robotics and aerospace to address performance and safety challenges in autonomous and semi-autonomous systems~\cite{c3, c11}, with practical implementations supported by the Level-Set~\cite{c_11} and HelperOC~\cite{c_12} toolboxes. The method has also been used to develop frameworks for the safe deployment of learning-based control in aerial vehicles~\cite{c14}. Additional applications include ensuring safe locomotion in legged robots~\cite{c15} and providing flight envelope protection for both fixed-wing~\cite{c17} and Electric Vertical Take-off and Landing (EVTOL) aircraft~\cite{c16}.

Prior work on HJ reachability analysis has focused primarily on constraints that must always be satisfied, known as hard constraints. A different category, that of soft constraints, although studied in contexts such as model predictive control (MPC), constrained Markov decision processes (CMDPs), control barrier functions (CBFs), and neural-network--based optimization, has not yet been addressed within the HJ framework. 

Soft constraints in CMDPs, MPC, CBFs, and neural network--based optimization are often introduced to preserve optimization problem feasibility by relaxing hard constraints via slack variables. In contrast, we consider constraints that are inherently soft. Unlike hard constraints, these allow for a certain degree of flexibility, enabling temporary violations that result in increased operational cost rather than immediate system failure. Such constraints are common in a wide range of physical systems. In battery use, for example, manufacturers set operating limits on key variables such as temperature and state-of-charge~\cite{c25} to mitigate degradation mechanisms, including lithium plating, that accelerate battery aging. These limits are soft constraints: they may be exceeded briefly, for instance to meet urgent power demands, but doing so can shorten battery life. Another example from helicopter control involves the engine’s maximum transient torque and temperature limits. These internal constraints fall within the design envelope and are imposed to reduce stress on the engine during flight, thereby extending its life and enhancing overall vehicle safety~\cite{c24}. However, during critical operations—such as a forced landing with conflicting safety priorities—these limits may need to be temporarily exceeded. Soft constraints therefore enable critical trade-offs among safety priorities and between safety and performance.

This paper proposes a novel HJ reachability framework that integrates both hard and soft constraints, offering a unified approach to analyzing and managing safety-critical systems with diverse constraint characteristics.

\subsection{Contributions}
The main contributions are summarized below. 
\begin{enumerate}
    \item This paper introduces a new class of reachability problems that incorporate soft constraints, along with an HJ framework for their characterization and solution. These problems enable the formal treatment of varying safety priorities, ensuring that high priority hard constraints are always satisfied, whereas lower priority soft constraints may be violated in a controlled manner within a predefined, disturbance-robust violation budget.
    
    \item As part of the proposed framework, we introduce an augmentation of the system dynamics with an additional state variable whose evolution enables exact tracking of the duration of soft-constraint violations. While this leads to discontinuous system dynamics, we introduce a novel approach for computing the reachable set of such discontinuous systems by approximating the discontinuous value function with a family of Lipschitz-continuous value functions.

    \item We prove that the framework recovers the classical hard-constrained reachability solution when the violation budget is zero, establishing theoretical consistency and demonstrating that the proposed approach is a strict generalization of existing methods.
    
    \item We demonstrate the practical value of the proposed framework through case studies involving reach-avoid differential games with soft constraints during critical aircraft flight phases. Specifically, we examine two scenarios: the landing of a simple point-mass vehicle model and the emergency descent of a fixed-wing aircraft following propulsion failure, both under wind disturbances. 
\end{enumerate}
% These frameworks are suppmented  by optimization techniques such as ....and data-driven optimization using neural networks (hereafter referred to as NN-based optimization) and other optimization techniques exist—such as penalty and Lagrangian methods~\cite{Bertsekas2014}, direct transcription/sequential quadratic programming (SQP)~\cite{Betts2010}, and mixed-integer programming~\cite{Bemporad1999},~\cite{MIP},~\cite{MIP_2}.  \\

\subsection{Related Work}
This section reviews existing primary frameworks used to address problems with reach, avoid, or reach–avoid objectives (i.e., reachability-type problems) under hard and/or soft constraints—namely, CMDPs, MPC and CBFs. Despite substantial progress, these approaches-including classical HJ reachability-still lack a comprehensive and unified methodology for verifying uncertain, safety-critical systems with safety constraints of differing priorities. This gap motivates the contributions of this paper.

\noindent\textbf{CMDPs}~\cite{altman} intrinsically accommodate soft constraints, particularly in safe Reinforcement Learning (RL), by optimizing policies that maximize cumulative reward while ensuring that cumulative constraint costs (i.e., violations) remain below prescribed thresholds~\cite{Russel2020},~\cite{Carrara2018},\cite{Lin2023},\cite{Jiang2024},\cite{Gu2024},\cite{Huang2021},\cite{Fisac2019}. A key strength of CMDPs is the ability to accommodate uncertainty in both the dynamics and the environment. However, standard CMDP formulations do not, in general, ensure persistent satisfaction of state constraints, which can limit their ability to treat hard and soft constraints concurrently~\cite{Yu2022,Ganai2023}. Moreover, CMDP value functions do not, by themselves, provide set-based reachability certificates~\cite{ding2020natural},\cite{ying2023policy},\cite{achiam2017constrained},\cite{chow2018risk}. Thus, an explicit characterization of the set of initial states that ensures task success under both hard and soft constraints is not directly available.

\noindent\textbf{MPC}~\cite{mpc_1}—in both its standard form~\cite{mpc_3},\cite{mpc_4},\cite{mpc_5},\cite{mpc_6},\cite{mpc_7} and its predictive safety–filter variant~\cite{safety_filters}—is widely used to enforce hard constraints (e.g., collision avoidance) in reach–avoid problems. One major advantage of this framework is that, under appropriate problem structure, it admits formulations that reduce to convex programs, for which fast, reliable solvers enable real-time implementation~\cite{mpc_9}. Despite these advantages, ensuring (recursive) feasibility is nontrivial: disturbances and model–plant mismatch can render the optimization infeasible~\cite{mpc_9},\cite{mpc_soft_constraints1},\cite{mpc_soft_constraints2}. A widely adopted remedy is to model soft constraints as relaxations of hard constraints by introducing nonnegative slack variables~\cite{mpc_9},\cite{mpc_soft_constraints1},\cite{mpc_soft_constraints2},\cite{mpc_soft_constraints3},\cite{mpc_soft_constraints4},\cite{mpc_soft_constraints5},\cite{mpc_soft_constraints6},\cite{mpc_soft_constraints7},\cite{mpc_soft_constraints8},\cite{mpc_11}.  

\noindent\textbf{CBFs}~\cite{barrier_functions, barrier_functions2} encode safety constraints as the zero superlevel sets of barrier functions, with the safe set defined as their intersection. A safety filter is derived by solving a constrained optimization problem that enforces constraints to keep the system within the safe set. Its practical merit is most evident for control-affine dynamics, where the safety filter reduces to a convex quadratic program that minimally adjusts the output of a performance-based controller to maintain safety. This quadratic program is typically tractable in real time for moderate-size systems. As with MPC, the optimization problem may be infeasible under conflicting safety constraints. To address this,~\cite{soft_barrier_functions, soft_barrier_functions1} propose a soft-constrained CBF formulation that assigns priorities: high-priority constraints are enforced as hard, whereas lower-priority ones are relaxed via slack variables. A key limitation of CBFs is that the construction of barrier functions is nontrivial—particularly under input constraints—and remains an active area of research.

Soft constraints are also widely used in neural-network–based optimization to enable the application of HJ reachability, CMDPs, MPC, and CBFs to complex, high-dimensional systems and to support the synthesis of safe learning-based controllers. In this setting, hard constraints are typically relaxed into soft constraints by augmenting the training loss with weighted penalty terms~\cite{NN_1},\cite{NN_2},\cite{NN_3}. However, this indirect strategy offers no guarantees at deployment. Thus, recent work has focused on embedding hard constraints directly into the learning process, for example via projection-based methods~\cite{NN_8},\cite{NN_9},\cite{NN_10},\cite{NN_11}.

% \noindent\textbf{NN-based Optimization.} 
% The need to scale ( use another word) these methods to complex, high-dimensional dynamics, motivated the use of neural-networks. Since, NN have issues enforcing safety constraints soft consttraint have also been introduced here as well. A common approach is to incorporate soft constraints by adding penalty terms to the loss function during training~\cite{NN_1},\cite{NN_2},\cite{NN_3},\cite{NN_4},\cite{NN_5},\cite{NN_6}. However, this indirect strategy offers no guarantees at deployment. Thus, recent work has focused on embedding hard constraints directly into the training process through techniques such as projection-based~methods~\cite{NN_7},\cite{NN_8},\cite{NN_9},\cite{NN_10},\cite{NN_11}.

% Unlike these methods, HJ reachability enables robust, set-based verification over the state space. More specifically, it computes in a single offline analysis the subset of the state space from which general reachability properties, such as reach--avoid, are guaranteed for nonlinear systems with bounded disturbances and input/state constraints. This capability makes it well-suited for the formal verification of complex safety-critical systems. Extending the HJ framework to handle soft constraints would preserve its rigorous safety guarantees, as well as safety-assured controller synthesis—while enabling explicit trade-offs between different safety objectives and between safety and performance. 

A salient feature of HJ reachability—unlike CMDPs, MPC, and CBFs—is its ability to provide robust, set-based verification over the state space. Specifically, a single offline computation yields the maximal set of initial states from which reachability properties (e.g., reach--avoid) are guaranteed for nonlinear systems with bounded disturbances and input/state constraints. However, this capability comes at a cost: HJ methods suffer from the curse of dimensionality, with computational complexity that scales exponentially with the state dimension. 

Extending the HJ framework to handle soft constraints preserves rigorous set-based safety guarantees and safety-assured policy synthesis, while enabling explicit trade-offs among safety objectives and between safety and performance. Crucially, our notion of soft constraints departs from prior work: they are not mere slack-based relaxations of hard constraints to avoid infeasibility, but rather lower-priority (yet safety-relevant) requirements whose violation is bounded by a user-chosen, disturbance-robust budget. 
\section{Problem Formulation}
 \label{sec:Problem formulation}
This section introduces a new class of reachability problems that feature both hard and soft constraints. We focus on an instance of this class, the soft-constrained, reach-avoid game.
\subsection{System Dynamics and Game Structure}
\label{sec:system_dynamics}
Consider the ordinary differential equation (ODE)
\begin{equation}
\dot{\mathrm{x}}(s) = f\bigl(s,\,\mathrm{x}(s),\,\bm a(s),\,\bm b(s)\bigr), \quad \mathrm{x}(t) = x.
\label{eq:ode}
\end{equation}
\noindent Here, \(t \in [0,T]\) denotes the initial time, \(s \in [t,T]\) is the time variable, \(\mathrm{x}(\cdot):[t,T] \to \mathbb{R}^n\) is the state trajectory starting at \(\mathrm{x}(t) = x \in \mathbb{R}^n\). The dynamics \(f: [0,T] \times \mathbb{R}^n \times \mathbb{A} \times \mathbb{B} \to \mathbb{R}^n\) satisfies standard assumptions: boundedness, uniform continuity, and Lipschitz continuity with respect to the state~\cite{Evans}. The measurable functions \(\bm{a}(\cdot)\) and \(\bm{b}(\cdot)\), taking values in nonempty, compact sets \(\mathbb{A} \subset \mathbb{R}^{m_a}\) and \(\mathbb{B} \subset \mathbb{R}^{m_b}\), define the control strategies of two players: Player~$\mathrm{A}$, who seeks to steer the system into a target set under constraints, and Player~$\mathrm{B}$, who acts adversarially against Player~$\mathrm{A}$. In practice, Player~$\mathrm{B}$ can be interpreted as a bounded disturbance that opposes the system behavior. 

To capture the worst-case outcome, we model the interaction between the two players as a zero-sum game, where Player~$\mathrm{B}$ responds to Player~$\mathrm{A}$ using a non-anticipative strategy. Let \(\mathcal{A}_t\) and \(\mathcal{B}_t\) denote the sets of measurable functions from \([t,T]\) to \(\mathbb{A}\) and \(\mathbb{B}\), respectively. Player~$\mathrm{A}$ selects a control \(\bm{a} \in \mathcal{A}_t\), and Player~$\mathrm{B}$ responds through a strategy \(\delta \in \Delta_t\), where
\begin{align}
\Delta_t :=\!\{\delta:\mathcal{A}_t\!\to\!\mathcal{B}_t \mid 
&\forall s\!\in\![t,T],~\bm a,\bm{\bar a}\!\in\!\mathcal{A}_t,\;
\bm a\equiv\bm{\bar a}\text{ a.e.\ on }\![t,s] \notag\\[-2pt]
&\Rightarrow\;\delta[\bm a]\equiv\delta[\bm{\bar a}]\text{ a.e.\ on }[t,s]\}.
\label{eq:nonanticipative}
\end{align}
This non-anticipative structure ensures Player~$\mathrm{B}$ cannot use Player~$\mathrm{A}$'s future control inputs to decide its current input.

Under the stated assumptions, each pair \((\bm{a}, \delta[\bm{a}]) \in \mathcal{A}_t \times \mathcal{B}_t\), with \(\delta \in \Delta_t\), induces a unique continuous trajectory \(\phi_{t,x}^{\bm{a}, \delta[\bm{a}]} : [t,T] \to \mathbb{R}^n\) that satisfies the ODE~\eqref{eq:ode} almost everywhere~\cite{c1}.
\subsection{Target and Constraint Sets}\label{sec:constraint_sets}
To capture a wide range of practical scenarios, we consider compact, time-dependent target, and hard and soft constraint sets \(\mathbb{T}_t\), \(\mathbb{C}_{1,t}\), and \(\mathbb{C}_{2,t}\) with uniformly bounded temporal variation\footnote{\(\mathbb{T}_t\) has uniformly bounded temporal variation if there exists \(L>0\) such that
\(d_H(\mathbb{T}_t,\mathbb{T}_s) \le L\,|t-s|\) for all \(s,t\in[0,T]\), where \(d_H\) denotes the Hausdorff distance.}. We represent these time-varying sets on a common space--time domain \([0,T]\times\mathbb{R}^n\) by introducing the lifted sets \(\mathbb{T}=\bigcup_{t\in[0,T]}(\{t\}\times\mathbb{T}_t)\), \(\mathbb{C}_1=\bigcup_{t\in[0,T]}(\{t\}\times\mathbb{C}_{1,t})\), and \(\mathbb{C}_2=\bigcup_{t\in[0,T]}(\{t\}\times\mathbb{C}_{2,t})\), thereby removing explicit time indexing. By Lemma~2 in~\cite{c2} (whose proof appears in~\cite{c3}), each lifted set is a closed subset of \([0,T]\times\mathbb{R}^n\).
Boundedness of each lifted set follows from the fact that \(\mathbb{T}_t\), \(\mathbb{C}_{1,t}\), and \(\mathbb{C}_{2,t}\) are bounded subsets of \(\mathbb{R}^n\) for each \(t \in [0,T]\).
Hence, the lifted sets are compact subsets of \([0,T]\times\mathbb{R}^n\) and can be implicitly represented as the zero sublevel set of a bounded and Lipschitz-continuous function, such as the signed distance function\footnote{Let \(\mathbb{T}\subset[0,T]\times\mathbb{R}^n\) be compact, and let \(\|\cdot\|\) be any norm. The signed distance to \(\mathbb{T}\) is defined by \(g(t,x)=-\inf_{(s,y)\notin \mathbb{T}}\|(t,x)-(s,y)\|\) if \((t,x)\in\mathbb{T}\), and \(g(t,x)=\inf_{(s,y)\in \mathbb{T}}\|(t,x)-(s,y)\|\) if \((t,x)\notin\mathbb{T}\).}; i.e.,
\begin{equation}
\mathbb{T} := \left\{ (t,x) \in[0,T]\times\mathbb{R}^n\Big| g(t,x)\leq 0 \right\},
\label{eq:signed_distance}
\end{equation} where $g$ denotes the signed distance function to $\mathbb{T}$. The sets \(\mathbb{C}_1\) and \(\mathbb{C}_2\) are represented analogously, with signed distance functions \(c_1\) and \(c_2\), respectively.

% The target set \(\mathbb{T}_t\) encodes a liveness condition, corresponding to terminal states the system aims to reach. The constraint sets \(\mathbb{C}_{1,t}\) and \(\mathbb{C}_{2,t}\) represent hard and soft safety requirements, respectively: the hard constraint must always be satisfied due to its critical importance, whereas the soft constraint, reflecting a lower-priority safety objective, permits temporary violations.  
\subsection{Soft-Constrained Reach-Avoid Problem}
We consider a reach-avoid game in which the soft constraint may be violated for at most \(Q\!\in[0,T]\) time units. This budget \(Q\) defines how long a lower-priority constraint may be relaxed; as such, it enables a principled mechanism for resolving conflicts between competing safety objectives and supports systematic trade-off analysis. For instance, in scenarios where the constraints \(\mathbb{C}_{1,s}\) and \(\mathbb{C}_{2,s}\) are in conflict—e.g., when \(\mathbb{C}_{1,s} \cap \mathbb{C}_{2,s} = \emptyset\) for some \(s \in [0,T]\)—the standard reach-avoid formulation would yield a small or empty reach-avoid set. By distinguishing between hard and soft constraints, the soft-constrained formulation expands the set of verifiable initial conditions to include those from which the reach-avoid task remains achievable with minimal soft constraint violation.

The objective is to characterize the set of initial conditions \((t,x)\) from which the system, against any strategy of Player~\(\mathrm{B}\), is verified to:
(i) reach the target set \(\mathbb{T}_\tau\) at some time \(\tau \in [t,T]\); 
(ii) satisfy the hard constraint \(\mathbb{C}_{1,s}\) for all \(s \in [t,\tau]\); 
(iii) satisfy the soft constraint \(\mathbb{C}_{2,\tau}\) at time \(\tau\) while violating it for at most \(Q\) time units over \([t,\tau)\). Definition~\ref{definition_1} formally defines this set.
\begin{definition}\label{definition_1}
For a violation-time budget \(Q\!\in[0,T]\), the soft-constrained reach–avoid set is defined as   
\begin{align}
\widetilde{\mathcal{RA}}_{Q}&
:= \big\{(t, x) \in [0, T] \times \mathbb{R}^{n} \Big| 
\forall \delta \in \Delta_{t} ,\exists \bm{a} \in \mathcal{A}_{t} , \nonumber\\
&\hspace{3.5em}\exists \tau \in [t, T]:\;
\phi^{\bm{a}, \delta[\bm{a}]}_{t,x}(\tau) \in \mathbb{T}_{\tau} \cap \mathbb{C}_{2,\tau} , \nonumber\\
&\hspace{3.5em}\forall s \in [t, \tau]:\;
\phi^{\bm{a}, \delta[\bm{a}]}_{t,x}(s) \in \mathbb{C}_{1,s} , \nonumber\\
&\hspace{3.0em}\int_{t}^{\tau} \mathbf{1}_{\mathbb{C}_{2,s}^{C}}
\bigl(\phi^{\bm{a}, \delta[\bm{a}]}_{t,x}(s)\bigr) ds \le Q
\big\},
\label{eq:RA_soft}
\end{align}
where \(\mathbf{1}_{\mathbb{C}_{2,s}^{\mathrm{c}}}\!\big(\phi^{\bm a,\delta[\bm a]}_{t,x}(\cdot)\big)\) denotes the indicator function of \(\mathbb{C}_{2,s}^{\mathrm{c}}\) composed with the trajectory \(\phi^{\bm a,\delta[\bm a]}_{t,x}\); it equals \(1\) when \(\phi^{\bm a,\delta[\bm a]}_{t,x}(s)\notin \mathbb{C}_{2,s}\) and \(0\) otherwise.
\end{definition}
% \begin{remark}
% The budget \(Q\) bounds only the total duration of constraint violations on \([t,\tau]\); it does not preclude multiple boundary crossings of the soft-constraint set, nor does it require the violation to occur in a single contiguous interval.
% \end{remark}
\section{Limitation of the Standard HJ Reachability Framework}
\label{sec:original_formulation}
In this section we prove that for \(Q=0\), the soft-constrained reach–avoid problem in Definition~\ref{definition_1} reduces to the classical hard-constrained reach–avoid problem. The reach-avoid set for this classical problem is given by the canonical definition~\cite{c3}: 
\begin{align}
\mathcal{RA}& 
:= \big\{(t,x) \in [0,T] \times \mathbb{R}^{n} \ \Big| \!\!\
\forall \delta \in \Delta_{t},\ \exists \bm{a} \in \mathcal{A}_{t} , \nonumber\\
&\hspace{3.5em}\exists \tau \in [t, T]~\text{ s.t.}~ \phi^{\bm{a},\delta[\bm{a}]}_{t,x}(\tau) \in \mathbb{T}_{\tau} , \nonumber\\
&\hspace{3.5em}\forall s \in [t,\tau]:\; \phi^{\bm{a},\delta[\bm{a}]}_{t,x}(s) \in \mathbb{C}_{1,s} \cap \mathbb{C}_{2,s}
\big\}.
\label{eq:9}
\end{align}

The standard HJ reachability framework is tailored to compute~\eqref{eq:9}. To understand the limitation of this framework and set up a consistent extension, we first recall the standard HJ procedure for computing $\mathcal{RA}$ and formally establish its relationship to $\widetilde{\mathcal{RA}}_{0}$. Specifically, we introduce the classical value function that converts the zero-sum game from a “game of kind” into a “game of degree”.

\begin{proposition}\label{Prop_1}
Let \( V : [0, T] \times \mathbb{R}^n \to \mathbb{R} \) be defined by \begin{align}
V(t,x) = \sup_{\delta \in \Delta_t} \inf_{\bm{a} \in \mathcal{A}_t} \min_{\tau \in [t, T]} \max\Big\{ 
&\max_{s \in [t, \tau]} c_1\big(s, \phi_{t,x}^{\bm{a}, \delta[\bm{a}]}(s)\big), \nonumber\\
\max_{s \in [t, \tau]} c_2\big(s, \phi_{t,x}^{\bm{a}, \delta[\bm{a}]}(s)\big),\;
&g\big(\tau, \phi_{t,x}^{\bm{a}, \delta[\bm{a}]}(\tau)\big) \Big\}.
\label{eq:value_function}
\end{align}
\noindent Then \( V \) is Lipschitz continuous and the reach-avoid set~\( \mathcal{RA}\) coincides with its zero sublevel set: 
\begin{equation}
\mathcal{RA}
= \big\{ (t,x) \in [0,T] \times \mathbb{R}^n \big| \ V(t,x) \le 0 \big\}.
\label{eq:11}
\end{equation}
\end{proposition}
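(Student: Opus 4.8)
The plan is to establish the two assertions of Proposition~\ref{Prop_1} separately: first the Lipschitz continuity of $V$ in \eqref{eq:value_function}, and then the set identity \eqref{eq:11} between its zero sublevel set and $\mathcal{RA}$ in \eqref{eq:9}. For Lipschitz continuity I would begin at the level of trajectories. Since $f$ is bounded and Lipschitz in the state uniformly in $(s,\bm a,\bm b)$, a Gr\"onwall estimate gives, for each fixed $\bm a$ and each response $\delta[\bm a]$, a bound of the form $\|\phi^{\bm a,\delta[\bm a]}_{t,x}(s)-\phi^{\bm a,\delta[\bm a]}_{t',x'}(s)\|\le C\big(\|x-x'\|+|t-t'|\big)$, with $C$ depending only on the bound and Lipschitz constants of $f$ and on $T$, and \emph{uniform} over $\bm a\in\mathcal{A}_t$ and $\delta\in\Delta_t$ (the $|t-t'|$ term absorbs both the shifted initial time and the motion of $f$ over the extra interval). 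Composing with the signed-distance functions $c_1,c_2,g$, which are $1$-Lipschitz in $(s,x)$, makes each inner expression Lipschitz in $(t,x)$ with a common constant.

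I would then invoke the standard stability of Lipschitz functions under the operations in \eqref{eq:value_function}: a pointwise $\max$ or $\min$ of a family sharing a Lipschitz constant $L$ is again $L$-Lipschitz, and the same holds for $\sup_{\delta}$ and $\inf_{\bm a}$ over uniformly $L$-Lipschitz families. The only delicate point is that the inner $\min_{\tau\in[t,T]}$ and the strategy spaces $\mathcal{A}_t,\Delta_t$ have a $t$-dependent index set; I would handle this by lifting time into the state (appending $\dot{s}=1$ to make the horizon fixed and the dynamics autonomous), so that the varying interval folds into the already-established $|t-t'|$ estimate. This yields Lipschitz continuity of $V$.

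For the set identity, I would first record the dictionary implied by the signed-distance representation \eqref{eq:signed_distance}: writing $\phi=\phi^{\bm a,\delta[\bm a]}_{t,x}$ and
\[
J(t,x,\bm a,\delta,\tau):=\max\Big\{\max_{s\in[t,\tau]}c_1(s,\phi(s)),\ \max_{s\in[t,\tau]}c_2(s,\phi(s)),\ g(\tau,\phi(\tau))\Big\},
\]
the inequality $J\le 0$ holds if and only if $\phi(s)\in\mathbb{C}_{1,s}\cap\mathbb{C}_{2,s}$ for all $s\in[t,\tau]$ and $\phi(\tau)\in\mathbb{T}_\tau$, i.e.\ exactly the reach–avoid event at time $\tau$ in \eqref{eq:9}. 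Thus $V(t,x)=\sup_{\delta}\inf_{\bm a}\min_{\tau}J$, while $\mathcal{RA}$ is the set of $(t,x)$ for which, for every $\delta$, there exist $\bm a$ and $\tau$ with $J\le 0$. The inclusion $\mathcal{RA}\subseteq\{V\le 0\}$ is then immediate: for $(t,x)\in\mathcal{RA}$ the witnessing $\bm a,\tau$ give $\inf_{\bm a}\min_{\tau}J\le 0$ for each $\delta$, and taking $\sup_{\delta}$ preserves $V(t,x)\le 0$.

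The reverse inclusion $\{V\le 0\}\subseteq\mathcal{RA}$ is the substantive one, and I expect it to be the main obstacle. From $V(t,x)\le 0$ one only gets $\inf_{\bm a}\min_{\tau}J\le 0$ for every $\delta$, whereas membership in $\mathcal{RA}$ requires an \emph{actual} control $\bm a$ and time $\tau$ realizing $J\le 0$; the gap is precisely whether the infimum over the infinite-dimensional set $\mathcal{A}_t$ is attained. I would argue by contraposition: if $(t,x)\notin\mathcal{RA}$ there is a non-anticipative strategy $\delta^{\star}$ defeating every pair, so $J(t,x,\bm a,\delta^{\star},\tau)>0$ for all $\bm a,\tau$, and it remains to show $\inf_{\bm a}\min_{\tau}J(\cdot,\delta^{\star},\cdot)>0$ \emph{strictly}. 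The $\min$ over $\tau$ is harmless by compactness of $[t,T]$ and continuity of $J$ in $\tau$; the difficulty is strict positivity of the infimum over $\mathcal{A}_t$, which fails to follow from pointwise positivity unless a minimizer exists. I would secure attainment via the usual compactness machinery—weak-$*$ compactness of $\mathbb{A}$-valued controls in $L^{\infty}([t,T];\mathbb{R}^{m_a})$ together with Arzel\`a--Ascoli equicontinuity of the induced trajectories (uniform in $\bm a$ through the bound on $f$), plus lower semicontinuity of $\bm a\mapsto\min_{\tau}J$ along the limit—passing through relaxed (Young-measure) controls when $f$ is not affine in the control so that trajectories converge and the payoff is lower semicontinuous, or equivalently appealing to the existence-of-value and measurable-selection results that underpin the canonical characterization \eqref{eq:9} in \cite{c3,c1}. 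Once attainment (hence strict positivity of the infimum in the contrapositive) is in hand, the two inclusions combine to give \eqref{eq:11}.
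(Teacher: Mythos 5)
Your Lipschitz sketch and the inclusion \(\mathcal{RA}\subseteq\{V\le 0\}\) are fine, but note first that the paper does not prove Proposition~\ref{Prop_1} from scratch at all: it observes that the two constraint terms in \eqref{eq:value_function} merge into one, since \(\max\{\max_s c_1,\max_s c_2\}=\max_s \max(c_1,c_2)\), and that \(c(t,x):=\max\bigl(c_1(t,x),c_2(t,x)\bigr)\) is a Lipschitz level-set representation of \(\mathbb{C}_{1,t}\cap\mathbb{C}_{2,t}\); with this substitution \eqref{eq:value_function} is exactly the value function of Proposition~2 in \cite{c3}, from which both Lipschitz continuity and \eqref{eq:11} are inherited. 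Your proposal instead re-derives the cited result, which is legitimate in principle but is where the trouble lies.

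The genuine gap is in your treatment of the substantive inclusion \(\{V\le 0\}\subseteq\mathcal{RA}\). You correctly isolate the crux — for a fixed strategy \(\delta^{\star}\), pointwise positivity \(J(t,x,\bm a,\delta^{\star},\tau)>0\) for all \((\bm a,\tau)\) must be upgraded to \(\inf_{\bm a}\min_{\tau}J>0\), i.e.\ an attainment statement — but the compactness machinery you invoke cannot deliver it. The obstruction is \(\delta^{\star}\) itself: a non-anticipative strategy is an arbitrary map \(\mathcal{A}_t\to\mathcal{B}_t\), constrained only by \eqref{eq:nonanticipative}, with no continuity, measurability, or compactness properties as a map between function spaces. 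Along a minimizing sequence \(\bm a_n\rightharpoonup\bm a^{*}\) (weak-\(*\), or via Young measures), the responses \(\delta^{\star}[\bm a_n]\) converge after extraction to some \(\bm b^{*}\), and the trajectories converge to the one driven by \((\bm a^{*},\bm b^{*})\) — but nothing forces \(\bm b^{*}=\delta^{\star}[\bm a^{*}]\). Hence the limit trajectory is generally not of the form \(\phi^{\bm a^{*},\delta^{\star}[\bm a^{*}]}_{t,x}\), lower semicontinuity of \(\bm a\mapsto\min_{\tau}J(\bm a,\delta^{\star}[\bm a],\cdot)\) fails, and relaxed controls do not repair this: they address non-convexity of \(f\) in the controls, not the discontinuity of \(\delta^{\star}\). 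Your escape clause — appealing to "the existence-of-value and measurable-selection results that underpin the canonical characterization \eqref{eq:9} in \cite{c3,c1}" — is not a repair: \cite{c1} (Coddington--Levinson) supplies only ODE well-posedness, and the relevant statement in \cite{c3} is Proposition~2, i.e.\ precisely the assertion being proven, so invoking it collapses your argument into the paper's one-line citation. Either adopt the paper's reduction (merge the constraints via \(c=\max(c_1,c_2)\) and cite \cite{c3}), or the hard direction needs an argument of a fundamentally different kind than minimizing-sequence compactness for a fixed \(\delta^{\star}\).
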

\vspace{0.5em}

This result follows directly from Proposition~2 of~\cite{c3}, upon defining the implicit surface 
for the intersection $\mathbb{C}_{1,t} \cap \mathbb{C}_{2,t}$ via the signed distance function \(c(t,x) := \max\big(c_1(t,x),\,c_2(t,x)\big)\). 

Because \(g\), \(c_1\), and \(c_2\) are signed distance functions as defined in~\eqref{eq:signed_distance}, \(V(t,x)\) captures both target reachability and any constraint violations through pointwise min and max operations evaluated along the trajectory of the system. Specifically, \(V(t,x) \le 0\) certifies that Player~\(\mathrm{A}\) can drive the system from \((t,x)\) to the target while satisfying all constraints, whereas \(V(t,x) > 0\) indicates that Player~\(\mathrm{B}\) can prevent this outcome.
In standard HJ reachability theory, \(V\) can also be characterized analytically as the unique viscosity solution of the following Hamilton–Jacobi variational inequality (HJI-VI).

\begin{proposition}\label{thm1}
The value function $V$ in~\eqref{eq:value_function} is the unique viscosity solution of the HJI variational inequality:
\begin{equation}
\begin{aligned}
0 = \max\Bigl\{
\min\bigl[ g(t,x)-V(t,x),\; \tfrac{\partial V}{\partial t}
           +H\bigl(t,x,\tfrac{\partial V}{\partial x}\bigr) \bigr], \\
\max\bigl(c_1(t,x),\;c_2(t,x)\bigr) - V(t,x)
\Bigr\}, t\!\in\![0,T],~x\!\in\!\mathbb{R}^n.
\label{eq:12}
\end{aligned}
\end{equation}
\noindent where the Hamiltonian and terminal condition are given by
\begin{subequations}
\begin{align}
H\!\bigl(t,x,\tfrac{\partial V}{\partial x}\bigr)
  &= \min_{a\in\mathbb{A}}\max_{b\in\mathbb{B}}
     \frac{\partial V}{\partial x}\cdot f(t,x,a,b), \label{eq:12_a}\\[0.6ex]
V(T,x) &= \max\bigl\{c_1(T,x),\,c_2(T,x),\,g(T,x)\bigr\}.
\label{eq:12_b}
\end{align}
\end{subequations}
\end{proposition}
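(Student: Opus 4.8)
The plan is to characterize $V$ through the standard three-step programme for reach-avoid HJ analysis—first a dynamic programming principle (DPP), then verification of the viscosity sub- and supersolution inequalities, and finally uniqueness via a comparison principle—after collapsing the two constraint functions into one. Write $\phi:=\phi^{\bm a,\delta[\bm a]}_{t,x}$ for brevity and set $c(t,x):=\max\{c_1(t,x),c_2(t,x)\}$. Since $c$ is a maximum of two bounded Lipschitz signed-distance functions it is again bounded and Lipschitz, and because $\max\{\max_{s}c_1,\max_{s}c_2\}=\max_{s}c$, the value~\eqref{eq:value_function} collapses to the canonical single-constraint reach-avoid value obtained by replacing $\max\{\max_{s}c_1,\max_{s}c_2\}$ with $\max_{s}c$. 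This is exactly the object analyzed in~\cite{c3}, so the remaining work is to derive (or import) its HJI characterization and then re-expand $c=\max(c_1,c_2)$ to recover the stated form.

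The core step is the reach-avoid DPP: for every $(t,x)$ and every small $h>0$ with $t+h\le T$,
\[
\begin{gathered}
V(t,x)=\sup_{\delta}\inf_{\bm a}\min\Big\{\min_{\tau\in[t,t+h]}\max\big[\max_{s\in[t,\tau]}c,\,g(\tau)\big],\\
\max\big[\max_{s\in[t,t+h]}c,\,V\big(t+h,\phi(t+h)\big)\big]\Big\},
\end{gathered}
\]
where the outer $\min$ encodes the option either to reach the target within $[t,t+h]$ or to continue, carrying forward the maintained-constraint continuation value $V(t+h,\cdot)$. Granting the DPP, the two viscosity inequalities follow by the usual test-function argument: let $\psi\in C^1$ touch $V$ from above (subsolution) or below (supersolution) at $(t,x)$, substitute $\psi$ into the DPP, Taylor-expand along $\phi$, divide by $h$, and send $h\downarrow 0$. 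The obstacle term $c-V$ in~\eqref{eq:12} arises from the $\max_{[t,t+h]}c$ contribution (which forces $V\ge c$), the reach term $g-V$ from the freedom to stop encoded in $\min_\tau$, and $\partial_t V+H$ with the Isaacs Hamiltonian~\eqref{eq:12_a} from the continuation dynamics; their $\min$–$\max$ nesting reproduces~\eqref{eq:12}. Re-expanding $c=\max(c_1,c_2)$ yields precisely the $\max(c_1,c_2)-V$ term, and the terminal condition~\eqref{eq:12_b} is immediate because at $t=T$ the only admissible choice is $\tau=T$, so $V(T,x)=\max\{c(T,x),g(T,x)\}=\max\{c_1,c_2,g\}$.

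I expect the DPP to be the main obstacle. In the differential-game setting with non-anticipative strategies~\eqref{eq:nonanticipative}, proving the semigroup identity requires concatenating a strategy $\delta$ used on $[t,t+h]$ with a strategy chosen on $[t+h,T]$ as a measurable function of the realized state $\phi(t+h)$, while preserving non-anticipativity, and symmetrically splitting the control $\bm a$; establishing both the $\le$ and $\ge$ inequalities, together with the freedom in the stopping time $\tau$, is where the genuine effort lies. The Lipschitz regularity needed to make the test-function argument rigorous and to run the comparison argument is already furnished by Proposition~\ref{Prop_1}. For uniqueness I would invoke the comparison principle for this class of HJI variational inequalities: under the paper's standing assumptions on $f$ (boundedness, uniform continuity, Lipschitz in state), the Isaacs Hamiltonian satisfies the structural estimate $|H(t,x,p)-H(t,y,p)|\le C(1+\|p\|)\,\|x-y\|$ and is Lipschitz in $p$, while the obstacle and target terms are proper (monotone in $V$), so every subsolution is dominated by every supersolution and $V$ is the unique viscosity solution. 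Since all of these ingredients are already established for the single-constraint reach-avoid problem in~\cite{c3}, the most economical route is to apply that theorem verbatim to $c=\max(c_1,c_2)$ and expand.
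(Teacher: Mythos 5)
Your proposal is correct and follows essentially the same route as the paper: the paper's entire proof consists of defining $c(t,x):=\max\bigl(c_1(t,x),c_2(t,x)\bigr)$ as the implicit surface function for $\mathbb{C}_{1,t}\cap\mathbb{C}_{2,t}$ and invoking Theorem~1 of~\cite{c3} verbatim, which is precisely the ``most economical route'' you identify at the end. Your additional sketch of the dynamic programming principle, the test-function argument, and the comparison principle is a faithful reconstruction of what the cited theorem provides internally, but it is not needed beyond the reduction itself.
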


As in Proposition~\ref{Prop_1}, letting~\(\!c(t,x)\!:=\!\max\big(c_1(t,x),c_2(t,x)\big)\) 
allows Proposition~\ref{thm1} to follow directly from Theorem~1 of~\cite{c3}.

We next formalize the relation between the classical and soft-constrained reach–avoid sets for $\!Q\!=\!0$.
\begin{proposition}\label{prop5_soft}
Let $Q$$=$$0$.~The following equivalence holds:
\begin{equation}
\widetilde{\mathcal{RA}}_{0} = \mathcal{RA}.
\end{equation}
\end{proposition}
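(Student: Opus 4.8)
The plan is to establish the set equality by proving the two inclusions $\mathcal{RA}\subseteq\widetilde{\mathcal{RA}}_0$ and $\widetilde{\mathcal{RA}}_0\subseteq\mathcal{RA}$ separately. Throughout, I would fix an arbitrary non-anticipative strategy $\delta\in\Delta_t$ and compare, for a common trajectory $\phi^{\bm a,\delta[\bm a]}_{t,x}$ and terminal time $\tau$, the defining conditions of the two sets. Once $Q=0$ is substituted into Definition~\ref{definition_1}, the only structural difference is that $\mathcal{RA}$ in~\eqref{eq:9} demands $\phi(s)\in\mathbb{C}_{2,s}$ for \emph{every} $s\in[t,\tau]$, whereas $\widetilde{\mathcal{RA}}_0$ imposes the integral condition $\int_t^\tau \mathbf{1}_{\mathbb{C}_{2,s}^{C}}(\phi(s))\,ds\le 0$. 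The core of the argument is to show these two requirements coincide along continuous trajectories.

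The inclusion $\mathcal{RA}\subseteq\widetilde{\mathcal{RA}}_0$ is the straightforward direction. Given $(t,x)\in\mathcal{RA}$, for any $\delta$ the witness pair $(\bm a,\tau)$ satisfies $\phi(\tau)\in\mathbb{T}_\tau$ and $\phi(s)\in\mathbb{C}_{1,s}\cap\mathbb{C}_{2,s}$ for all $s\in[t,\tau]$. I would simply read off the three requirements of Definition~\ref{definition_1} for the \emph{same} $(\bm a,\tau)$: $\phi(\tau)\in\mathbb{T}_\tau\cap\mathbb{C}_{2,\tau}$ holds since $\phi(\tau)$ lies in both sets; the hard-constraint condition is immediate; and because $\phi(s)\in\mathbb{C}_{2,s}$ everywhere, the integrand $\mathbf{1}_{\mathbb{C}_{2,s}^{C}}(\phi(s))$ vanishes identically, so the integral equals $0\le Q$. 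Hence $(t,x)\in\widetilde{\mathcal{RA}}_0$.

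The reverse inclusion $\widetilde{\mathcal{RA}}_0\subseteq\mathcal{RA}$ carries the real content, and is where I expect the main obstacle. Given $(t,x)\in\widetilde{\mathcal{RA}}_0$ with witness $(\bm a,\tau)$, the goal is to upgrade the integral condition to a pointwise one. Since the integrand is nonnegative, $\int_t^\tau \mathbf{1}_{\mathbb{C}_{2,s}^{C}}(\phi(s))\,ds\le 0$ forces the violation set $N:=\{s\in[t,\tau]:\phi(s)\notin\mathbb{C}_{2,s}\}$ to have Lebesgue measure zero. To conclude $N=\emptyset$ I would invoke the topology of the lifted set: because $\mathbb{C}_2$ is closed (as established in Section~\ref{sec:constraint_sets}) and $s\mapsto(s,\phi(s))$ is continuous, the set $\{s:(s,\phi(s))\in\mathbb{C}_2\}$ is closed in $[t,\tau]$, so $N$ is relatively open; this continuity/closedness pairing also guarantees that the indicator is measurable, so the integral in Definition~\ref{definition_1} is well defined. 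A nonempty relatively open subset of a nondegenerate interval contains a subinterval and therefore has positive measure, which contradicts $\mathrm{meas}(N)=0$; hence $N=\emptyset$, i.e. $\phi(s)\in\mathbb{C}_{2,s}$ for all $s\in[t,\tau]$. Combined with $\phi(s)\in\mathbb{C}_{1,s}$ this yields $\phi(s)\in\mathbb{C}_{1,s}\cap\mathbb{C}_{2,s}$ throughout, while $\phi(\tau)\in\mathbb{T}_\tau$ is inherited from the target condition, so $(t,x)\in\mathcal{RA}$.

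The one delicate point to handle carefully is the degenerate case $t=\tau$, where $[t,\tau]$ is a single point and the ``nonempty open set has positive measure'' argument does not apply. Here I would observe that the first condition of Definition~\ref{definition_1} already forces $\phi(\tau)\in\mathbb{C}_{2,\tau}$, so $N=\emptyset$ directly. More generally, the same observation shows the endpoint $\tau$ never belongs to $N$, so the relatively open set $N$ is in fact contained in $[t,\tau)$; this is exactly what makes the measure-zero argument clean and rules out the boundary pathology. With this edge case dispatched, the remainder of the proof is bookkeeping against the two definitions, and the two inclusions together give $\widetilde{\mathcal{RA}}_0=\mathcal{RA}$.
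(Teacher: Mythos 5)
Your proposal is correct and follows essentially the same route as the paper: both inclusions are proved the same way, and the key step---upgrading the measure-zero violation set to an empty one---is the paper's argument in set-theoretic form (the paper phrases it as ``$c_2(s,\phi(s))\le 0$ a.e.\ plus continuity implies $\le 0$ everywhere,'' which is the same continuity-plus-positive-measure-of-open-sets fact you invoke via closedness of the lifted set $\mathbb{C}_2$). Your explicit treatment of the degenerate case $t=\tau$ is a small refinement the paper glosses over, but it does not change the substance of the argument.
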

% \begin{proof}
% See Appendix~\textcolor{blue}{\hyperref[apx:prop2]{A}}.
% \end{proof}
The proof follows from the continuity of the trajectory \(s \mapsto \phi^{\bm{\alpha}, \delta[\bm{\alpha}]}_{t,x}(s)\) and the measurability of the map \(s \mapsto \mathbf{1}_{\mathbb{C}_{2,s}^{\mathrm c}}\big(\phi^{\bm{\alpha}, \delta[\bm{\alpha}]}_{t,x}(s)\big)\), and is presented in Appendix~\textcolor{blue}{\hyperref[apx:prop2]{A}}.

Proposition~\ref{prop5_soft} shows that, for $Q=0$, the soft-constrained reach--avoid set exactly recovers the classical set computed by the standard HJ reachability framework. This consistency with the established theory also highlights its limitation: 
for $Q>0$, the standard formulation fails because it enforces all constraints as hard, motivating the extended framework developed in the next section.
\section{A Reachability Framework for Soft-Constrained Reach-Avoid Problems}
This section focuses on the characterization of the soft-constrained reach-avoid set defined in~\eqref{eq:RA_soft} using a new HJ reachability-based framework that generalizes the standard approach introduced in Section~\ref{sec:original_formulation}.

As a first step, we transform the ``game of kind'' in~\eqref{eq:RA_soft} into a ``game of degree'' by introducing an appropriate value function. To facilitate this transformation, we introduce a new set, \(\mathcal{S}\), that augments the state-time space with the budget variable \(Q\). The value function we construct on this augmented space enables the computation of \(\widetilde{\mathcal{RA}}_{Q}\) for all budget levels (i.e., for any \(Q \in [0, T]\)) in a single computation.

Let \(F:[0,T]\to \mathcal{P}\!\big([0,T]\times\mathbb{R}^{n}\big)\)\footnote{%
\(\mathcal{P}([0,T]\times\mathbb{R}^{n})\) denotes the power set of \([0,T]\times\mathbb{R}^{n}\), i.e., the set of all subsets of \([0,T]\times\mathbb{R}^{n}\).} 
be the set-valued map defined by
\begin{equation}\label{eq:Fdef}
F(Q):=\widetilde{\mathcal{RA}}_{Q},
\end{equation}
whose graph is
\begin{equation}\label{eq:gphF}
\mathcal{S}
:= \{ (t,x,Q)\in[0,T]\times\mathbb{R}^{n}\times[0,T] : (t,x)\in F(Q) \}.
\end{equation}
\noindent Next, define the value function\;\(W:\mathcal{S}\!\to\!\mathbb{R}\)\;on the graph\;\(\mathcal{S}\) by
\begin{align}
\hspace*{0.05em} W(t,x,Q) \mathrel{:=}\!\! &\;
\sup_{\delta \in \Delta_{t}} \inf_{a \in \mathcal{A}_{t}} 
\min_{\tau \in [t, T]} \max\Big\{\!
\max_{s \in [t,\tau]} c_1\bigl(s, \phi^{\bm{a}, \delta[\bm{a}]}_{t,x}(s)\bigr), \nonumber\\
& \hspace*{0.5em}
c_2\bigl(\tau, \phi^{\bm{a}, \delta[\bm{a}]}_{t,x}(\tau)\bigr),
\int_t^\tau \!\!\!\mathord{\mathbf{1}}_{\mathbb{C}_{2,s}^{C}}
\bigl(\phi^{\bm{a}, \delta[\bm{a}]}_{t,x}(s)\bigr)\,ds - Q,\;\nonumber\\
& \hspace*{0.5em}g\bigl(\tau, \phi^{\bm{a}, \delta[\bm{a}]}_{t,x}(\tau)\bigr)
\Big\}.
\label{eq:value_function_adjusted}
\end{align}

The value function \(W\), referred to as the soft-constrained value function, retains all the terms present in the classical value function \(V\) from~\eqref{eq:value_function}, except for the term \(\max\limits_{s \in [t, \tau]} c_2\big(s, \phi^{\bm{a}, \delta[\bm{a}]}_{t,x}(s)\big)\). This term is replaced with the expression \(\max\Bigl\{ c_2\big(\tau, \phi^{\bm{a}, \delta[\bm{a}]}_{t,x}(\tau)\big),\ \int_t^\tau\mathbf{1}_{\mathbb{C}_{2,s}^{C}}\big( \phi^{\bm{a}, \delta[\bm{a}]}_{t,x}(s)\big)\, ds - Q \Bigr\}.
\) This modification introduces a relaxation that allows trajectories to violate the soft constraint without incurring penalties for the duration of the violation-time budget \(Q\). The term \(\int_t^\tau \mathbf{1}_{\mathbb{C}_{2,s}^{C}}\big( \phi^{\bm{a}, \delta[\bm{a}]}_{t,x}(s)\big)\, ds - Q\) mathematically formalizes this relaxation. The function \(\mathbf{1}_{\mathbb{C}_{2,s}^{C}}\big(\phi^{\bm{a}, \delta[\bm{a}]}_{t,x}(s)\big)\) is equal to 1 when the trajectory is outside  \(\mathbb{C}_{2,s}\) at time \(s\), and 0 otherwise; therefore, the integral quantifies the total time the trajectory violates the soft constraint. If this quantity is less than or equal to $Q$ (that is, within the allocated budget), then such violations would not cause $W$ to be positive. However, the term \(c_2\big(\tau,\phi^{\bm{a},\delta[\bm{a}]}_{t,x}(\tau)\big)\) ensures that, although violations of the soft constraint are permitted over \([t,\tau)\), the constraint must be satisfied at the terminal time \(\tau\).

The graph \(\mathcal{S}\) admits a geometric characterization as the zero sublevel set of the soft-constrained value function \(W\). Consequently, for any fixed \(Q \in [0,T]\), \(\widetilde{\mathcal{RA}}_{Q}\) is recovered as the zero sublevel set of the function \((t,x) \mapsto W(t,x,Q)\).

\begin{proposition}\label{prop:RA_{Q} characterization} 
\(\mathcal{S} \subseteq [0,T]\times\mathbb{R}^{n}\times[0,T]\) corresponds to the zero sublevel set of the soft-constrained value function \(W\). That is, 
\begin{equation}
\mathcal{S} = \big\{\, (t,x,Q) \in [0,T]\times \mathbb{R}^{n}\times[0,T] \;\big|\; W(t, x, Q) \leq 0 \,\big\}.
\end{equation}
As a result, for each \( Q \in [0, T] \), the soft-constrained reach–avoid set \(\widetilde{\mathcal{RA}}_{Q}\) is given by
\begin{equation}
\hspace{-2.6pt}\widetilde{\mathcal{RA}}_{Q}=\big\{ (t, x)\in[0, T]\times\mathbb{R}^{n}\big| W(t, x, Q)\le0 \big\}.
\label{eq:15}
\end{equation}
\end{proposition}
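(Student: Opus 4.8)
The plan is to prove the set identity through the pointwise equivalence
\[
(t,x)\in\widetilde{\mathcal{RA}}_{Q}\;\Longleftrightarrow\;W(t,x,Q)\le 0,
\qquad\forall\,(t,x,Q).
\]
Since by \eqref{eq:gphF} the graph $\mathcal{S}$ is exactly the set of triples $(t,x,Q)$ with $(t,x)\in\widetilde{\mathcal{RA}}_{Q}$, this equivalence yields $\mathcal{S}=\{W\le0\}$ immediately, and restricting to a fixed $Q$ reads off the slice identity \eqref{eq:15}. The first step builds a dictionary between the signed-distance terms of $W$ and the set-membership clauses of Definition~\ref{definition_1}. Abbreviating $\phi:=\phi^{\bm a,\delta[\bm a]}_{t,x}$ and writing $J(\bm a,\delta,\tau)$ for the inner $\max\{\cdots\}$ in \eqref{eq:value_function_adjusted}, the fact that $g$, $c_1$, $c_2$ are signed distance functions as in \eqref{eq:signed_distance} gives $g(\tau,\phi(\tau))\le0\Leftrightarrow\phi(\tau)\in\mathbb{T}_{\tau}$, $\max_{s\in[t,\tau]}c_1(s,\phi(s))\le0\Leftrightarrow\phi(s)\in\mathbb{C}_{1,s}$ for all $s\in[t,\tau]$, and $c_2(\tau,\phi(\tau))\le0\Leftrightarrow\phi(\tau)\in\mathbb{C}_{2,\tau}$, while trivially $\int_t^\tau\mathbf{1}_{\mathbb{C}_{2,s}^{C}}(\phi(s))\,ds-Q\le0\Leftrightarrow\int_t^\tau\mathbf{1}_{\mathbb{C}_{2,s}^{C}}(\phi(s))\,ds\le Q$. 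Consequently $J(\bm a,\delta,\tau)\le0$ holds if and only if the trajectory $\phi$ satisfies, for that particular $(\bm a,\delta[\bm a],\tau)$, all four clauses of Definition~\ref{definition_1} simultaneously.

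For the forward inclusion, suppose $(t,x)\in\widetilde{\mathcal{RA}}_{Q}$ and fix an arbitrary $\delta\in\Delta_t$. Definition~\ref{definition_1} supplies $\bm a\in\mathcal{A}_t$ and $\tau\in[t,T]$ for which all four clauses hold, so by the dictionary $J(\bm a,\delta,\tau)\le0$; hence $\min_{\tau}J\le0$ and therefore $\inf_{\bm a}\min_{\tau}J\le0$. As this bound holds for every $\delta$, taking the supremum over $\delta$ preserves it and gives $W(t,x,Q)\le0$. This direction is pure quantifier bookkeeping and presents no difficulty.

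The converse $W(t,x,Q)\le0\Rightarrow(t,x)\in\widetilde{\mathcal{RA}}_{Q}$ is where the genuine work lies and is the step I would write out in full. From $\sup_\delta\inf_{\bm a}\min_\tau J\le0$ we obtain $\inf_{\bm a}\min_\tau J\le0$ for each fixed $\delta$, but Definition~\ref{definition_1} demands an honest witness $\bm a\in\mathcal{A}_t$ and $\tau\in[t,T]$ attaining $J\le0$, whereas an infimum at or below zero need not be attained. I therefore expect the main obstacle to be an attainment argument: for each $\delta$, the map $\bm a\mapsto\min_{\tau\in[t,T]}J(\bm a,\delta,\tau)$ attains its infimum. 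This is precisely the technical core behind the classical characterization (Proposition~\ref{Prop_1}, via Proposition~2 of~\cite{c3}), and my plan is to reuse that machinery: boundedness, uniform continuity, and state-Lipschitz continuity of $f$ together with compactness of the value set $\mathbb{A}$ yield, through an Arzel\`a--Ascoli / relaxed-control compactness argument, a minimizing $\bm a$ whose induced trajectory is a uniform limit of the minimizing sequence, and continuity of $g$, $c_1$, $c_2$ makes the pointwise-max terms continuous along such limits.

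The one genuinely new ingredient relative to $V$ is the running integral $\int_t^\tau\mathbf{1}_{\mathbb{C}_{2,s}^{C}}(\phi(s))\,ds$, and I would treat it separately. Since $\mathbb{C}_2$ is closed, each slice $\mathbb{C}_{2,s}$ is closed, so $\mathbb{C}_{2,s}^{C}$ is open and its indicator is lower semicontinuous; hence along trajectories $\phi_n\to\phi$ uniformly one has $\liminf_n\mathbf{1}_{\mathbb{C}_{2,s}^{C}}(\phi_n(s))\ge\mathbf{1}_{\mathbb{C}_{2,s}^{C}}(\phi(s))$ pointwise, and Fatou's lemma upgrades this to lower semicontinuity of the integral in $\bm a$. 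Thus $\bm a\mapsto\min_\tau J$ is lower semicontinuous, so the infimum is attained on the compact control family; the minimum over $\tau$ on the compact interval $[t,T]$ is attained likewise, producing the required $(\bm a,\tau)$ with $J(\bm a,\delta,\tau)\le0$, which the first-step dictionary converts back into the four clauses of Definition~\ref{definition_1}. Combining both inclusions gives the pointwise equivalence, hence $\mathcal{S}=\{W\le0\}$, and slicing at fixed $Q$ yields \eqref{eq:15}.
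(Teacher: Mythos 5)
Your forward inclusion is correct and is, in substance, the same as step (i) of the paper's proof: the paper phrases it as a contradiction, but the content---the dictionary between signed-distance inequalities and the clauses of Definition~\ref{definition_1}, followed by quantifier bookkeeping---is identical. The genuine gap is in your converse direction, exactly where you yourself located the work. Your bridge from $\inf_{\bm a}\min_\tau J\le 0$ to an honest witness is a Weierstrass-type attainment argument, and that argument is not available here. The obstruction is the strategy: an element $\delta\in\Delta_t$ is an \emph{arbitrary} non-anticipative map from $\mathcal{A}_t$ to $\mathcal{B}_t$, with no continuity or measurability in $\bm a$ whatsoever (its value $\delta[\bm a](s)$ may depend on $\bm a|_{[t,s]}$ in a wildly discontinuous way). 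Consequently $\bm a\mapsto\phi^{\bm a,\delta[\bm a]}_{t,x}$ is not continuous in any topology in which $\mathcal{A}_t$ is compact. Along a minimizing sequence $\bm a_n$, Arzel\`a--Ascoli does give a uniformly convergent subsequence of the equi-Lipschitz trajectories $\phi^{\bm a_n,\delta[\bm a_n]}_{t,x}$, but the limit curve need not be of the form $\phi^{\bm a,\delta[\bm a]}_{t,x}$ for \emph{any} admissible $\bm a$ paired through this particular $\delta$, so your LSC/Fatou step has nothing to latch onto. Even with the adversary removed, attainment of an infimum over ordinary measurable controls requires convexity of the velocity sets (Filippov-type hypotheses) that the paper never assumes, and a relaxed-control minimizer cannot serve as the witness $\bm a\in\mathcal{A}_t$ demanded by Definition~\ref{definition_1}. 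For the same reason, your appeal to the machinery behind Proposition~2 of~\cite{c3} mischaracterizes it: that proof does not establish attainment either; it avoids needing it.

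The paper's proof shows attainment is unnecessary. It argues the converse by contradiction: if $(t,x,Q)\notin\mathcal{S}$, there is a spoiling strategy $\tilde\delta\in\Delta_t$ under which \emph{every} pair $(\bm a,\tau)$ strictly violates at least one clause, so the payoff (your $J$, the paper's $C$) is strictly positive for all $(\bm a,\tau)$, and the paper asserts a uniform lower bound $\eta>0$; on the other hand, $W(t,x,Q)\le 0$ yields, for every $\epsilon>0$, an \emph{approximate} minimizer $(\tilde{\bm a},\tilde\tau)$ with payoff at most $\epsilon$, and choosing $\epsilon=\eta/2$, $\delta=\tilde\delta$ gives the contradiction. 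Approximate minimizers exist by the definition of the infimum---no compactness of $\mathcal{A}_t$ or continuity of the control-to-trajectory map is ever invoked. To repair your write-up, replace the attainment step by this scheme. Note that the one analytic fact the contradiction route still requires is the uniform positivity constant $\eta$ (a step the paper states without justification); that fact is strictly weaker than attainment---attainment plus pointwise positivity would imply it, not conversely---and it, rather than existence of minimizers, is where a fully careful treatment of this proposition should concentrate.
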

The proof, detailed in Appendix~\textcolor{blue}{\hyperref[apx:prop4]{B}}, follows from the definitions of \(\mathcal S\) in~\eqref{eq:gphF} and \(W\) in~\eqref{eq:value_function_adjusted}. 
With the characterization of \(\widetilde{\mathcal{RA}}_{Q}\) in \eqref{eq:15}, we proceed to examine its structural properties.

\begin{proposition}\label{prop:monotonicity} Let \(0 \leq Q_1 < Q_2 \leq T\). Then, 
\begin{equation}
\widetilde{\mathcal{RA}}_{Q_1} \subseteq \widetilde{\mathcal{RA}}_{Q_2}.
\end{equation}

\end{proposition}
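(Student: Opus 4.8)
The plan is to prove the monotonicity claim $\widetilde{\mathcal{RA}}_{Q_1} \subseteq \widetilde{\mathcal{RA}}_{Q_2}$ for $Q_1 < Q_2$ by showing the pointwise monotonicity of the soft-constrained value function $W$ in its budget argument, and then invoking the sublevel-set characterization of Proposition~\ref{prop:RA_{Q} characterization}. The crucial observation is that the budget $Q$ enters the value function~\eqref{eq:value_function_adjusted} in exactly one place: inside the max as the term $\int_t^\tau \mathbf{1}_{\mathbb{C}_{2,s}^{C}}(\phi^{\bm{a}, \delta[\bm{a}]}_{t,x}(s))\,ds - Q$. Since this term is nonincreasing in $Q$, and since the outer $\max$, inner $\min$, $\inf$, and $\sup$ are all monotone (order-preserving) operations in their arguments, increasing $Q$ can only decrease (or leave unchanged) the value $W(t,x,Q)$.

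Concretely, I would first fix an arbitrary state $(t,x)$ and an arbitrary choice of $\delta \in \Delta_t$, $\bm{a} \in \mathcal{A}_t$, and $\tau \in [t,T]$, so that all trajectory-dependent terms $c_1(\cdot)$, $c_2(\cdot)$, $g(\cdot)$, and the violation integral $\int_t^\tau \mathbf{1}_{\mathbb{C}_{2,s}^{C}}(\phi)\,ds$ are held fixed. For these frozen quantities, replacing $Q_1$ by $Q_2 > Q_1$ strictly decreases the subtracted budget term, hence the inner $\max\{\cdots\}$ satisfies
\begin{equation}
\max\Bigl\{ A,\ B,\ I - Q_2,\ G \Bigr\} \le \max\Bigl\{ A,\ B,\ I - Q_1,\ G \Bigr\},
\end{equation}
where $A = \max_{s}c_1$, $B = c_2(\tau,\cdot)$, $I = \int_t^\tau \mathbf{1}_{\mathbb{C}_{2,s}^{C}}(\phi)\,ds$, and $G = g(\tau,\cdot)$. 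This inequality holds for every admissible triple $(\delta, \bm{a}, \tau)$.

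The second step is to propagate this pointwise inequality through the optimization operators. Because the inequality holds for each fixed $\tau$, taking $\min_{\tau\in[t,T]}$ on both sides preserves it; similarly $\inf_{\bm{a}\in\mathcal{A}_t}$ and then $\sup_{\delta\in\Delta_t}$ preserve the inequality, since if $h_2 \le h_1$ pointwise then $\sup h_2 \le \sup h_1$ and $\inf h_2 \le \inf h_1$. This yields $W(t,x,Q_2) \le W(t,x,Q_1)$ for all $(t,x)$. Finally, if $(t,x)\in\widetilde{\mathcal{RA}}_{Q_1}$, then by~\eqref{eq:15} we have $W(t,x,Q_1)\le 0$, so $W(t,x,Q_2)\le W(t,x,Q_1)\le 0$, and again by~\eqref{eq:15} we conclude $(t,x)\in\widetilde{\mathcal{RA}}_{Q_2}$, establishing the desired inclusion.

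I do not anticipate a genuine obstacle here; the proof is essentially a monotonicity argument and is routine once framed correctly. The only point requiring mild care is the order in which the operators are applied and the direction of each inequality under $\inf$ versus $\sup$: one must verify that both $\inf$ and $\sup$ are monotone in the \emph{same} direction (both preserve $\le$ between two functions ordered pointwise), which they are, so the nested structure $\sup_\delta \inf_{\bm a} \min_\tau$ poses no sign-flip difficulty. An alternative, equally clean route would be a direct set-membership argument on Definition~\ref{definition_1}: any admissible witness $(\bm{a},\tau)$ for $(t,x)$ under budget $Q_1$ satisfies $\int_t^\tau \mathbf{1}_{\mathbb{C}_{2,s}^{C}}(\phi)\,ds \le Q_1 \le Q_2$ and hence is also admissible under budget $Q_2$, immediately giving the inclusion; I would likely mention this as the conceptual reason behind the value-function inequality.
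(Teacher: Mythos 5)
Your proof is correct and follows essentially the same route as the paper's: the paper likewise observes that the cost functional $C(t,x,Q,\tau,\delta,\bm{a})$ (your $\max\{A,B,I-Q,G\}$) is nonincreasing in $Q$, propagates this through the $\sup_{\delta}\inf_{\bm{a}}\min_{\tau}$ structure to get $W(t,x,Q_2)\le W(t,x,Q_1)$, and then invokes the sublevel-set characterization of Proposition~\ref{prop:RA_{Q} characterization} to conclude the inclusion. No gaps.
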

The proof (see Appendix~\textcolor{blue}{\hyperref[apx:prop5]{C}}) uses the observation that feasibility of the soft-constrained reach–avoid problem in Definition~\ref{definition_1} is preserved under budget enlargement: if \((t,x)\in\widetilde{\mathcal{RA}}_{Q_1}\) for some \(0\le Q_1< Q\), then \((t,x)\in\widetilde{\mathcal{RA}}_{Q}\). Hence the family \(\{\widetilde{\mathcal{RA}}_{Q}\}_{Q\ge 0}\) is monotone nondecreasing in \(Q\). Intuitively, larger \(Q\) yields a larger set of initial conditions from which admissible trajectories can originate, expanding \(\widetilde{\mathcal{RA}}_{Q}\).

Beyond monotonicity, it is also useful to quantify the minimum violation-time budget that, from a given initial condition, enables Player~\(\mathrm{A}\)'s success. The following theorem formalizes this concept through the function \(Q_{\min}\!:\![0, T]\!\times\!\mathbb{R}^n\!\rightarrow\![0, T]\!\cup\!\{+\infty\}\) defined in~\eqref{eq:Q_min_definition}. It also introduces a set-based decomposition of the level sets of \(Q_{\min}\) via \(\widetilde{\mathcal{RA}}_{(t_{1},t_{2}]}\) in~\eqref{eq:Q_min_levelset}. For any \(t_{1}, t_{2} \in [0,T]\) satisfying \(t_{1} < t_{2}\), the set \(\widetilde{\mathcal{RA}}_{(t_{1}, t_{2}]}\) consists of all state--time pairs \((t, x)\) whose minimum required violation-time budget \(Q_{\min}(t, x)\) lies in \((t_{1}, t_{2}]\). This set isolates the state–time pairs \((t,x)\) for which any budget at most \(t_{1}\) is insufficient, which is relevant because some initial conditions may require a strictly positive violation-time budget.
\begin{equation}
Q_{\min}(t, x) \!:= \!
\left\{\!
\begin{array}{ll}
\!+\infty, \text{if } \forall Q \in [0, T],\; W(t, x, Q) > 0, \\[3pt]
\!\min\{ Q \mid W(t, x, Q) \leq 0 \}, \text{otherwise}.
\end{array}
\right.
\label{eq:Q_min_definition}
\end{equation}

\begin{equation}
\widetilde{\mathcal{RA}}_{(t_{1}, t_{2}]} := \{ (t, x) : Q_{\min}(t, x) \in (t_{1}, t_{2}] \}.
\label{eq:Q_min_levelset}
\end{equation}

\begin{theorem}\label{prop6}
The function \(Q_{\min}\) and the set \(\widetilde{\mathcal{RA}}_{(t_1,t_2]}\) satisfy the following for any \(0 \leq t_{1} < t_{2} \leq T\):
\begin{enumerate}[label=\arabic*)]
    \item The function \( Q_{\min} : [0, T] \times \mathbb{R}^n \to [0, T] \cup \{+\infty\} \) is well-defined, in the sense that for every \((t, x) \in [0, T] \times \mathbb{R}^n\) either the set \(\{ Q \in [0, T] : W(t, x, Q) \le 0 \}\) is empty (in which case \(Q_{\min}(t, x) = +\infty\)), or it is nonempty and satisfies \(0 \le Q_{\min}(t, x) \le T\).
    \item The set \( \widetilde{\mathcal{RA}}_{(t_{1}, t_{2}]} \) can be expressed as 
    \[
    \widetilde{\mathcal{RA}}_{(t_{1}, t_{2}]} =
    \widetilde{\mathcal{RA}}_{t_{1}}^{C} \cap \widetilde{\mathcal{RA}}_{t_{2}}.
    \]
    \item As \(t_{1}\) approaches \(t_{2}\) from the left,
    \begin{align}
    \lim_{t_{1}\to t_{2}^-} \widetilde{\mathcal{RA}}_{(t_{1}, t_{2}]} 
    &= \{(t,x): Q_{\min}(t,x) = t_{2}\} \nonumber\\
    &= \widetilde{\mathcal{RA}}_{t_{2}} \cap
      \Bigl(\!\!\bigcap_{\alpha < t_{2}}
      \widetilde{\mathcal{RA}}_{\alpha}^{C} \Bigr).
    \end{align}
\end{enumerate}
\end{theorem}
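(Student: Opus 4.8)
The plan is to reduce all three claims to two elementary properties of the scalar section \(Q \mapsto W(t,x,Q)\) at fixed \((t,x)\): that it is nonincreasing and Lipschitz continuous in \(Q\). The budget enters the definition \eqref{eq:value_function_adjusted} only through the single term \(\int_t^\tau \mathbf 1_{\mathbb{C}_{2,s}^{C}}(\cdot)\,ds - Q\), which is \(1\)-Lipschitz and strictly decreasing in \(Q\); every other term in the inner maximand is independent of \(Q\). Hence, for each fixed \(\delta,\bm a,\tau\), the inner maximand is a maximum of \(Q\)-independent terms and one \(1\)-Lipschitz nonincreasing term, so it is itself \(1\)-Lipschitz and nonincreasing in \(Q\). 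Both properties survive the outer \(\sup_{\delta}\inf_{\bm a}\min_{\tau}\) (Lipschitz constants and monotonicity pass through arbitrary suprema/infima and through the minimum over the compact set \([t,T]\)), using boundedness of \(g,c_1,c_2\) and of the integral to keep the value finite. I would therefore first establish that \(Q\mapsto W(t,x,Q)\) is nonincreasing and \(1\)-Lipschitz, which recovers and sharpens the set monotonicity of Proposition~\ref{prop:monotonicity}.

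From this, Part~1 follows. Set \(S(t,x):=\{Q\in[0,T]: W(t,x,Q)\le 0\}\). Continuity makes \(S(t,x)\) closed in \([0,T]\) as the preimage of \((-\infty,0]\), and monotonicity makes it upward closed; hence \(S(t,x)\) is either empty, giving \(Q_{\min}(t,x)=+\infty\), or a closed interval \([q^\star,T]\), so its infimum is attained and \(Q_{\min}(t,x)=q^\star\in[0,T]\). This is exactly the asserted well-definedness. I would then record the membership dictionary that drives the remaining parts: by \eqref{eq:15}, for \(Q\in[0,T]\) we have \((t,x)\in\widetilde{\mathcal{RA}}_{Q}\iff W(t,x,Q)\le 0\iff Q\ge Q_{\min}(t,x)\), with the convention that the last inequality fails for every such \(Q\) when \(Q_{\min}(t,x)=+\infty\).

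Part~2 is then pure set algebra with this dictionary: \((t,x)\in\widetilde{\mathcal{RA}}_{t_2}\iff Q_{\min}(t,x)\le t_2\) and \((t,x)\in\widetilde{\mathcal{RA}}_{t_1}^{C}\iff Q_{\min}(t,x)> t_1\), so intersecting gives \(Q_{\min}(t,x)\in(t_1,t_2]\), i.e. \(\widetilde{\mathcal{RA}}_{t_1}^{C}\cap\widetilde{\mathcal{RA}}_{t_2}=\widetilde{\mathcal{RA}}_{(t_1,t_2]}\) by \eqref{eq:Q_min_levelset}. For Part~3, I would first note that \(t_1\mapsto\widetilde{\mathcal{RA}}_{(t_1,t_2]}\) is nested decreasing as \(t_1\uparrow t_2\), since the interval \((t_1,t_2]\) shrinks; therefore the set limit is the monotone intersection \(\bigcap_{t_1<t_2}\widetilde{\mathcal{RA}}_{(t_1,t_2]}=\{(t,x):Q_{\min}(t,x)\in\bigcap_{t_1<t_2}(t_1,t_2]\}\). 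Using the elementary identity \(\bigcap_{t_1<t_2}(t_1,t_2]=\{t_2\}\), this equals \(\{(t,x):Q_{\min}(t,x)=t_2\}\), giving the first equality. Applying the dictionary once more, \(\{Q_{\min}\le t_2\}=\widetilde{\mathcal{RA}}_{t_2}\) while \(\{Q_{\min}\ge t_2\}=\bigcap_{\alpha<t_2}\{Q_{\min}>\alpha\}=\bigcap_{\alpha<t_2}\widetilde{\mathcal{RA}}_{\alpha}^{C}\), and intersecting these two yields the second equality.

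The main obstacle is the single analytic input in Part~1, namely continuity of \(Q\mapsto W(t,x,Q)\): this is what guarantees that \(S(t,x)\) is closed and hence that the minimum defining \(Q_{\min}\) is actually attained rather than merely an infimum. Everything after that is bookkeeping with the membership dictionary and the identity \(\bigcap_{t_1<t_2}(t_1,t_2]=\{t_2\}\); the only point needing minor care there is justifying that the set limit equals the intersection, which is immediate once the monotonicity of the family in \(t_1\) has been observed.
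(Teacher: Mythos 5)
Your proposal is correct and follows essentially the same route as the paper's proof: continuity of \(Q\mapsto W(t,x,Q)\) makes \(\{Q\in[0,T]: W(t,x,Q)\le 0\}\) closed so the minimum is attained (or the set is empty), Part~2 follows from the sublevel-set characterization in Proposition~\ref{prop:RA_{Q} characterization} together with monotonicity in \(Q\), and Part~3 is the same monotone-intersection argument. The one point where you go beyond the paper is that you actually \emph{prove} the \(1\)-Lipschitz continuity and monotonicity of \(Q\mapsto W(t,x,Q)\) from the fact that \(Q\) enters only through the single term \(\int_t^\tau \mathbf{1}_{\mathbb{C}_{2,s}^{C}}\,ds - Q\), whereas the paper simply asserts continuity in \(Q\) (and cites Proposition~\ref{prop:monotonicity} for monotonicity); this added justification is worthwhile, since \(W\) is discontinuous as a function of \((t,x,z)\) jointly and the continuity in the budget variable alone is exactly the nontrivial analytic input.
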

\begin{proof}
See Appendix~\textcolor{blue}{\hyperref[apx:prop6]{D}}.
\end{proof}

Theorem~\ref{prop6} provides a simple representation of the set \(\widetilde{\mathcal{RA}}_{(t_1, t_2]}\), obtained via a set operation between \(\widetilde{\mathcal{RA}}_{t_{1}}\) and \(\widetilde{\mathcal{RA}}_{t_{2}}\). Furthermore, as \(t_{1} \to t_{2}^{-}\), \(\widetilde{\mathcal{RA}}_{(t_{1}, t_{2}]}\) converges to the level set \(\{ (t, x) : Q_{\min}(t, x) = t_2 \}\), which isolates the initial conditions that require exactly \(t_2\) units of budget. This level set is of practical significance. For example, let \(Q^\star\) denote the lowest violation–time budget, over all initial conditions, required to render the task in Definition~\ref{definition_1} feasible. The “best-case’’ initial conditions are those \((t,x)\) for which \(Q_{\min}(t,x)=Q^\star\). Moreover, for a state \(x_0\) (from such initial conditions), the admissible start times that achieve minimal soft-constraint violation are \(\{\, t\in[0,T]\mid Q_{\min}(t,x_{0})=Q^\star \,\}\); from this set, one may select the largest (resp. smallest) element to delay (resp. expedite) the initiation of the task. More generally, \( \widetilde{\mathcal{RA}}_{(t_{1}, t_{2}]} \) induces a partition of the state-time space into level sets based on \( Q_{\min}(t, x) \), effectively grouping initial conditions by their minimum violation time budget. This partitioning provides a systematic way to quantify and compare the quality of initial conditions. Initial conditions with lower values of \( Q_{\min}(t, x) \) are more favorable, as they require a smaller violation of the soft constraint, whereas those with higher values of \( Q_{\min}(t, x) \) reflect a need for a longer constraint violation.  

To encode soft-constraint violations along system trajectories, the proposed framework augments the dynamics \eqref{eq:ode} with an additional state variable \(z\in\mathbb{R}\) whose evolution measures the cumulative duration of violation. This augmented system enables the computation of \(W\) via dynamic programming and viscosity–solution theory~\cite{c29_1}. Consider the augmented dynamics \(\tilde{f} \!:\! [0,T] \!\times \!\mathbb{R}^n\!\times\mathbb{R} \!\times\!\mathbb{A} \!\times \!\mathbb{B} \!\rightarrow \! \mathbb{R}^n\! \times \!\{-1,0\}\), defined by
\begin{equation}
\begin{bmatrix}\dot{\mathrm{x}}(s)\\ \dot{\mathrm{z}}(s)\end{bmatrix}
\!\!=\!\!\tilde{f}\big(s,\mathrm{x}(s),\mathrm{z}(s),\!\bm a(s),\!\bm b(s)\big)
\!=\!\begin{bmatrix} f\big(s,\mathrm{x}(s),\bm a(s),\bm b(s)\big)\\ -\mathbf{1}_{\mathbb{C}_{2,s}^{\mathrm c}}\!\big(\mathrm{x}(s)\big)\!\end{bmatrix}\!.
\label{eq:augmented_ode}
\end{equation}
\noindent where \(s\in[t,T]\) and the initial condition is \(\begin{bmatrix}\mathrm{x}(t)\\[2pt] \mathrm{z}(t)\end{bmatrix}
= \begin{bmatrix}x\\[2pt] z\end{bmatrix}\).

In \eqref{eq:augmented_ode}, \(\mathrm{x}(\cdot):[t,T]\!\to\!\mathbb{R}^n\)\;denotes the system’s state trajectory. The trajectory \(\mathrm{z}(\cdot):[t,T]\to\mathbb{R}\)\;decreases at unit rate whenever \(\mathrm{x}(s)\notin\mathbb{C}_{2,s}\) and remains constant otherwise. Due to the indicator function, \(\tilde f\) is discontinuous. Consequently, the existence and uniqueness of solutions to \(\tilde{f}\) are not immediate and must be established formally. The following proposition shows that, despite this discontinuity, the system admits a unique absolutely continuous trajectory for any control functions \(\bm{a} \in \mathcal{A}_t\) and \(\bm{b} \in \mathcal{B}_t\).
\begin{proposition}\label{prop:existence_absolute_continuity} For any \(\bm{a} \in \mathcal{A}_t\), \(\bm{b} \in \mathcal{B}_t\), and \((t, x, z) \in [0,T] \times\mathbb{R}^n \times\mathbb{R}\), 
there exists a unique absolutely continuous trajectory 
\(
\tilde{\phi}^{\bm{a},\bm{b}}_{t, x, z}(\cdot): [t,T] \to \mathbb{R}^n \times\mathbb{R}
\) that satisfies the augmented ODE~\eqref{eq:augmented_ode} almost everywhere. Specifically,  
    \begin{align}
        \dot{\tilde{\phi}}^{\bm{a}, \bm{b}}_{t, x, z}(s) &= \tilde{f}(s, \tilde{\phi}^{\bm{a}, \bm{b}}_{t, x, z}(s), \bm{a}(s), \bm{b}(s)), && \text{a.e. } s \in [t, T], \nonumber \\
        \tilde{\phi}^{\bm{a}, \bm{b}}_{t, x, z}(t) &= (x, z).
        \label{eq:augmented_dynamics}
    \end{align}
\end{proposition}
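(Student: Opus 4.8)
The plan is to exploit the \emph{cascade} (triangular) structure of the augmented dynamics in~\eqref{eq:augmented_ode}: the $\mathrm{x}$-component evolves according to $f$ and does not depend on $\mathrm{z}$, while the $\mathrm{z}$-component is driven by the discontinuous term $-\mathbf{1}_{\mathbb{C}_{2,s}^{\mathrm c}}(\mathrm{x}(s))$, which depends on $\mathrm{x}$ but again not on $\mathrm{z}$. This decoupling lets me solve the two components sequentially and sidesteps the usual pathologies (nonexistence, Filippov-type nonuniqueness) associated with discontinuous right-hand sides.

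First I would invoke the existence and uniqueness result already recorded for the unaugmented system: since $f$ is bounded, uniformly continuous, and Lipschitz in the state, and since $\bm{a},\bm{b}$ are measurable, Carath\'eodory theory guarantees a unique absolutely continuous trajectory $\phi^{\bm{a},\bm{b}}_{t,x}(\cdot)$ solving $\dot{\mathrm{x}}=f(s,\mathrm{x},\bm{a},\bm{b})$ with $\mathrm{x}(t)=x$ a.e.\ on $[t,T]$. This fixes the first $n$ coordinates of the candidate trajectory.

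The crux is to make sense of the $\mathrm{z}$-equation. With $\phi^{\bm{a},\bm{b}}_{t,x}$ now fixed, define $h(s):=-\mathbf{1}_{\mathbb{C}_{2,s}^{\mathrm c}}\bigl(\phi^{\bm{a},\bm{b}}_{t,x}(s)\bigr)=-\mathbf{1}_{\mathbb{C}_2^{\mathrm c}}\bigl(s,\phi^{\bm{a},\bm{b}}_{t,x}(s)\bigr)$, a function of $s$ alone. I would show $h$ is Lebesgue measurable: the map $s\mapsto(s,\phi^{\bm{a},\bm{b}}_{t,x}(s))$ is continuous, and $\mathbb{C}_2$ is closed (hence $\mathbb{C}_2^{\mathrm c}$ open) by the lifting argument in Section~\ref{sec:constraint_sets}, so the violation-time set $\{s\in[t,T]:(s,\phi^{\bm{a},\bm{b}}_{t,x}(s))\in\mathbb{C}_2^{\mathrm c}\}$ is relatively open in $[t,T]$, and $h$, being $-1$ times the indicator of this open set, is Borel measurable. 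Since $h$ takes values in $\{-1,0\}$ it is bounded, hence in $L^1([t,T])$. I then set $\mathrm{z}(s):=z+\int_t^s h(\sigma)\,d\sigma$; this is Lipschitz with constant $1$, therefore absolutely continuous, and by the Lebesgue differentiation theorem $\dot{\mathrm{z}}(s)=h(s)$ for a.e.\ $s$, so the pair $\tilde{\phi}^{\bm{a},\bm{b}}_{t,x,z}=(\phi^{\bm{a},\bm{b}}_{t,x},\mathrm{z})$ satisfies~\eqref{eq:augmented_ode} almost everywhere with the prescribed initial data $(x,z)$.

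For uniqueness, the $\mathrm{x}$-component is unique by the cited Carath\'eodory theory, which in turn determines $h$ uniquely; any two absolutely continuous $\mathrm{z}$-components sharing this $h$ and the initial value $z$ differ by an absolutely continuous function with a.e.\ zero derivative and zero initial value, hence coincide by the fundamental theorem of calculus for absolutely continuous functions. I expect the only genuine obstacle to be the discontinuity of $\tilde{f}$: the whole point is that the triangular structure reduces the $\mathrm{z}$-dynamics to integrating a \emph{fixed} bounded measurable signal, so the sole nontrivial technical step is the measurability of $h$, for which the closedness of $\mathbb{C}_2$ established in Section~\ref{sec:constraint_sets} is essential.
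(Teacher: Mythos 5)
Your proof is correct and follows essentially the same route as the paper: solve the first $n$ equations by Carath\'eodory theory, observe that the cascade structure reduces the $\mathrm{z}$-equation to integrating the fixed bounded measurable signal $s\mapsto-\mathbf{1}_{\mathbb{C}_{2,s}^{\mathrm c}}\bigl(\phi^{\bm{a},\bm{b}}_{t,x}(s)\bigr)$, and deduce uniqueness from uniqueness of the $\mathrm{x}$-component together with the integral representation of $\mathrm{z}$. The only cosmetic difference is that the paper invokes Carath\'eodory's theorem a second time for the $(n+1)^{\text{th}}$ equation (noting the right-hand side is trivially continuous in $z$), whereas you construct the solution directly and appeal to the Lebesgue differentiation theorem—an equivalent, if anything slightly more self-contained, formulation of the same step.
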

\begin{proof}
See Appendix~\textcolor{blue}{\hyperref[apx:prop7]{E}}.
\end{proof}

Since the first \(n\) components of \(\tilde{f}\) are independent of the \((n+1)^\text{th}\) component, the trajectory \(\tilde{\phi}^{\bm{a},\bm{b}}_{t, x, z}\) can be decomposed as \(\tilde{\phi}^{\bm{a},\bm{b}}_{t, x, z} = (\phi^{\bm{a},\bm{b}}_{t,x};\ \xi^{\bm{a},\bm{b}}_{t, x, z})\), where \(\xi^{\bm{a},\bm{b}}_{t, x, z}\) solves the \((n+1)^\text{th}\) ODE. This decomposition allows rewriting \(W\) as
\begin{align}\label{eq:21_w}
\hspace{-6.0pt}W(t,x,z) \mathrel{:=} &\;\!\!
\sup_{\delta \in \Delta_{t}} \inf_{\bm{a} \in \mathcal{A}_{t}} 
\min_{\tau \in [t, T]} \max\Big\{ 
\max_{s \in [t, \tau]}c_1(s, \phi^{\bm{a}, \delta[\bm{a}]}_{t, x}(s)), \nonumber\\
& c_2(\tau, \phi^{\bm{a}, \delta[\bm{a}]}_{t, x}(\tau)),\;
-\xi^{\bm{a}, \delta[\bm{a}]}_{t, x, z}(\tau),\;
g(\tau, \phi^{\bm{a}, \delta[\bm{a}]}_{t, x}(\tau)) \Big\}.
\end{align}
\noindent In the formulation \eqref{eq:21_w} of \(W\), the initial value \(z\) in \(\xi^{\bm a,\delta[\bm a]}_{t,x,z}(\cdot)\) represents the violation-time budget for the initial condition \((t,x)\) (i.e., \(z=Q\)). The depletion of this budget during the system’s evolution is captured by the trajectory \(\xi^{\bm a,\delta[\bm a]}_{t,x,z}(\cdot)\). Additionally, \(W(t,x,z)\) is well defined for all \((t,x,z)\in[0,T]\times\mathbb{R}^n\times\mathbb{R}\) by Proposition~\ref{prop:existence_absolute_continuity}, which guarantees existence and uniqueness of trajectories under the augmented dynamics.
\subsection{Lipschitz Regularization of the Soft-Constrained Value Function}
Although for each $(t,x,z)$ the trajectory $s\mapsto \tilde{\phi}^{\bm a,\bm b}_{t,x,z}(s)$ is unique and continuous in $s$, the discontinuity of the augmented dynamics makes $W$ discontinuous. Consequently, the standard Hamilton--Jacobi theory~\cite{crandall1983} cannot be applied. From a computational standpoint, the discontinuity of the value function may also lead to numerical instability in computing \(\widetilde{\mathcal{RA}}_{Q}\), particularly when using grid-based solvers that rely on (Lipschitz) regularity assumptions to guarantee convergence and stability~\cite{souganidis_1985}. To address these challenges, this section develops a methodology to characterize the soft-constrained reach–avoid set using a family of Lipschitz-continuous value functions that approximate the discontinuous value function.

We first approximate the indicator function by a family of Lipschitz-continuous functions \( \{h_\epsilon\}_{\epsilon>0} \). For each \( \epsilon>0 \) define
\begin{equation}\label{eq:hepsilon_def}
h_\epsilon(s,x) := \min\bigg(1, \max\Big(0, \frac{1}{\epsilon}\,c_2(s,x)\Big)\bigg).
\end{equation}
\noindent From $\{h_\epsilon\}_{\epsilon>0}$ we define the regularized augmented dynamics $\tilde f_\epsilon$ as
\begin{equation}\label{eq:fep_def}
\tilde f_\epsilon(s,\mathrm{x}(s),\mathrm{z}(s),\bm a(s),\bm b(s)) :=
\begin{bmatrix}
f(s,\mathrm{x}(s),\bm a(s),\bm b(s)) \\
-\,h_\epsilon(s,\mathrm{x}(s))
\end{bmatrix}.
\end{equation}
The time index \(s\) and initial condition are specified in the same manner as in~\eqref{eq:augmented_ode}. The following proposition summarizes key properties of the families \(\{h_\epsilon\}_{\epsilon>0}\) and \(\{\tilde f_\epsilon\}_{\epsilon>0}\).
 
\begin{proposition}\label{introduction_of_lip_func} Let \(\{h_\epsilon\}_{\epsilon>0}\) and \(\{\tilde f_\epsilon\}_{\epsilon>0}\) be the families defined by \eqref{eq:hepsilon_def} and \eqref{eq:fep_def}, respectively. Then the following hold:
\begin{enumerate}[topsep=0pt, itemsep=0pt]
    \item For all \( (s, x) \in [0, T] \times \mathbb{R}^n \),
    \begin{equation}\label{eq:hepsilon_properties}
    \begin{aligned}
    h_\epsilon(s,x) = 0 &\iff c_2(s,x) \leq 0, \\
        0 < h_\epsilon(s,x) < 1 &\iff 0 < c_2(s,x) < \epsilon, \\
    h_\epsilon(s,x) = 1 &\iff \epsilon \leq c_2(s,x).
    \end{aligned}
    \end{equation}
    \item The family \( \{\tilde{f}_\epsilon\}_{\epsilon > 0} \) is monotone decreasing with respect to \( \epsilon \) as \( \epsilon \to 0^+ \), and converges pointwise to \(\tilde{f}\):
    \begin{equation}\label{eq:limit_fe}
    \lim_{\epsilon \to 0^+} \tilde{f}_\epsilon(s,x,z,a,b) = \tilde{f}(s,x,z,a,b),
    \end{equation}
    for all \( (s,x,z,a,b) \in [0, T] \times\mathbb{R}^n \times\mathbb{R}\times \mathbb{A} \times \mathbb{B} \).

    \item For each \( \epsilon > 0 \), the augmented dynamics \( \tilde{f}_\epsilon : [0, T] \times\mathbb{R}^n \times\mathbb{R}\times \mathbb{A} \times \mathbb{B} \rightarrow \mathbb{R}^n \times \{0, 1\} \) is bounded and uniformly continuous. Additionally, \( \tilde{f}_\epsilon \) is Lipschitz continuous in \((x,z)\), uniformly in \( s \),  \( a \) and \( b \).
\end{enumerate}
\end{proposition}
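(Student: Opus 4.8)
The plan is to dispatch the three parts in order, since each reduces to elementary properties of the saturation (clamp) map $h_\epsilon$ together with the standing assumptions on $f$ and the signed distance function $c_2$. \textbf{Part 1.} For the characterization \eqref{eq:hepsilon_properties} I would argue by a direct case analysis on the value of $c_2(s,x)$, writing $h_\epsilon=\kappa\circ(\tfrac1\epsilon c_2)$ with the clamp $\kappa(r):=\min(1,\max(0,r))$. When $c_2(s,x)\le 0$ the inner $\max$ returns $0$, forcing $h_\epsilon=0$; when $0<c_2(s,x)<\epsilon$ we have $0<\tfrac1\epsilon c_2<1$, so both the $\max$ and the $\min$ act as the identity and $h_\epsilon=\tfrac1\epsilon c_2\in(0,1)$; and when $c_2(s,x)\ge\epsilon$ the argument of the $\min$ is $\ge 1$, so $h_\epsilon=1$. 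Each implication is reversible, giving the three stated equivalences.

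\textbf{Part 2.} Because $f$ is independent of $\epsilon$, only the last component $-h_\epsilon(s,x)$ of $\tilde f_\epsilon$ varies with $\epsilon$, so both monotonicity and convergence reduce to statements about $\epsilon\mapsto h_\epsilon(s,x)$ at a fixed $(s,x)$. For monotonicity, if $\epsilon_1<\epsilon_2$ then $\tfrac1{\epsilon_1}c_2\ge\tfrac1{\epsilon_2}c_2$ whenever $c_2\ge 0$ (and both clamp to $0$ when $c_2<0$); since $\kappa$ is nondecreasing, $h_{\epsilon_1}\ge h_{\epsilon_2}$, hence $-h_{\epsilon_1}\le -h_{\epsilon_2}$, i.e., $\tilde f_\epsilon$ decreases as $\epsilon\to 0^+$. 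For the pointwise limit \eqref{eq:limit_fe} I would invoke Part 1: if $c_2(s,x)\le 0$ then $h_\epsilon(s,x)=0$ for every $\epsilon$, matching $\mathbf 1_{\mathbb{C}_{2,s}^{\mathrm c}}(x)=0$; if $c_2(s,x)>0$ then for all $\epsilon<c_2(s,x)$ we have $h_\epsilon(s,x)=1=\mathbf 1_{\mathbb{C}_{2,s}^{\mathrm c}}(x)$. Thus $-h_\epsilon(s,x)\to -\mathbf 1_{\mathbb{C}_{2,s}^{\mathrm c}}(x)$ monotonically from above, which is exactly the last component of $\tilde f$.

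\textbf{Part 3.} Boundedness is immediate: $f$ is bounded by assumption and $h_\epsilon\in[0,1]$, so the last component lies in $[-1,0]$ and $\tilde f_\epsilon$ is bounded. For the regularity of $-h_\epsilon$ I would again use $h_\epsilon=\kappa\circ(\tfrac1\epsilon c_2)$, noting that $\kappa$ is $1$-Lipschitz and that the signed distance function $c_2$ is $1$-Lipschitz jointly in $(s,x)$ (as recalled in Section~\ref{sec:constraint_sets}); composition then gives that $h_\epsilon$ is $\tfrac1\epsilon$-Lipschitz in $(s,x)$. In particular it is Lipschitz in $x$ with constant $\tfrac1\epsilon$ uniformly in $s$, and it does not depend on $z$, $a$, or $b$. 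Combining this with the assumed Lipschitz-in-state property of $f$ (uniform in $s,a,b$) yields Lipschitz continuity of $\tilde f_\epsilon$ in $(x,z)$ with constant on the order of $\max\{L_f,\tfrac1\epsilon\}$, uniformly in $s,a,b$; uniform continuity of $\tilde f_\epsilon$ follows from uniform continuity of $f$ together with the global (hence uniformly continuous) Lipschitz dependence of $h_\epsilon$ on $(s,x)$.

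\textbf{Main obstacle.} None of the three parts is technically deep; the only step requiring care is pinning down the Lipschitz constant of $h_\epsilon$ in Part 3, where the crucial inputs are the $1$-Lipschitz property of the signed distance function $c_2$ and the $1$-Lipschitz clamp $\kappa$. It is worth emphasizing that the resulting constant $\tfrac1\epsilon$ is finite for each fixed $\epsilon>0$ but blows up as $\epsilon\to 0^+$; this is consistent with Part 2, where the discontinuous field $\tilde f$ is recovered only in the monotone pointwise sense, and it is precisely this monotone, Lipschitz-regular approximation that the subsequent vanishing-regularization analysis of $W$ will rely on.
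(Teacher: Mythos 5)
Your proof is correct. Parts 1 and 2 follow essentially the paper's own route: the same case analysis on the sign of \(c_2\) for monotonicity, and the same identification of \(\lim_{\epsilon\to 0^+}h_\epsilon\) with \(\mathbf 1_{\mathbb{C}_{2,s}^{\mathrm c}}\) (your eventual-constancy argument, \(h_\epsilon(s,x)=1\) as soon as \(\epsilon<c_2(s,x)\), is in fact a little cleaner than the paper's passage of the limit through the clamp). The genuine divergence is in Part 3. You obtain the Lipschitz bound by composing the \(1\)-Lipschitz clamp \(\kappa\) with \(\tfrac1\epsilon c_2\), invoking joint \(1\)-Lipschitz continuity of \(c_2\) in \((s,x)\); this is legitimate under the definition in Section~\ref{sec:constraint_sets}, where \(c_2\) is the signed distance to the \emph{lifted} set \(\mathbb{C}_2\subset[0,T]\times\mathbb{R}^n\), since a signed distance function is globally \(1\)-Lipschitz in the norm used to define it. The paper instead treats space and time separately: \(c_2(s,\cdot)\) is \(1\)-Lipschitz in \(x\) by definition of the signed distance, while Lipschitz continuity in \(s\) is \emph{derived} from the uniformly-bounded-temporal-variation assumption on \(\mathbb{C}_{2,t}\) via a Hausdorff-distance comparison (Lemma~\ref{lemma1}, giving \(|c_2(s,x)-c_2(s',x)|\le d_H(\mathbb{C}_{2,s},\mathbb{C}_{2,s'})\le L_v|s-s'|\)), which yields the constant \(\tfrac{1}{\epsilon}\) in \(x\) and \(\tfrac{L_v}{\epsilon}\) in \(s\). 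What your route buys is brevity and a single constant \(\tfrac1\epsilon\); what the paper's route buys is robustness to the slice-wise convention in which \(c_2(s,\cdot)\) is the signed distance to the time-slice \(\mathbb{C}_{2,s}\subset\mathbb{R}^n\) — under that reading, temporal regularity of \(c_2\) is \emph{not} automatic and must be inherited from the sets' temporal variation, which is exactly what the Hausdorff lemma supplies. Both arguments reach the same conclusion: \(h_\epsilon\) is Lipschitz in \((s,x)\) with constant of order \(1/\epsilon\), hence \(\tilde f_\epsilon\) is bounded, uniformly continuous, and Lipschitz in \((x,z)\) uniformly in \((s,a,b)\).
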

\begin{proof}
See Appendix~\textcolor{blue}{\hyperref[apx:prop8]{F}}.
\end{proof}

Given the family of augmented dynamics \(\{\tilde f_\epsilon\}_{\epsilon>0}\) approximating \(\tilde f\), we construct a family of Lipschitz–continuous soft–constrained value functions \(\{W_\epsilon\}_{\epsilon>0}\) and prove that, as \(\epsilon\to0^+\), this family converges to the discontinuous value function \(W\). Consequently, this yields a regularization-based approximation method for the class of reach–avoid differential games with discontinuous value functions considered in this study. To the best of our knowledge, this is the first such technique in the HJ reachability literature.

\begin{theorem}\label{family_value_functions}
Let \( W_\epsilon(t, x, z) \) denote the value function defined analogously to \( W(t, x, z) \) in~\eqref{eq:21_w}, but associated with the dynamics~\( \tilde{f}_\epsilon \) in~\eqref{eq:fep_def}. Then, the following properties hold:
\begin{enumerate}[topsep=0pt, itemsep=0pt]
    \item \( W_\epsilon(t, x, z) \) is Lipschitz continuous in \( t \), \( x \), and \( z \).
    \item \( \{W_\epsilon\}_{\epsilon > 0} \) is monotonically increasing as \( \epsilon \to 0^+ \).
    \item \( \{W_\epsilon\}_{\epsilon > 0} \) converges pointwise to \( W\) as \( \epsilon \to 0^+ \), i.e.,   
\begin{equation}
\begin{aligned}
\lim_{\epsilon \to 0^+} W_\epsilon(t,x,z) &= W(t,x,z),\\
\text{for all } (t,x,z) &\in [0,T]\times \mathbb{R}^n \times\mathbb{R}.
\end{aligned}
\end{equation}

    \item \( \{W_\epsilon\}_{\epsilon > 0} \) converges to \( W\) in measure, on every compact domain \( D \subset [0, T] \times \mathbb{R}^n\times\mathbb{R}\). That is, for all \( \eta > 0 \),
    \begin{equation}
    \lim_{\epsilon\to0^+} \mu(\{(t, x, z)\in D \big| |W-W_\epsilon|>\eta\}) = 0
    \end{equation} where \( \mu \) is the Lebesgue measure on \( [0, T] \times \mathbb{R}^n \times\mathbb{R}\).
\item For any \((t,x,z)\), the estimation error is bounded as
\begin{align}
\hspace{-1em}\lvert W(t, x, z)-W_\epsilon(t, x, z)\rvert
&\le \mu \Bigl( \bigcup_{\delta\in\Delta_t}
                   \bigcap_{\bm a\in\mathcal A_t}
                   E_{\epsilon,t,x}(\bm a,\delta[\bm a]) \Bigr) \nonumber\\
&\le T - t. \label{eq:mu_bound}\\
% &\hspace{-9.8925em}\text{where} \quad \nonumber\\
% &\hspace{-11.5625em} E_{\epsilon,t,x}(\bm{a},\delta[\bm{a}])
% = \!\!\{\, s\in[t,T] : 0 <\! c_{2}(s,(\phi^{\bm{a},\delta[\bm{a}]}_{t,x}(s)) \!< \epsilon\}
% \label{eq:E_definition}
&\hspace{-9.8925em}\text{where} \quad \nonumber\\
&\hspace{-9.5625em} E_{\epsilon,t,x}(\bm{a},\delta[\bm{a}])
\!\!= \!\!\{\, s\in[t,T] \!:\! 0 <\! c_{2}(s,\phi^{\bm{a},\delta[\bm{a}]}_{t,x}(s)) \!< \epsilon\}
\label{eq:E_definition}
\end{align}

\end{enumerate}
\end{theorem}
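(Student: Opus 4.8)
The plan is to prove the five parts in the order 1), 2), 5), 3), 4), since the quantitative error estimate in part~5) drives the pointwise‑convergence statement in part~3), which in turn yields convergence in measure in part~4). The single computation underpinning everything is a per‑realization comparison of the two integrands: for fixed $\delta$, $\bm a$, and $\tau$, the expressions defining $W$ in \eqref{eq:21_w} and $W_\epsilon$ differ only in the budget‑depletion argument $-\xi^{\bm a,\delta[\bm a]}_{t,x,z}(\tau)$ versus its regularized counterpart. Writing $\xi_\epsilon$ for the latter, one has $\xi_\epsilon(\tau)-\xi(\tau)=\int_t^\tau\big(\mathbf 1_{\mathbb C_{2,s}^{\mathrm c}}-h_\epsilon\big)\big(s,\phi^{\bm a,\delta[\bm a]}_{t,x}(s)\big)\,ds$, which by Proposition~\ref{introduction_of_lip_func}(1) is nonnegative and supported on $E_{\epsilon,t,x}(\bm a,\delta[\bm a])$, hence bounded above by $\mu\big(E_{\epsilon,t,x}(\bm a,\delta[\bm a])\big)$. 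Because $\max$ is non‑expansive in each argument, the two integrands differ by at most this quantity, and since $h_\epsilon\le\mathbf 1_{\mathbb C_{2,s}^{\mathrm c}}$ the $W_\epsilon$‑integrand never exceeds the $W$‑integrand. This single estimate yields both $W_\epsilon\le W$ and a pointwise gap controlled by the boundary‑layer measure.

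For part~1) I would invoke the standard Lipschitz theory. By Proposition~\ref{introduction_of_lip_func}(3), $\tilde f_\epsilon$ is bounded, uniformly continuous, and Lipschitz in $(x,z)$ uniformly in the remaining arguments, so Gronwall's inequality gives Lipschitz dependence of the trajectory on $(t,x,z)$; composing with the Lipschitz signed‑distance functions $c_1,c_2,g$ and using that $\sup$, $\inf$, $\min$, and $\max$ are non‑expansive then gives Lipschitz continuity of $W_\epsilon$, exactly as in Proposition~\ref{Prop_1} but now for the regularized augmented system. Part~2) is immediate from the per‑realization setup: Proposition~\ref{introduction_of_lip_func}(2) shows $h_\epsilon$ increases pointwise as $\epsilon\to0^+$, so $-\xi_\epsilon(\tau)$ increases monotonically, and since every operation building $W_\epsilon$ is order‑preserving, $W_\epsilon$ increases as $\epsilon\to0^+$.

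The heart of the argument is part~5). Having established $0\le W-W_\epsilon$, I would aggregate the per‑realization gap through the three optimizations. Passing the constant bound $\mu(E_{\epsilon,t,x}(\bm a,\delta[\bm a]))$ through $\min_\tau$ is harmless (a uniform shift), and the delicate step is propagating it through $\inf_{\bm a}$ and then $\sup_\delta$ so as to land on precisely $\mu\big(\bigcup_{\delta}\bigcap_{\bm a}E_{\epsilon,t,x}(\bm a,\delta[\bm a])\big)$. I would do this by fixing an $\eta$‑optimal adversary strategy $\delta^\star$ for $W$, comparing $\inf_{\bm a}$ of the two families for that $\delta^\star$, and arguing that the surviving contribution is the set of times the adversary can force into the boundary layer irrespective of Player~A's response — i.e. the intersection over $\bm a$ — while the outer union over $\delta$ records the worst such adversary. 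The final inequality $\le T-t$ is then trivial because $E_{\epsilon,t,x}(\bm a,\delta[\bm a])\subseteq[t,T]$.

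The main obstacle I anticipate is exactly this aggregation, because the control that is (near‑)optimal for $W_\epsilon$ depends on $\epsilon$ and may ``slide'' along the soft‑constraint boundary $\{c_2=0\}$, so the layer time $\mu(E_\epsilon)$ for a fixed control sequence need not vanish; reducing the gap to the \emph{unavoidable} layer time $\bigcap_{\bm a}E_{\epsilon,t,x}$ is what prevents a naive $\sup_{\bm a}$ bound and genuinely uses the non‑anticipative sup--inf structure of the game. Once part~5) is in hand, part~3) follows by sandwiching: $W_\epsilon$ is nondecreasing (part~2) and bounded above by $W$, and $W-W_\epsilon\le\mu\big(\bigcup_\delta\bigcap_{\bm a}E_{\epsilon,t,x}\big)\to0$ as $\epsilon\to0^+$ since the unavoidable boundary layer collapses to the zero set $\{c_2=0\}$. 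Finally, part~4) is a standard measure‑theoretic consequence: everywhere pointwise convergence of the uniformly bounded, measurable family $\{W_\epsilon\}$ on the finite‑measure compact domain $D$ implies convergence in measure, so no separate estimate is required beyond part~3).
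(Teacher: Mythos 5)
Your overall architecture — proving the parts in the order 1), 2), 5), 3), 4), deriving everything from the per-realization comparison $0\le \xi_\epsilon(\tau)-\xi(\tau)\le \mu\bigl(E_{\epsilon,t,x}(\bm a,\delta[\bm a])\bigr)$, and disposing of parts 1), 2), 4) via Lipschitz theory, monotonicity of $h_\epsilon$, and Egorov-type reasoning on a finite-measure domain — is exactly the paper's. The genuine gap is the aggregation step in part 5), which you correctly flag as the main obstacle but never close, and the route you sketch cannot close it. Fixing an $\eta$-optimal adversary $\delta^\star$ and comparing the two inner infima only yields $\inf_{\bm a} J_0(\delta^\star,\bm a)\le \inf_{\bm a}\bigl[J_\epsilon(\delta^\star,\bm a)+\mu\bigl(E_{\epsilon,t,x}(\bm a,\delta^\star[\bm a])\bigr)\bigr]$, and the infimum of a sum \emph{dominates} the sum of the infima, so this does not produce $\inf_{\bm a}J_\epsilon+\inf_{\bm a}\mu(E_{\epsilon,t,x})$; the inequality runs the wrong way. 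Nothing in your argument rules out one control whose trajectory skirts the boundary layer (small regularized value, large layer time) coexisting with another that avoids the layer entirely but has a large value, in which case $\inf_{\bm a}J_0-\inf_{\bm a}J_\epsilon$ can be of the order of the first control's layer time even though $\inf_{\bm a}\mu\bigl(E_{\epsilon,t,x}(\bm a,\delta^\star[\bm a])\bigr)=0$. So your claim that ``the surviving contribution is the intersection over $\bm a$'' is an assertion, not a consequence of the per-realization bound, and since your part 3) is derived by sandwiching against the part-5) bound, it inherits the gap.

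What the paper actually does at this juncture is different in kind and is entirely absent from your proposal: it converts the quantity $\sup_{\delta}\inf_{\bm a}\mu\bigl(E_{\epsilon,t,x}(\bm a,\delta[\bm a])\bigr)$ into the measure of a single set by interchanging the $\sup$--$\inf$ with the time integral, invoking Rockafellar's interchange theorem for normal integrands over decomposable spaces. This requires two supporting facts that constitute the real technical content of the paper's proof of part 5): (i) for each fixed $s$, the map $(\bm a,\delta)\mapsto \mathbf 1_{E_{\epsilon,t,x}(\bm a,\delta[\bm a])}(s)$ is lower semicontinuous (proved via Gr\"onwall-based uniform convergence of trajectories and the Lipschitz continuity of $c_2$ established in Proposition~\ref{introduction_of_lip_func}), so the integrands are normal integrands; and (ii) the spaces $\mathcal A_t$ and $\Delta_t$ are decomposable. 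Together these yield
\begin{equation*}
\sup_{\delta\in\Delta_t}\,\inf_{\bm a\in\mathcal A_t}\,
\mu\bigl(E_{\epsilon,t,x}(\bm a,\delta[\bm a])\bigr)
=\mu\Bigl(\bigcup_{\delta\in\Delta_t}\bigcap_{\bm a\in\mathcal A_t}
E_{\epsilon,t,x}(\bm a,\delta[\bm a])\Bigr),
\end{equation*}
which is precisely the quantity in \eqref{eq:mu_bound}. Likewise, for part 3) the paper does not merely say the layer ``collapses''; it fixes an arbitrary control, bounds the union-intersection set by $\bigcup_{\delta}E_{\epsilon,t,x}(\bm a,\delta[\bm a])$, and applies continuity from above of the Lebesgue measure to a decreasing family with empty intersection. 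Without the interchange machinery (or a substitute for it), your parts 5) and 3) are not established.
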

\begin{proof}
See Appendix~\textcolor{blue}{\hyperref[apx:thm2]{G}}.
\end{proof}

Theorem~\ref{family_value_functions} establishes that the family \(\{W_\epsilon\}_{\epsilon>0}\) under-approximates \(W\) and converges to \(W\) pointwise and in measure as \(\epsilon\to0^+\). The proof uses the continuity-from-above property of the Lebesgue measure and Egorov’s theorem~\cite{Royden}. Moreover, Theorem~\ref{family_value_functions} provides an upper bound on the approximation error, namely the Lebesgue measure \(\mu\bigl(\!\bigcup_{\delta\in\Delta_t}\!\bigcap_{\bm a\in\mathcal A_t} E_{\epsilon,t,x}(\bm a,\delta[\bm a])\bigr)\), which is at most \(T-t\) and hence finite. Intuitively, this bound corresponds to the value of a zero-sum game between Players~$\mathrm{A}$ and~$\mathrm{B}$, who seek to minimize and maximize, respectively, the time a trajectory starting at $(t,x)$ spends in an $\epsilon$-neighborhood outside the soft-constraint boundary. As noted in Proposition~\ref{introduction_of_lip_func}, this region is exactly where the indicator function \( \mathbf{1}_{\mathbb{C}_{2,s}^{C}}(x) \) and its approximation \( h_\epsilon(s, x) \) differ. Thus, \( W_\epsilon \) closely approximates \( W \) whenever the measure of this region is near zero.

Since each \( W_\epsilon \) is Lipschitz continuous, we can apply the dynamic programming principle to rigorously derive the HJI equation it satisfies. The Lipschitz regularity of \(W_\epsilon\) also ensures that standard viscosity-solution theory applies.
\begin{proposition}\label{thm2}
Let \(\epsilon\!>0\). For every \(t\in\![0,T)\) and \(\tilde{\delta}\!>0\) with \(t\!+\tilde{\delta}\!\le\!T\), and for all \((x,z) \in \mathbb{R}^n\times\mathbb{R}\), \(W_\epsilon(t,x,z)\) satisfies
\begin{align}
W_\epsilon(t,x,z)
\!&=\! \sup_{\delta\in\Delta_t}
  \! \inf_{a\in\mathcal{A}_t}
   \bigg\{\!
   \min\bigg[
      \!\min_{\tau\in[t,t+\tilde\delta]}
     \!\max\Bigl( \nonumber\\
&\hspace{-0.9cm}   \max_{s\in[t,\tau]}c_1(s,\phi^{\bm{a},\delta[\bm{a}]}_{t,x}(s)),~ 
        c_2(\tau,\phi^{\bm{a},\delta[\bm{a}]}_{t,x}(\tau)),~
        -\xi^{\bm{a},\delta[\bm{a}]}_{t,x,z}(\tau),~\nonumber\\
&\hspace{-0.9cm}
        g(\tau,\phi^{\bm{a},\delta[\bm{a}]}_{t,x}(\tau))
      \Bigr), 
      \max\Bigl(~
              \max_{\tau\in[t,t+\tilde\delta]}
          c_1\!\bigl(\tau,\phi^{\bm{a},\delta[\bm{a}]}_{t,x}(\tau)\bigr), \nonumber\\
&\hspace{-0.9cm}
W_\epsilon\bigl(
            t+\tilde\delta,\,
            \phi^{\bm{a},\delta[\bm{a}]}_{t,x}(t+\tilde\delta),\,
            \xi^{\bm{a},\delta[\bm{a}]}_{t,x,z}(t+\tilde\delta)
        \bigr)
      \Bigr)
   \bigg]
   \bigg\}.
\label{eq:dynamic_prog}
\end{align}
\end{proposition}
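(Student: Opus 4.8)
The plan is to establish the dynamic programming principle \eqref{eq:dynamic_prog} by the standard time-splitting argument for differential games, specialized to the augmented, regularized dynamics $\tilde f_\epsilon$. The backbone is the flow (semigroup) property of the augmented trajectories: since $\tilde f_\epsilon$ is Lipschitz in $(x,z)$ by Proposition~\ref{introduction_of_lip_func} and therefore admits unique absolutely continuous solutions (exactly as in Proposition~\ref{prop:existence_absolute_continuity}, whose argument applies verbatim to the continuous $\tilde f_\epsilon$), the trajectory $(\phi^{\bm a,\delta[\bm a]}_{t,x},\xi^{\bm a,\delta[\bm a]}_{t,x,z})$ restricted to $[t+\tilde\delta,T]$ coincides with the trajectory launched afresh at time $t+\tilde\delta$ from the intermediate state $\bigl(\phi^{\bm a,\delta[\bm a]}_{t,x}(t+\tilde\delta),\,\xi^{\bm a,\delta[\bm a]}_{t,x,z}(t+\tilde\delta)\bigr)$. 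I would first record this property together with the companion algebraic identity that splits the running maximum of $c_1$: for $\tau\ge t+\tilde\delta$,
\[
\max_{s\in[t,\tau]} c_1
= \max\Bigl( \max_{s\in[t,t+\tilde\delta]} c_1,\; \max_{s\in[t+\tilde\delta,\tau]} c_1 \Bigr).
\]

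Next I would decompose the outer minimization $\min_{\tau\in[t,T]}$ as $\min\bigl(\min_{\tau\in[t,t+\tilde\delta]},\,\min_{\tau\in[t+\tilde\delta,T]}\bigr)$. The first piece reproduces verbatim the branch $\min_{\tau\in[t,t+\tilde\delta]}\max(\cdots)$ in \eqref{eq:dynamic_prog}. For the second piece I would insert the $c_1$-splitting identity and then invoke the flow property, so that $\min_{\tau\in[t+\tilde\delta,T]}\max\{\max_{s\in[t+\tilde\delta,\tau]}c_1,\,c_2(\tau),\,-\xi(\tau),\,g(\tau)\}$ is recognized as the inner ($\tau$-minimized, max-combined) objective defining $W_\epsilon$ at the shifted initial condition $(t+\tilde\delta,\phi(t+\tilde\delta),\xi(t+\tilde\delta))$. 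Pulling the interval-independent term $\max_{s\in[t,t+\tilde\delta]}c_1$ outside the inner minimum then yields the second branch $\max\bigl(\max_{\tau\in[t,t+\tilde\delta]}c_1,\,W_\epsilon(t+\tilde\delta,\cdots)\bigr)$.

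The substantive work lies in commuting the $\sup_{\delta}\inf_{\bm a}$ with the time split, which I would prove as two inequalities. For the lower bound on $W_\epsilon$, given a near-optimal strategy $\delta$ for $W_\epsilon(t,x,z)$, I would restrict it to $[t,t+\tilde\delta]$ and, on $[t+\tilde\delta,T]$, glue in a near-optimal tail strategy that depends on the realized intermediate state; the glued map must be verified non-anticipative on all of $[t,T]$, which is precisely where condition \eqref{eq:nonanticipative} enters. For the reverse inequality I would take near-optimal controls for each sub-game and concatenate them into a single $\bm a\in\mathcal A_t$, checking measurability and compatibility with the flow property. The Lipschitz continuity of $W_\epsilon$ from Theorem~\ref{family_value_functions} ensures the intermediate-state dependence is benign, so the $\varepsilon$-optimal pieces combine with controllable error that I would send to zero.

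I expect the main obstacle to be the concatenation of non-anticipative strategies across the splitting time $t+\tilde\delta$: one must construct a strategy on $[t,T]$ that agrees with the given near-optimal strategy on the first subinterval and, on the second, responds to the tail control using only past information while realizing the value $W_\epsilon(t+\tilde\delta,\cdots)$ of the shifted game. This is the classical technical core of the Evans--Souganidis dynamic programming argument, complicated here by the extra budget coordinate, whose endpoint value $\xi^{\bm a,\delta[\bm a]}_{t,x,z}(t+\tilde\delta)$ must be carried along as part of the intermediate state. I anticipate that this bookkeeping, together with the measurable near-optimal-strategy selection, will be the most delicate part of the proof.
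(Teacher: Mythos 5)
Your proposal is sound, but it takes a genuinely different and much heavier route than the paper. The paper's proof is a two-line reduction: it defines the combined cost \(c(t,x,z):=\max\bigl(c_2(t,x),\,-z,\,g(t,x)\bigr)\), observes that \(c\) is Lipschitz because each of its arguments is, and thereby rewrites \(W_\epsilon\) as a \emph{standard} time-varying reach--avoid value function (running constraint \(c_1\), terminal cost \(c\)) for the augmented Lipschitz dynamics \(\tilde f_\epsilon\); the dynamic programming principle \eqref{eq:dynamic_prog} then follows by directly invoking Lemma~2 of~\cite{c3}, whose hypotheses are met thanks to Proposition~\ref{introduction_of_lip_func}. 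What you propose instead is to re-derive that lemma from first principles for the augmented system: the flow property (valid, by the uniqueness argument of Proposition~\ref{prop:existence_absolute_continuity} applied to the Lipschitz \(\tilde f_\epsilon\)), the splitting of \(\min_{\tau\in[t,T]}\) and of the running maximum of \(c_1\), and the two-inequality argument with restriction and non-anticipative concatenation of strategies across \(t+\tilde\delta\), using the Lipschitz continuity of \(W_\epsilon\) from Theorem~\ref{family_value_functions} to control the intermediate-state dependence. This is exactly the Evans--Souganidis-style argument that constitutes the proof of the cited lemma, so nothing in it is wrong, and it would make the result self-contained rather than citation-dependent; but every obstacle you flag as delicate (gluing near-optimal tail strategies non-anticipatively, measurable selection, partition arguments) is precisely what the paper dispenses with by its one observation --- that the budget term \(-\xi^{\bm a,\delta[\bm a]}_{t,x,z}(\tau)\) is simply the component \(-z\) of a Lipschitz terminal cost evaluated along the augmented state, so no new dynamic-programming analysis is needed. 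Your outline also stops short of executing those delicate steps, so as written it is a correct plan whose hardest parts remain to be carried out, whereas the paper discharges them by reduction.
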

\begin{proof}
See Appendix~\textcolor{blue}{\hyperref[apx:thm3]{H}}.
\end{proof}
\noindent Proposition~\ref{thm2} leads to an HJI-VI for which the following theorem establishes that \(W_\epsilon\) is its unique viscosity solution.
\begin{theorem}\label{final_theorem} 
\noindent For any \(\epsilon > 0\), \( W_\epsilon \) is the unique viscosity solution of the following HJI-VI:
\begin{align}\label{visco_solu}
0 &=\!\max\Bigl\{\, c_1(t,x) - W_\epsilon(t,x,z),\;
\min\Bigl[ \max\{c_2(t,x), -z, \nonumber\\
& \hspace*{-0.3em} \quad g(t,x)\} - W_\epsilon(t,x,z),\;
\frac{\partial W_\epsilon}{\partial t}
\!+\!H_{\epsilon}\bigl(t, x, z, \!\frac{\partial W_\epsilon}{\partial x},\! \frac{\partial W_\epsilon}{\partial z}\bigr) \Bigr] \Bigr\}, \nonumber\\
& \hspace*{4.5em} (t,x,z) \in [0,T]\times\mathbb{R}^n\times\mathbb{R}
\end{align}
\noindent where the Hamiltonian and terminal condition are given by
\begin{subequations}\label{eq:visc_solu}
\begin{align}
\hspace{-0.57em}H_{\epsilon}\big(t, x, z, \frac{\partial W_\epsilon}{\partial x},\frac{\partial W_\epsilon}{\partial z}\big)
 &\!=\! \min_{a \in \mathbb{A}} \max_{b \in \mathbb{B}}
\!\left[\!
\begin{pmatrix}
\tfrac{\partial W_{\epsilon}}{\partial x}\\[0.3ex]
\tfrac{\partial W_{\epsilon}}{\partial z}
\end{pmatrix}
\!\!\cdot\!\!
\tilde{f}_{\epsilon}(t,x,z,a,b)\!
\right]\!, \\
&\hspace{-9.5em} W_\epsilon(T, x, z)=\max\{c_1(T, x), c_2(T, x), -z, g(T, x)\}. 
\label{visocity_solu2}
\end{align}
\end{subequations}
\end{theorem}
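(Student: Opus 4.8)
The plan is to recognize that, after a relabeling of the state, $W_\epsilon$ is a \emph{standard} reach--avoid value function, so that the characterization can be obtained exactly as Propositions~\ref{Prop_1} and~\ref{thm1} were obtained from \cite{c3}. Treat the pair $(x,z)$ as the full state of the augmented system \eqref{eq:fep_def}, and introduce the persistent (avoid-type) constraint $\hat c(s,x,z):=c_1(s,x)$ and the terminal (reach-type) payoff $\hat g(s,x,z):=\max\{c_2(s,x),\,-z,\,g(s,x)\}$. Writing $\tilde\phi^{\bm a,\delta[\bm a]}_{t,x,z}=(\phi^{\bm a,\delta[\bm a]}_{t,x};\,\xi^{\bm a,\delta[\bm a]}_{t,x,z})$ for the trajectory of $\tilde f_\epsilon$ and comparing with \eqref{eq:21_w}, one sees that $W_\epsilon$ is exactly the canonical reach--avoid value function of \cite{c3} on the augmented state, with running term $\max_{s\in[t,\tau]}\hat c$ and terminal term $\hat g$ evaluated at the stopping time $\tau$.

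First I would verify the standing hypotheses of Theorem~1 of \cite{c3} for this augmented problem. The required regularity of the dynamics is supplied by Proposition~\ref{introduction_of_lip_func}(3): for each fixed $\epsilon>0$, $\tilde f_\epsilon$ is bounded, uniformly continuous, and Lipschitz in $(x,z)$ uniformly in $(s,a,b)$. The data $\hat c$ and $\hat g$ inherit Lipschitz continuity from the signed-distance functions $c_1,c_2,g$ of \eqref{eq:signed_distance} and from $z\mapsto -z$, since a finite maximum of Lipschitz functions is Lipschitz. Theorem~1 of \cite{c3} then yields that $W_\epsilon$ is the \emph{unique} viscosity solution of
\[ 0=\max\bigl\{\,\min[\,\hat g-W_\epsilon,\ \partial_t W_\epsilon+\hat H\,],\ \hat c-W_\epsilon\,\bigr\}, \]
with terminal data $W_\epsilon(T,x,z)=\max\{\hat c,\hat g\}$ and Hamiltonian $\hat H=\min_{a\in\mathbb A}\max_{b\in\mathbb B}(\partial_{(x,z)}W_\epsilon)\cdot\tilde f_\epsilon$.

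The remaining step is purely a substitution: replacing $\hat g=\max\{c_2,-z,g\}$, $\hat c=c_1$, and $\hat H=H_\epsilon$ turns the inner bracket into $\min[\max\{c_2,-z,g\}-W_\epsilon,\ \partial_t W_\epsilon+H_\epsilon]$ and the outer obstacle into $c_1-W_\epsilon$, which is \eqref{visco_solu} (the two arguments of the outer $\max$ may be listed in either order); the terminal condition collapses to $\max\{c_1,c_2,-z,g\}$, matching \eqref{visocity_solu2}; and $\hat H$ coincides with $H_\epsilon$ of \eqref{eq:visc_solu} because the gradient pairs with $\tilde f_\epsilon$. The dynamic programming identity of Proposition~\ref{thm2} is the local counterpart of this characterization; should a self-contained argument be preferred over citing \cite{c3}, it supplies existence directly---at any point where a smooth test function touches $W_\epsilon$, one passes to the limit $\tilde\delta\to0^+$ in \eqref{eq:dynamic_prog}, using the Lipschitz regularity of $W_\epsilon$ from Theorem~\ref{family_value_functions}(1) and the boundedness of $\tilde f_\epsilon$ to extract the sub- and supersolution inequalities.

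The step I expect to be most delicate is confirming that the invoked comparison principle, and hence uniqueness, accommodates the data at hand: while $c_1,c_2,g$ and the augmented target $\hat g$ are Lipschitz, the summand $-z$ is unbounded on $\mathbb{R}$. One must therefore either check that the comparison argument underlying Theorem~1 of \cite{c3} requires only (uniformly) Lipschitz-continuous data rather than globally bounded data, or restrict to the physically relevant compact budget range $z\in[0,T]$ (recall $z=Q\in[0,T]$) and carry out the comparison on that domain. Once the admissibility of $\hat g$ is settled, the rest is bookkeeping: matching the reduced equation term by term with \eqref{visco_solu}--\eqref{eq:visc_solu}.
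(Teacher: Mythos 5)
Your proposal is correct and takes essentially the same route as the paper: the paper likewise packages $\max\{c_2(t,x),-z,g(t,x)\}$ into a single Lipschitz function $c$, views $(x,z)$ under the dynamics $\tilde f_\epsilon$ as a standard reach--avoid problem via the construction in Proposition~\ref{thm2}, and then invokes (the proof of) Theorem~1 of \cite{c3} for existence and uniqueness. Your closing caution about the unboundedness of the summand $-z$ on $\mathbb{R}$ --- resolved by restricting to the compact budget range $z\in[0,T]$ or checking that only Lipschitz (not bounded) data is needed for the comparison principle --- is a genuine technical point that the paper's one-line proof glosses over, but it does not alter the argument.
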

\begin{proof}
See Appendix~\textcolor{blue}{\hyperref[apx:thm3]{H}}.
\end{proof}

The regularity of the HJI–VI solution \(W_\epsilon\) enables efficient computation with the Level Set~\cite{c_11} and HelperOC~\cite{c_12} toolboxes. Accordingly, we use \(W_\epsilon\) to construct an approximation of the soft-constrained reach–avoid set \(\widetilde{\mathcal{RA}}_{Q}\).

\begin{proposition}\label{Prop_9}
For any \(\epsilon > 0\), the soft-constrained reach-avoid set with \(Q = 0\) can be computed using \(W_{\epsilon}\) as
\begin{equation}\label{eq:RA0_def}
\hspace{-2.55pt}\widetilde{\mathcal{RA}}_{0}
  \!=\!\bigl\{(t,x)\!\in\![0,T]\times\mathbb{R}^n\big|
          W_\epsilon(t,x,0)\!\le\!0\bigr\}.
\end{equation}
\end{proposition}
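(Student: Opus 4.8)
The plan is to recognize the regularized value function $W_\epsilon$ as the value of a \emph{standard} Lipschitz reach--avoid differential game played in the augmented state $(x,z)$, and then specialize the resulting set characterization to the slice $z=0$. Concretely, introduce the augmented target implicit surface $G(\tau,x,z):=\max\{c_2(\tau,x),\,-z,\,g(\tau,x)\}$ and retain $c_1$ as the running constraint surface. Using the decomposition $\tilde{\phi}^{\bm a,\bm b}_{t,x,z}=(\phi^{\bm a,\bm b}_{t,x};\,\xi^{\bm a,\bm b}_{t,x,z})$ from Proposition~\ref{prop:existence_absolute_continuity}, the definition~\eqref{eq:21_w} of $W_\epsilon$ (with $\xi$ driven by $\dot{\mathrm z}=-h_\epsilon$) becomes
\[
W_\epsilon(t,x,z)=\sup_{\delta\in\Delta_t}\inf_{\bm a\in\mathcal A_t}\min_{\tau\in[t,T]}\max\Bigl\{\max_{s\in[t,\tau]}c_1\bigl(s,\phi^{\bm a,\delta[\bm a]}_{t,x}(s)\bigr),\,G\bigl(\tau,\phi^{\bm a,\delta[\bm a]}_{t,x}(\tau),\,\xi^{\bm a,\delta[\bm a]}_{t,x,z}(\tau)\bigr)\Bigr\},
\]
which has exactly the form of the classical value~\eqref{eq:value_function} of Proposition~\ref{Prop_1}, now for the augmented system $\tilde f_\epsilon$. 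Since $\tilde f_\epsilon$ is bounded, uniformly continuous, and Lipschitz in $(x,z)$ by Proposition~\ref{introduction_of_lip_func}(3), and $G$ is bounded and Lipschitz, the characterization underlying Proposition~\ref{Prop_1} (Proposition~2 of~\cite{c3}) applies verbatim to this augmented game.

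First I would invoke that characterization to write the zero sublevel set of $W_\epsilon$ as an augmented reach--avoid set of kind: $(t,x,z)\in\{W_\epsilon\le0\}$ iff, for every $\delta\in\Delta_t$, there exist $\bm a\in\mathcal A_t$ and $\tau\in[t,T]$ with $G\bigl(\tau,\phi(\tau),\xi(\tau)\bigr)\le0$ and $c_1\bigl(s,\phi(s)\bigr)\le0$ for all $s\in[t,\tau]$. Unpacking $G\le0$ yields $\phi(\tau)\in\mathbb{T}_\tau$, $\phi(\tau)\in\mathbb{C}_{2,\tau}$, and $\xi(\tau)\ge0$. Next I would specialize to $z=0$: because $\dot{\mathrm z}(s)=-h_\epsilon(s,\mathrm x(s))\le0$ (the integrand $h_\epsilon$ is nonnegative) and $\xi^{\bm a,\delta[\bm a]}_{t,x,0}(t)=0$, the budget trajectory is nonincreasing, so the terminal requirement $\xi(\tau)\ge0$ forces $\xi(\tau)=0$, i.e. $\int_t^\tau h_\epsilon\bigl(s,\phi(s)\bigr)\,ds=0$.

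The final step shows this equality is \emph{exactly} the $Q=0$ budget condition of Definition~\ref{definition_1}. Since $h_\epsilon\ge0$, $\int_t^\tau h_\epsilon\,ds=0$ iff $h_\epsilon(s,\phi(s))=0$ for a.e.\ $s\in[t,\tau]$, and by Proposition~\ref{introduction_of_lip_func}(1) this holds iff $c_2(s,\phi(s))\le0$ a.e., equivalently $\mathbf{1}_{\mathbb{C}_{2,s}^{C}}(\phi(s))=0$ a.e., i.e. $\int_t^\tau \mathbf{1}_{\mathbb{C}_{2,s}^{C}}(\phi(s))\,ds=0\le Q$ with $Q=0$. Substituting back, the characterization of $\{W_\epsilon(\cdot,\cdot,0)\le0\}$ collapses term-by-term onto Definition~\ref{definition_1} with $Q=0$ (reach $\mathbb{T}_\tau\cap\mathbb{C}_{2,\tau}$, remain in $\mathbb{C}_{1,s}$, zero violation time), which is precisely $\widetilde{\mathcal{RA}}_0$, giving the claimed equality. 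As a consistency shortcut, the inclusion $\widetilde{\mathcal{RA}}_0\subseteq\{W_\epsilon(\cdot,\cdot,0)\le0\}$ also follows at once from $W_\epsilon\le W$ (Theorem~\ref{family_value_functions}) together with the characterization $\widetilde{\mathcal{RA}}_0=\{(t,x):W(t,x,0)\le0\}$ in~\eqref{eq:15}, since $W(t,x,0)\le0$ then yields $W_\epsilon(t,x,0)\le W(t,x,0)\le0$.

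I expect the main obstacle to be rigorously justifying the value-to-set characterization for $W_\epsilon$, namely that its zero sublevel set coincides with the augmented reach--avoid set of kind; this hides the usual quantifier subtleties, as the infimum over Player~$\mathrm A$'s controls and the minimum over $\tau$ need not be attained a priori. The crucial enabler is the Lipschitz regularity of $\tilde f_\epsilon$ from Proposition~\ref{introduction_of_lip_func}(3), which places the augmented game squarely within the hypotheses of~\cite{c3}, so that the very argument establishing Proposition~\ref{Prop_1} carries over and the discontinuity that blocks this route for the exact value $W$ is absent. A secondary point requiring care is checking that lifting the target into the surface $G$ preserves boundedness and Lipschitz continuity, so the cited characterization applies without modification.
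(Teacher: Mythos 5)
Your proof is correct, and it reaches the result by a somewhat different route than the paper. The paper omits its own argument for Proposition~\ref{Prop_9}, stating only that it is analogous to the proof of Proposition~\ref{prop:RA_{Q} characterization}, i.e., a direct two-sided contradiction argument on the $\sup$--$\inf$--$\min$ structure of the value function. You instead delegate that quantifier bookkeeping to the classical Lipschitz reach--avoid characterization (Proposition~2 of~\cite{c3}) applied to the augmented game with target surface $G=\max\{c_2,-z,g\}$ --- the same device the paper itself uses in Appendix~H for Proposition~\ref{thm2} and Theorem~\ref{final_theorem} --- and then reduce the $z=0$ slice to Definition~\ref{definition_1} via the zero-set identity $h_\epsilon(s,x)=0\iff c_2(s,x)\le 0\iff \mathbf{1}_{\mathbb{C}_{2,s}^{C}}(x)=0$ from Proposition~\ref{introduction_of_lip_func}, combined with the monotone decrease of $\xi$ from $\xi(t)=0$. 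That slice argument is the observation any proof of this proposition must contain (and it is exactly what fails for $Q>0$, consistent with the paper's remark following the proposition), so the mathematical core is shared; what your route buys is modularity, since the contradiction argument is not re-derived, at the cost of having to verify the hypotheses of~\cite{c3} for the augmented game. On that point, one inaccuracy: $G$ is Lipschitz but \emph{not} bounded, since $-z$ is unbounded on $\mathbb{R}$; this is harmless because the $z$-trajectories issued from $z=0$ remain in the compact interval $[-(T-t),0]$, so $G$ may be saturated outside a compact set without altering the game, and the paper glosses over the identical issue in Appendix~H --- but you should state this fix rather than assert boundedness. Your shortcut for the inclusion $\widetilde{\mathcal{RA}}_0\subseteq\{(t,x):W_\epsilon(t,x,0)\le 0\}$ via $W_\epsilon\le W$ (Theorem~\ref{family_value_functions}) and~\eqref{eq:15} is also valid and is a nice economy that the paper's symmetric argument does not exploit.
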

\begin{proof}
See Appendix~\textcolor{blue}{\hyperref[apx:prop9]{I}}.
\end{proof}

Although \(\widetilde{\mathcal{RA}}_{Q}\) can be exactly recovered from \(W_\epsilon\) when \(Q = 0\), this is not possible for \(Q > 0\). We therefore introduce the notion of the $\epsilon$-approximate soft-constrained reach-avoid set, constructed from $W_\epsilon$, as a proxy for approximating $\widetilde{\mathcal{RA}}_{Q}$. 

For any \(\epsilon>0\) and \(Q\in(0,T]\), the $\epsilon$-approximate soft-constrained reach-avoid set, denoted $\widetilde{\mathcal{RA}}_{Q}^{\epsilon}$, is defined by
\begin{align}
\widetilde{\mathcal{RA}}_{Q}^{\epsilon} 
:=& \Big\{ (t, x) \in [0, T] \times \mathbb{R}^n \;\Big|\; z = Q, \nonumber\\
& \hspace{-6em}W_\epsilon(t, x, z) \leq -\mu\Big( \bigcup_{\delta\in\Delta_t} \bigcap_{\bm{a}\in\mathcal{A}_t} E_{\epsilon, t, x}(\bm{a}, \delta[\bm{a}]) \Big) \Big\}.
\label{eq:soft_reach_avoid_set}
\end{align}
\begin{theorem}\label{Prop_10}
Fix \(Q\in(0,T]\). The family of \(\epsilon\)-approximate soft–constrained reach–avoid sets
\(\{\widetilde{\mathcal{RA}}_{Q}^{\epsilon}\}_{\epsilon>0}\) satisfies:
\begin{enumerate}[label=(\arabic*)]
    \item \(\widetilde{\mathcal{RA}}_{Q}^{\epsilon} \subseteq \widetilde{\mathcal{RA}}_{Q}\) for all \(\epsilon>0\).
    \item As \(\epsilon \to 0^{+}\), \(\widetilde{\mathcal{RA}}_{Q}^{\epsilon}\) converges in measure to \(\widetilde{\mathcal{RA}}_{Q}\); i.e.,
\begin{equation}
\lim_{\epsilon \to 0^{+}}
\mu\bigl(\widetilde{\mathcal{RA}}_{Q}^{\epsilon} \,\Delta\, \widetilde{\mathcal{RA}}_{Q}\bigr)
= 0,
\end{equation}
    where \(\Delta\) denotes the symmetric difference.
\end{enumerate}
\end{theorem}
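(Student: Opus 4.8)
The plan is to handle the two claims in sequence, reducing both to the pointwise error bound in item~5 of Theorem~\ref{family_value_functions} and the characterization~\eqref{eq:15}. Throughout I abbreviate the bound in~\eqref{eq:mu_bound} as $B_\epsilon(t,x):=\mu\bigl(\bigcup_{\delta\in\Delta_t}\bigcap_{\bm a\in\mathcal A_t}E_{\epsilon,t,x}(\bm a,\delta[\bm a])\bigr)$, so that $\lvert W(t,x,z)-W_\epsilon(t,x,z)\rvert\le B_\epsilon(t,x)$, and I recall from items~2--3 that $W_\epsilon\le W$ with $W_\epsilon\uparrow W$ pointwise as $\epsilon\to0^+$. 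The inclusion in~(1) is the quantitative engine that also drives~(2): it certifies that the threshold $-B_\epsilon(t,x)$ in~\eqref{eq:soft_reach_avoid_set} is precisely the safety margin that makes $\widetilde{\mathcal{RA}}_Q^{\epsilon}$ an inner approximation.

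For claim~(1) I would take $(t,x)\in\widetilde{\mathcal{RA}}_Q^{\epsilon}$, so $z=Q$ and $W_\epsilon(t,x,Q)\le-B_\epsilon(t,x)$ by~\eqref{eq:soft_reach_avoid_set}. Adding the one-sided error bound $W(t,x,Q)\le W_\epsilon(t,x,Q)+B_\epsilon(t,x)$ gives $W(t,x,Q)\le-B_\epsilon(t,x)+B_\epsilon(t,x)=0$, and by~\eqref{eq:15} this is exactly membership in $\widetilde{\mathcal{RA}}_Q$. Hence $\widetilde{\mathcal{RA}}_Q^{\epsilon}\subseteq\widetilde{\mathcal{RA}}_Q$ for every $\epsilon>0$.

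For claim~(2), the inclusion from~(1) collapses the symmetric difference to the one-sided remainder $\widetilde{\mathcal{RA}}_Q^{\epsilon}\,\Delta\,\widetilde{\mathcal{RA}}_Q=\widetilde{\mathcal{RA}}_Q\setminus\widetilde{\mathcal{RA}}_Q^{\epsilon}$, so it suffices to bound its measure on a compact domain $D\subset[0,T]\times\mathbb{R}^n$ carrying the sets. I would first record the complementary inclusion $\{(t,x):W(t,x,Q)\le-B_\epsilon(t,x)\}\subseteq\widetilde{\mathcal{RA}}_Q^{\epsilon}$, immediate from $W_\epsilon\le W$, which localizes the remainder to a thin band around the zero level set:
\[
\widetilde{\mathcal{RA}}_Q\setminus\widetilde{\mathcal{RA}}_Q^{\epsilon}\subseteq\bigl\{(t,x):-B_\epsilon(t,x)<W(t,x,Q)\le0\bigr\}.
\]
I would then show this band has vanishing measure by combining a pointwise argument with bounded convergence. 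Assuming $B_\epsilon(t,x)\to0$ as $\epsilon\to0^+$, any point with $W(t,x,Q)<0$ satisfies $W(t,x,Q)<-B_\epsilon(t,x)$ once $B_\epsilon(t,x)<\lvert W(t,x,Q)\rvert$ and hence leaves the band, while points with $W(t,x,Q)>0$ are never in it; thus the band's indicator converges to $0$ off the level set $\{W(\cdot,\cdot,Q)=0\}$. Since this indicator is dominated by $\mathbf 1_D$, bounded convergence yields $\limsup_{\epsilon\to0^+}\mu\bigl(\widetilde{\mathcal{RA}}_Q\setminus\widetilde{\mathcal{RA}}_Q^{\epsilon}\bigr)\le\mu\bigl(\{W(\cdot,\cdot,Q)=0\}\cap D\bigr)$.

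The argument therefore rests on two facts that I expect to be the main obstacles. The first, and harder, is that the zero level set $\{W(\cdot,\cdot,Q)=0\}$ is Lebesgue-null on the slice $z=Q$; this cannot follow from Lipschitz regularity alone, since Lipschitz functions may have fat level sets, and I would instead derive it from the nondegeneracy of the dynamics and the signed-distance costs, which prevent $W(\cdot,\cdot,Q)$ from being locally constant at the value $0$. The second is the claim $B_\epsilon(t,x)\to0$ at the slice: the sets $\bigcup_{\delta}\bigcap_{\bm a}E_{\epsilon,t,x}(\bm a,\delta[\bm a])$ are nested decreasing in $\epsilon$ with measure at most $T-t$, so continuity from above of the Lebesgue measure---exactly the tool used in Theorem~\ref{family_value_functions}---reduces this to verifying that their intersection over $\epsilon$ is null; the union-over-$\delta$/intersection-over-$\bm a$ structure forbids interchanging the limit with the strategy operations, so this nullity must be argued directly from the strict positivity of $c_2$ along trajectories outside $\mathbb{C}_2$. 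Once both facts are secured, the bounded-convergence step closes claim~(2).
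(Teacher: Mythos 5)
Your proposal is correct and follows essentially the same route as the paper's own proof (Appendix~J): part~(1) is verbatim the paper's argument (the threshold $-\mu\bigl(\bigcup_{\delta}\bigcap_{\bm a}E_{\epsilon,t,x}\bigr)$ in \eqref{eq:soft_reach_avoid_set} cancels against the error bound of item~5 of Theorem~\ref{family_value_functions}), and part~(2) localizes the symmetric difference to the band $\{-B_\epsilon<W(\cdot,\cdot,Q)\le 0\}$ exactly as the paper does with its set $\mathcal{S}_\epsilon(Q)$, before passing to the limit (your bounded-convergence step and the paper's continuity-from-above step are interchangeable here). Three calibration remarks. First, your ``harder obstacle''---that the zero level set of $W(\cdot,\cdot,Q)$ be Lebesgue-null---is not proven by the paper either: its proof ends with the same conditional conclusion (``the limit vanishes whenever the zero-level set has measure zero''), and the Remark following Theorem~\ref{Prop_10} explicitly concedes that claim~(2) holds as stated only under this nondegeneracy hypothesis, with any residual discrepancy confined to the level set. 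So on this point your proof is exactly as complete as the paper's, and you need not attempt the nondegeneracy argument you sketch. Second, your other obstacle, $B_\epsilon(t,x)\to 0$ pointwise, is already established inside the paper's proof of Theorem~\ref{family_value_functions} (item~3): after bounding $\mu\bigl(\bigcup_{\delta}\bigcap_{\bm a}E_{\epsilon,t,x}\bigr)\le\mu\bigl(\bigcup_{\delta}E_{\epsilon,t,x}(\bm a,\delta[\bm a])\bigr)$ for a fixed $\bm a$, the union over $\delta$ is a family decreasing to the empty set as $\epsilon\downarrow 0$, and continuity from above gives nullity---so you can cite that result rather than argue it afresh; your worry about interchanging limits with the $\sup/\inf$ structure is dissolved by this one-sided bound. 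Third, the one step you assume but the paper actually proves is the existence of a compact domain $D$ carrying the sets: the paper shows $\widetilde{\mathcal{RA}}_{Q}\subseteq[0,T]\times B_R(0)$ using coercivity of the terminal data ($W_\epsilon(T,x,z)\to+\infty$ as $\|x\|\to\infty$, since the constraint and target sets are compact) together with Lipschitz continuity of $W_\epsilon$ in time. Without this, your dominating function $\mathbf{1}_D$ does not exist and the limit argument has no finite-measure ambient set to run in; you should fill in that short coercivity argument.
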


\begin{proof}
See Appendix~\textcolor{blue}{\hyperref[apx:prop10]{J}}.
\end{proof}

\begin{remark}
The convergence in (2) of Theorem~\ref{Prop_10} holds as stated when the zero-level set of \(W(\cdot,\cdot,Q)\) has measure zero (i.e., is nowhere flat). In general, any residual discrepancy is confined to this zero-level set, and the following holds:
\begin{align}
\lim_{\epsilon\downarrow 0}\mu\bigl(\widetilde{\mathcal{RA}}_{Q}^{\epsilon}\bigr)
&= \mu\bigl(\{(t,x)\in[0,T]\times\mathbb{R}^n : W(t,x,Q)<0\}\bigr) \nonumber\\
&< \mu\bigl(\widetilde{\mathcal{RA}}_{Q}\bigr).
\end{align}
\end{remark}

The proposed method, via the family of Lipschitz–continuous value functions \(\{W_\epsilon\}_{\epsilon>0}\), provides a systematic (\(\epsilon\)-regularized) approximation of the soft-constrained reach–avoid set defined by a discontinuous value function. Proposition~\ref{Prop_9} shows that the approximation is exact when \(Q=0\), whereas for positive budgets (\(Q>0\)) Theorem~\ref{Prop_10} establishes that it is conservative (i.e., safe): every time–state pair \((t,x)\) included in the approximate set satisfies the conditions of Definition~\ref{definition_1}. Moreover, in the positive-budget case, as \(\epsilon\to0^+\) the Lebesgue measure (volume) of the region where the approximate and true sets differ tends to zero.
\section{Numerical Examples}\label{results_section}
The approach-and-landing phase is among the most safety-critical segments of flight, accounting for over 50\% of aircraft accidents~\cite{c2_15}. Motivated by the high risks associated with this phase and the diverse nature of safety constraints in aircraft, we propose computing the reach-avoid set for landing tasks under both hard and soft state constraints. The first example considers a simple one-dimensional point-mass model subject to disturbances. The second example studies a fixed-wing aircraft experiencing a propulsion failure, in the presence of disturbances. Because a go-around is infeasible after a propulsion failure, it is critical to determine, from any state, whether there exists a control strategy that guarantees a safe landing. In this setting, we show that allowing soft-constraint violations substantially enlarges the set of initial conditions (i.e., the reach–avoid set) from which a safe landing can be guaranteed.
\subsection{Safe landing for a one-dimensional point-mass model}
Consider the second-order, time-invariant vertical dynamics
\begin{equation}
    \ddot{y} = u_{1} - g + d_{y},
    \label{eq:14}
\end{equation}
\noindent where \( y \) is the vertical position of the vehicle’s center of mass, \( u_{1}\!\in\! [-60, 60]~\mathrm{m/s}^2 \) is the commanded vertical acceleration due to thrust, \( g\!=\!9.8~\mathrm{m/s}^2 \) is the gravitational acceleration, and \( d_{y}\! \in \![-10, 10]~\mathrm{m/s}^2 \) is a bounded disturbance accounting for unmodeled dynamics and exogenous effects (e.g., wind). Equation~\eqref{eq:14} is a reduced-order model of the 3-DOF planar VTOL (PVTOL) system in~\cite{c2_prime}. Specifically, we neglect roll dynamics by setting the roll angle to zero (\(\phi=0\)) and assuming no coupling between the rolling moment and the lateral acceleration \(\varepsilon=0\). We adopt this simplified model to clearly demonstrate the results in two dimensions.

% \noindent where \( y \) is the vertical position of the vehicle’s center of mass in an inertial frame fixed to the Earth, \( u_{1}\!\in\! [-60, 60]~\mathrm{m/s}^2 \) is the commanded vertical acceleration due to thrust, \( g\!=\!-9.8~\mathrm{m/s}^2 \) is the gravitational acceleration, and \( d_{y}\! \in \![-10, 10]~\mathrm{m/s}^2 \) is a bounded disturbance accounting for unmodeled dynamics and exogenous effects (e.g., wind). Equation~\eqref{eq:14} corresponds to a reduced-order model of the 3-DOF PVTOL system in~\cite{c2_prime}. We adopt this simplified model to reduce the computational cost of reach–avoid set computation and to facilitate interpretation of the results, thus allowing a clearer demonstration of the proposed method.

The task involves landing the one-dimensional point-mass model described by~\eqref{eq:14} from an initial altitude of up to 18 meters, with touchdown occurring within a time horizon of \( T = 1~\mathrm{s} \) and at a vertical speed not exceeding \( 1~\mathrm{m/s} \). The compact target set \(\mathbb{T} := \{(\dot y,y)\in\mathbb{R}^2 \mid \dot y \in [-1,0]~\mathrm{m/s},\; y \in [0,0.7]~\mathrm{m}\}\) encodes this touchdown condition by specifying the permissible vertical position and velocity at landing. To capture operational speed limitations, we assume the vehicle’s design speed range spans from \(-15\) to \(15~\mathrm{m/s}\) for illustrative purposes, representing hard safety limits that must not be violated. To mitigate the effects of significant fatigue loading, which can lead to rapid structural degradation~\cite{c2_15_1}, it is advisable to operate within a safety margin away from the boundaries of the design speed range. Accordingly, a recommended operating range of $-10$ to $10$~m/s is adopted in this study. Temporary excursions beyond the recommended speed limits may be tolerated during extreme events, provided that the design speed limits are not exceeded. These constraints are modeled by the compact sets 
\(\mathbb{C}_1 \) and \( \mathbb{C}_2 \), defined as
\begin{subequations}\label{eq:constraints}
\begin{align}
\mathbb{C}_1 &:= \{(\dot y,y)\in\mathbb{R}^2 \mid \dot y\in[-15,15]~\mathrm{m/s},\; y\in[0,18]~\mathrm{m}\},\\
\mathbb{C}_2 &:= \{(\dot y,y)\in\mathbb{R}^2 \mid \dot y\in[-10,10]~\mathrm{m/s},\; y\in[0,18]~\mathrm{m}\}.
\end{align}
\end{subequations}
\( \mathbb{C}_1 \) encodes the hard constraint on design speed, whereas \( \mathbb{C}_2 \) encodes the soft constraint on recommended speed limits. 

Fig.~\ref{mball1} illustrates the target set \( \mathbb{T}\) (gray with a black boundary), the hard-constraint set \( \mathbb{C}_1 \) (light red), and the soft-constraint set \( \mathbb{C}_2 \) (light brown). The standard reach–avoid set \(\mathcal{RA}\) (see \eqref{eq:9}) is shown with a black dashed boundary. The computed soft-constrained reach–avoid sets for violation-time budgets \(Q\in\{0,\,0.06,\,0.3,\,0.6\}\) are shown with light blue, dark blue, green, and purple boundaries, respectively. We use the HelperOC toolbox~\cite{c_12} to solve~\eqref{eq:12} and~\eqref{visco_solu} for \(t\in[0,T]\) on the compact state domains \(D\) and \(D\times[0,T]\), respectively, where
\(D := \{(\dot y,y)\in\mathbb{R}^2 \mid \dot y\in[-20,20]~\mathrm{m/s},\, y\in[-5,20]~\mathrm{m}\}\), thereby yielding the value functions \(V\) and \(W_\epsilon\). \(V\) is computed on a two-dimensional state-space grid over \((\dot y, y)\); its zero-sublevel set yields the standard reach–avoid set.
Analogously, \(W_\epsilon\) (with \(\epsilon=10^{-3}\)) is computed on a three-dimensional state-space grid over \((\dot y, y, z)\); for a prescribed budget \(Q\), the slice \(z=Q\) provides the corresponding soft-constrained reach–avoid set. All sets in Fig.~\ref{mball1} are reported at the initial time \(t=0\); accordingly, they are shown over the state pair \((\dot y, y)\) only (no explicit time coordinate). To illustrate the practical implications of the computed soft-constrained reach-avoid sets, three representative initial conditions are selected at the boundaries of the sets for $Q = 0$ (light blue dot), $Q = 0.3$ (green dot), and $Q = 0.6$ (purple dot), as shown in Fig.~\ref{mball1}. For each initial condition, the allocated budget equals that of its respective set. The dashed curves, shown in matching colors, depict the resulting trajectories under worst-case disturbances, generated by the state-feedback policy induced by the gradient of \(W_{\epsilon}\).
\begin{figure}[!t]
\centerline{\includegraphics[width=\columnwidth]{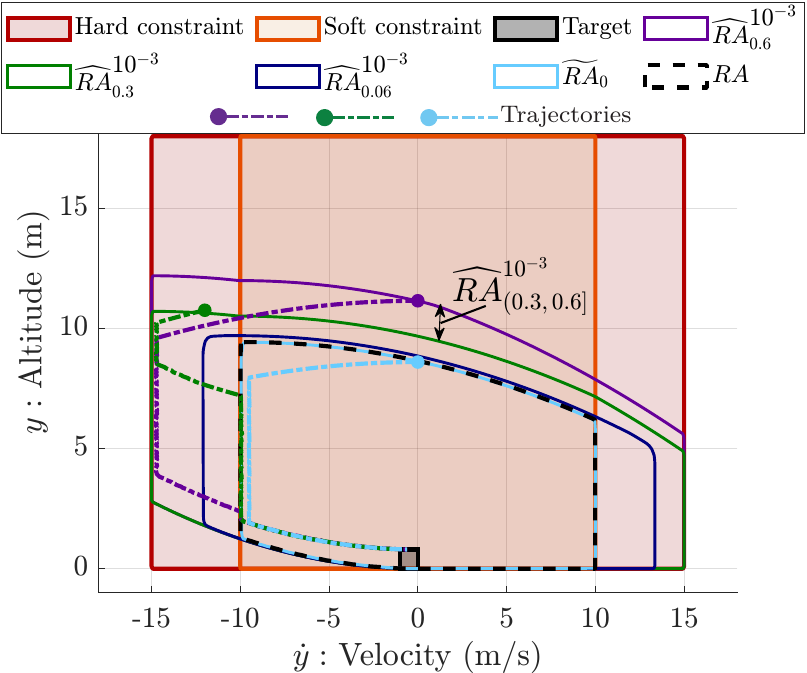}}\caption{Computed soft-constrained reach-avoid sets for $Q \in \{
\textcolor[rgb]{0.4,0.8,1.0}{0},\,
\textcolor[rgb]{0.0,0.0,0.5}{0.06},\,
\textcolor[rgb]{0.0,0.5,0.0}{0.3},\,
\textcolor[rgb]{0.4,0.0,0.6}{0.6}
\}$for the point-mass model. The sets expand as $Q$ increases, with the set at $Q=0$ coinciding with the 
classical reach-avoid set (\(RA\)) computed using the standard HJ framework (Eqs.~\eqref{eq:9} and \eqref{eq:11}). Representative initial conditions 
\textcolor[rgb]{0.4,0.8,1.0}{\(\bullet\)}, 
\textcolor[rgb]{0.0,0.5,0.0}{\(\bullet\)}, 
and \textcolor[rgb]{0.4,0.0,0.6}{\(\bullet\)}
illustrate the validity of the sets: under worst-case disturbances, the feedback controller derived from \(W_{\epsilon}\) generates trajectories (see Fig.~\ref{mball5} for the corresponding budget depletion curves) that satisfy the soft-constrained reach-avoid specifications in Definition~\ref{definition_1}, with budgets not exceeding that of their respective sets. With less budget, the green trajectory leaves the hard constraint boundary earlier than the purple trajectory. Initial conditions whose minimum violation-time budget falls within \((0.3, 0.6]\) are contained in the $Q=0.6$ set, but not in the $Q=0.3$ set.} 
    \label{mball1}
\end{figure}

Consistent with Proposition~\ref{prop5_soft}, Fig.~\ref{mball1} shows that the standard reach--avoid set coincides with the soft-constrained set at zero budget, i.e.,
\(\widetilde{\mathcal{RA}}_{0}=\mathcal{RA}\). To quantify the discrepancy between \( \widetilde{\mathcal{RA}}_{0} \) and \( \mathcal{RA} \), and to assess how grid resolution affects solution accuracy, we compute both sets on progressively finer grids and compare the results. Specifically, \( \mathcal{RA} \) is computed on a 2-D grid of size \( N \! \times \!N \) over \( (\dot{y},y) \), whereas \( \widetilde{\mathcal{RA}}_{0} \) is extracted from the value function \( W_{\epsilon} \), computed on a 3-D grid of size \( N \!\times \!N \times \!200 \) over \( (\dot{y},y,z) \). We vary \( N \) from coarse to fine over \( \{50,\ 100,\ 150,\ 200,\ 300 \} \). For each case, a signed distance function is constructed whose subzero level set corresponds to \( \widetilde{\mathcal{RA}}_{0} \). Then, 50{,}000 points are uniformly sampled from the boundary of \( \mathcal{RA} \), and their distances to the boundary of \( \widetilde{\mathcal{RA}}_{0} \) are evaluated using the constructed function. The mean and maximum absolute signed distances across the sampled points are computed to quantify the discrepancy between \( \widetilde{\mathcal{RA}}_{0} \) and \( \mathcal{RA} \). Fig.~\ref{mball2} shows a logarithmic plot of the mean and maximum boundary errors and the grid spacing \( h = \tfrac{1}{N-1}\sqrt{(\Delta\dot y/40\,\text{m/s})^2 + (\Delta y/25\,\text{m})^2} \) versus the number of grid points $N$, with all quantities expressed in normalized units. With finer grid resolution, both errors decrease. At $N = 300$, they fall below the grid spacing, with mean and maximum errors of 0.0006 and 0.0017, respectively. This resolution is used to compute the sets in Fig.~\ref{mball1} and in all subsequent results for this example.
\begin{figure}[!t]
\centerline{\includegraphics[width=1.0\columnwidth]{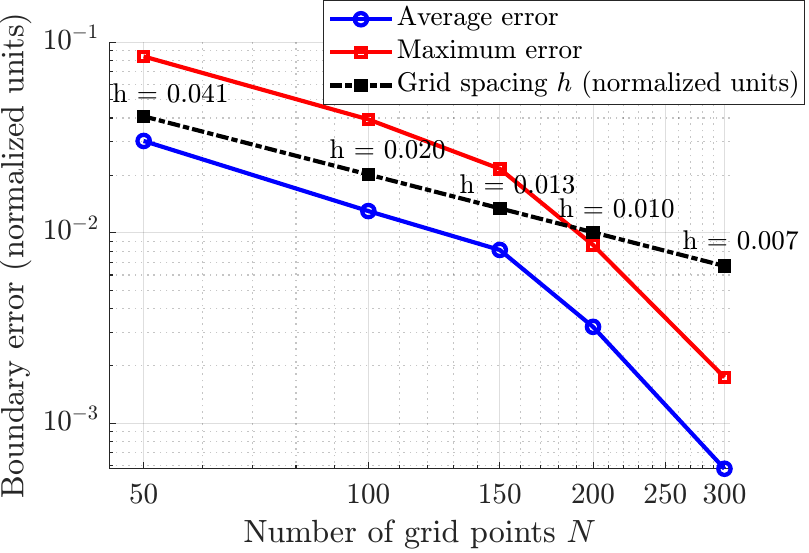}} 
\caption{Boundary error between the soft-constrained ($\widetilde{\mathcal{RA}}_{0}$) and classical (\(RA\),~\eqref{eq:9}) reach-avoid sets, computed using an $N \times N$ discretization of the $(\dot{y},y)$-plane. As $N$ increases, the maximum and average errors drop below the grid spacing (\(h\)), demonstrating convergence.}
\label{mball2}
 \end{figure}

A direct numerical implementation of Theorem~\ref{Prop_10} to obtain an under-approximation of $\widetilde{\mathcal{RA}}_{Q}$ for any nonzero budget would be computationally expensive, as it requires evaluating \( \mu\bigl(\!\mathop{\bigcup}\limits_{\delta \in \Delta_t}\!\mathop{\bigcap}\limits_{\bm{a} \in \mathcal{A}_t}\!E_{\epsilon, t, x}(\bm{a},\, \delta[\bm{a}])\bigr)\) at every grid node on the \( 300 \times 300 \) mesh. To circumvent this challenge, we leverage the convergence-in-measure property, \(\mu(\{|W_\epsilon - W| > \epsilon\}) \to 0\) as \(\epsilon \to 0^{+}\), to construct a computationally tractable proxy, which we use to compute the nonzero-budget sets shown in Fig.~\ref{mball1}. More precisely, on the compact domain \(
\tilde D := [0,T]\!\times\!D\times\![0,T]\), Theorem~\ref{family_value_functions} implies that for every \(\eta>0\) there exists \(\epsilon^\star>0\) such that, for all \(\epsilon \leq \epsilon^*\), the following holds:
\begin{equation}
\mu\bigl(\{(t,(\dot{y}, y), z)\in\tilde D \mid |W_\epsilon-W|>\eta\}\bigr) < \eta.
\end{equation}
Fixing \(\eta = 10^{-3}\), it follows that for all \((t, (\dot{y}, y), z) \in \tilde D\), except on a set of measure less than \(10^{-3}\), 
\begin{equation}
W(t, (\dot{y}, y), z) \leq W_{\epsilon^*}(t, (\dot{y}, y), z) + 10^{-3}.
\label{eq_1}
\end{equation}
Given \(\mu(\tilde D)=1000\), this set occupies less than \(10^{-4}\%\) of the domain volume—a proportion that, for all practical purposes, renders its effect on the computed sets negligible. Hence, \eqref{eq_1} is assumed to hold throughout \(\tilde D\). Consequently, for any \(\epsilon\leq\epsilon^{*}\), we obtain a conservative under-approximation of the true soft-constrained reach–avoid set, \(
\widehat{\mathcal{RA}}_{Q}^{\epsilon}
:= \bigl\{(t,(\dot{y}, y)) \mid W_{\epsilon}(t,(\dot{y}, y),Q) + 10^{-3} \le 0\bigr\}
\subseteq \widetilde{\mathcal{RA}}_{Q}\) on \(D\), except possibly on a set of negligible measure. 

To find \(\epsilon^*\), we consider a strictly decreasing sequence \((\epsilon_k)_{k \in \mathbb{N}_{\geq 0}} \subset \mathbb{R}_{\geq 0}\) and the corresponding sets, \(
\widehat{\mathcal{RA}}_{Q}^{\epsilon_k}.\)
Since the sequence \(\{W_{\epsilon_k}(t,(\dot{y}, y),Q) + 10^{-3}\}_{k \in \mathbb{N}_{\geq 0}}\) forms a Cauchy sequence in measure, the symmetric difference between consecutive sets becomes eventually small. In particular, defining
\begin{equation}
d_{Q,k} := \mu\bigl( \widehat{\mathcal{RA}}_{Q}^{\epsilon_{k}} 
  \,\triangle\, 
  \widehat{\mathcal{RA}}_{Q}^{\epsilon_{k+1}} \bigr),
\end{equation}
we have \(d_{Q,k}\!\le\!2\!\cdot\!10^{-3}\) for all sufficiently large \(k\)\footnote{Let \(A:=\widetilde{\mathcal{RA}}_{Q}\) and 
\(\widehat A_\epsilon:=\widehat{\mathcal{RA}}_{Q}^{\epsilon}\). By the triangle inequality:
\(\mu(\widehat A_{\epsilon_k}\triangle \widehat A_{\epsilon_{k+1}})
\le \mu(\widehat A_{\epsilon_k}\triangle A)+\mu(A\triangle \widehat A_{\epsilon_{k+1}})\).
From \eqref{eq_1}, for all \(\epsilon\le \epsilon^\star\), we have
\(\widehat A_\epsilon \subseteq A\) except on a set of measure \(<10^{-3}\),
hence \(\mu(\widehat A_\epsilon\triangle A)<10^{-3}\).
Therefore, for all sufficiently large \(k\) (so that \(\epsilon_k,\epsilon_{k+1}\le\epsilon^\star\)),
\(
\mu(\widehat A_{\epsilon_k}\triangle \widehat A_{\epsilon_{k+1}})
\le \mu(\widehat A_{\epsilon_k}\triangle A)+\mu(A\triangle \widehat A_{\epsilon_{k+1}})
< 2\cdot 10^{-3}.
\)  
}; that is, there exists \(N\!\in\!\mathbb{N}_{\geq 0}\) such that \(\forall k\!>\!N\), \(\epsilon_k\!<\!\epsilon^*\) and \(d_{Q,k}\!\le\!2\cdot\!10^{-3}\).
Fig.~\ref{mball3&4} shows that for $\epsilon_k \leq 1$, 
the sequence $\bigl(\widehat{\mathcal{RA}}_{Q}^{\epsilon_k}\bigr)_{k \in \mathbb{N}_{\geq 0}}$ converges for $Q \in \{0.06,\,0.3,\,0.6\}$, with no significant change observed as $\epsilon_k$ decreases further. This empirical convergence suggests that \(\epsilon_k\) has fallen below the theoretical threshold \(\epsilon^{*}\) (i.e., \(1<\epsilon^{*}\)), beyond which further reductions in \(\epsilon_k\) yield negligible improvement in the approximation. Accordingly, the soft-constrained reach–avoid sets at budgets \(Q\in\{0.06,0.3,0.6\}\)—shown in Fig.~\ref{mball1} in dark blue, green, and purple, respectively—are computed with \(\epsilon=10^{-3}\), which is well below \(\epsilon^\star\). As \(Q\) increases, the sets expand monotonically, reflecting the greater tolerance for constraint violations.
\begin{figure}[!t]
    \centering
    \includegraphics[width=0.87\columnwidth]{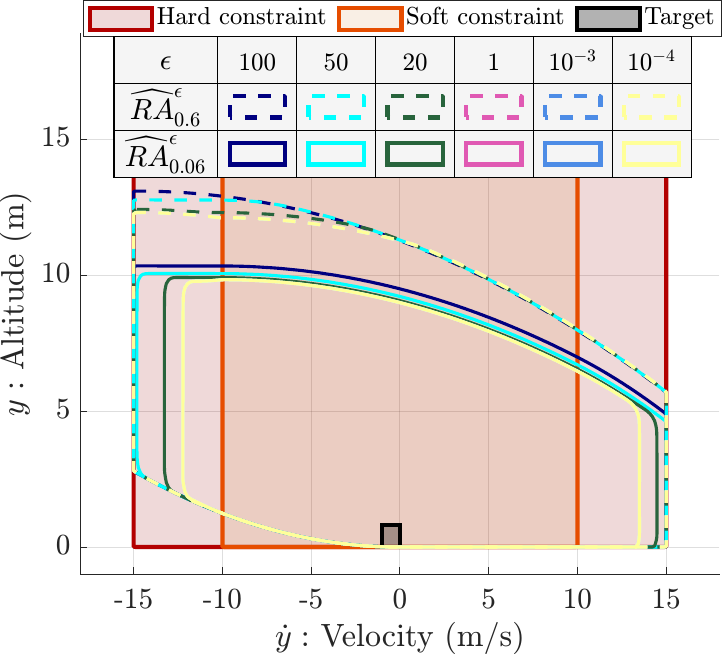}\\[-1ex]
    {\fontsize{8}{8}\selectfont\text{(a)}} \\[1ex]
    
    \includegraphics[width=0.6\columnwidth]{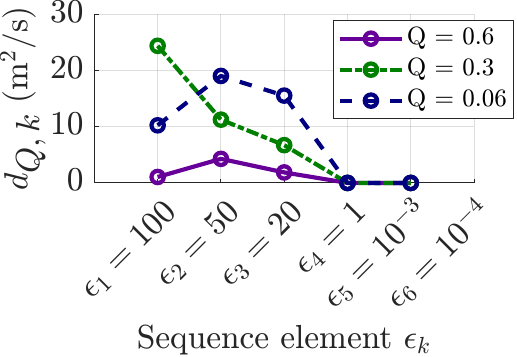}\\[-1ex]
    {\fontsize{8}{8}\selectfont\text{(b)}}
    
    \caption{(a)~Convergence of the approximate soft-constrained reach-avoid set with respect to $\epsilon$. For $\epsilon \leq 1$, the sets become indistinguishable, suggesting that the theoretical threshold has been reached ($1 \leq \epsilon^{*}$). 
    (b)~Convergence in (a) is quantified by the Lebesgue measure of the difference between two consecutive iterates, which is zero for $\epsilon \leq 1$, confirming the conclusion in (a).}
    \label{mball3&4}
\end{figure}

The set \(\widetilde{\mathcal{RA}}_{(0.3, 0.6]}\) consists of initial conditions whose minimum violation-time budget lies in \((0.3, 0.6]\), and is approximated in computation by
\begin{equation}
\widehat{\mathcal{RA}}_{(0.3, 0.6]}^{10^{-3}} 
= \widehat{\mathcal{RA}}_{Q=0.6}^{10^{-3}} 
\cap \bigl( \widehat{\mathcal{RA}}_{Q=0.3}^{10^{-3}} \bigr)^{C}.
\end{equation}
In Fig.~\ref{mball1}, this corresponds to the portion of the set with the purple boundary (\(Q = 0.6\)) that lies outside the set with the green boundary (\(Q = 0.3\)). As the lower bound of the budget interval increases continuously from \(0.3\) toward \(0.6\), this region progressively shrinks until it collapses onto the boundary of the \(Q=0.6\) set. Every state on this curve requires exactly \(0.6\,\mathrm{s}\) of violation time (i.e., \(Q_{\min}(x)=0.6\); see~\eqref{eq:Q_min_definition})
; any smaller budget renders the task infeasible from these states. To make this aspect concrete and illustrate the practical implications of the computed soft-constrained reach-avoid sets, three representative initial conditions are selected at the boundaries of the sets for $Q = 0$, $Q = 0.3$, and $Q = 0.6$, as shown in Fig.~\ref{mball1}. For each initial condition, the allocated budget equals that of its respective set. Under worst-case disturbances, the feedback policy derived from the gradient of $W_{\epsilon}$ drives the system to the target without violating the design-speed limits, while ensuring that violations of the recommended-speed limits remain within the allocated budget. The resulting trajectories and corresponding budget-depletion curves over the landing horizon ($T = 1~\text{s}$) are shown in Figs.~\ref{mball1} and~\ref{mball5}. Due to its smaller budget, the green trajectory departs the hard constraint boundary earlier than the purple trajectory, whereas the light blue trajectory remains entirely within the soft constraint region. In all cases, each trajectory exhausts its allocated budget.
\begin{figure}[!t]
\centerline{\includegraphics[width=0.5\columnwidth]{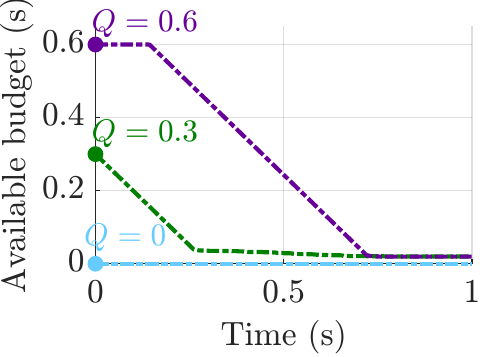}}
\caption{\!\!\!Residual budget along the trajectories of Fig.~\ref{mball1}. Since the initial conditions lie on the boundaries of the sets for $Q \in \{
\textcolor[rgb]{0.4,0.8,1.0}{0},\,
\textcolor[rgb]{0.0,0.5,0.0}{0.3},\,
\textcolor[rgb]{0.4,0.0,0.6}{0.6}
\}$, each trajectory fully expends its allocated budget.}
\label{mball5}
\end{figure}
\subsection{Emergency Landing of a Fixed-Wing Aircraft}

During landing of fixed-wing aircraft the minimum speed, \( V_{\min} \), is set by the stall boundary, whereas the maximum speed, \( V_{\max} \), is determined by structural and aerodynamic design considerations. For practical safety considerations, buffers are maintained above the stall speed and below \( V_{\max} \) — the former recommended by the Federal Aviation Administration~\cite{FAR1990}, and the latter to reduce excessive fatigue loading caused by operating near structural limits. 

Consider the nonlinear three-state longitudinal dynamics of a fixed-wing aircraft under propulsion failure~\cite{c_2_16}:
\begin{equation}
\begin{bmatrix}
    \dot{h} \\
    \dot{V} \\
    \dot{\gamma}
\end{bmatrix}
=
\frac{1}{m}
\begin{bmatrix}
    mV\sin(\gamma) \\
    -D(\alpha, V) - m g \sin(\gamma) +  F_{\text{wind}} \\
    \frac{L(\alpha, V)}{V} - \frac{m g}{V} \cos(\gamma)
\end{bmatrix}.
\label{eq:20}
\end{equation}
\noindent \(h\) denotes the vertical position, \(V\) the airspeed, and \(\gamma\) the flight-path angle. The gravitational acceleration is \( g = 9.8~\mathrm{m/s^2} \), and the lift and drag forces are
\begin{equation}
L(\alpha, V)\!=\!\tfrac{1}{2}\rho S V^2 C_L(\alpha), \quad
D(\alpha, V)\!=\!\tfrac{1}{2}\rho S V^2 C_D(\alpha).
\label{eq:21}
\end{equation}
\noindent The variable \(\alpha \in [0, 13]^\circ\) denotes the angle of attack, \(\rho = 1.225~\mathrm{kg/m^{3}}\) is the air density, and \(S = 112~\mathrm{m^2}\) is the wing area. The functions \(C_{L}(\alpha)\) and \(C_{D}(\alpha)\) are the dimensionless lift and drag coefficients, respectively. The aircraft has a mass of \(m = 60{,}000~\mathrm{kg}\), and the term \(F_{\text{wind}} \in [-10{,}000, 10{,}000]~\mathrm{N}\) represents wind-induced force. The model assumes that the pilot directly controls the angle of attack. All parameters are based on the DC9-30 aircraft~\cite{c17}.

The flight path angle is constrained to \([-3^\circ, 0^\circ]\) to ensure a monotonic descent within a prescribed corridor that avoids overly steep trajectories~\cite{c17}. The stall and maximum descent speeds are \(61~\mathrm{m/s}\) and \(83~\mathrm{m/s}\), respectively~\cite{c17}; applying a \(5~\mathrm{m/s}\) safety margin at both ends yields an operational range of \([66, 78]~\mathrm{m/s}\). The emergency descent begins from an altitude in \([0, 40]~\mathrm{m}\), with touchdown at \(h = 0\). To prevent excessive impact forces that could damage the landing gear, the vertical speed at touchdown must not exceed \(0.91~\mathrm{m/s}\).

Fig.~\ref{mball6&7&8} depicts the target set in gray (Landing zone), the hard constraint set in light red, the soft constraint set in light orange, and the computed soft-constrained reach-avoid sets in light purple (Verified safe region) for $Q = 0$, $5$, and $10$. The terminal time is set to \(T = 10\) seconds, which was determined empirically to be sufficient for the value function to converge to a fixed point during the backward reachability computation.

As \(Q\) increases, the safe region expands, illustrating the additional flexibility afforded by allowing greater violations of the recommended speed limits. Above roughly 22 meters, a safe landing is possible only if the soft constraint is violated, highlighting the critical role of soft constraints in enabling feasible solutions under stringent operational conditions.

To illustrate the computed sets, a representative initial condition was selected from each set, and a trajectory was simulated from it using the system dynamics in~\eqref{eq:20}, under optimal control and worst-case disturbance. Each initial condition was assigned a violation-time budget not exceeding its corresponding set-level budget \(Q\). All resulting trajectories successfully reached the target while respecting the assigned budgets and remaining within the hard constraint bounds. The selected initial conditions, simulated trajectories, and budget usage over time are shown in Figs.~\ref{mball6&7&8} and~\ref{mball9}.
\begin{figure*}[!t]
\centering
\begin{minipage}[t]{0.329\textwidth}
  \centering
  \includegraphics[width=\linewidth]{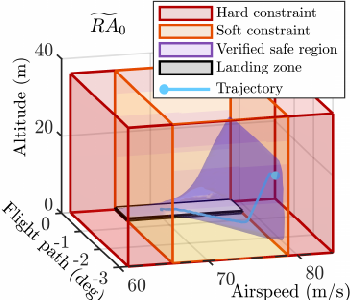}\\[-2pt]
  {\scriptsize (a)}
\end{minipage}\hfill
\begin{minipage}[t]{0.329\textwidth}
  \centering
  \includegraphics[width=\linewidth]{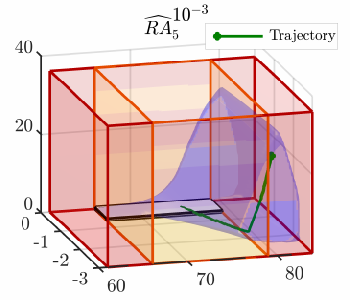}\\[-2pt]
  {\scriptsize (b)}
\end{minipage}\hfill
\begin{minipage}[t]{0.329\textwidth}
  \centering
  \includegraphics[width=\linewidth]{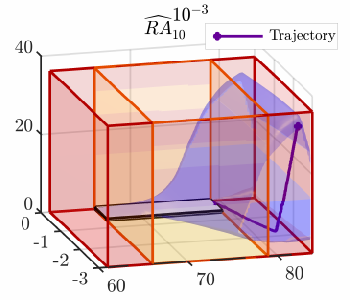}\\[-2pt]
  {\scriptsize (c)}
\end{minipage}

\caption{Computed soft-constrained reach-avoid sets for $Q \in \{\textcolor[rgb]{0.4,0.8,1.0}{0},\,\textcolor[rgb]{0.0,0.5,0.0}{5},\,\textcolor[rgb]{0.4,0.0,0.6}{10}\}$ for the fixed-wing aircraft model under propulsion failure. The sets expand as $Q$ increases, with safe landing above 22 meters only possible via soft-constraint violation. Representative initial conditions 
\textcolor[rgb]{0.4,0.8,1.0}{\(\bullet\)}, 
\textcolor[rgb]{0.0,0.5,0.0}{\(\bullet\)}, 
and \textcolor[rgb]{0.4,0.0,0.6}{\(\bullet\)} are used to validate the sets: under worst-case disturbances, the feedback controller derived from $W_{\epsilon}$ generates trajectories (see Fig.~\ref{mball9} for the corresponding budget depletion curves) that satisfy the soft-constrained reach-avoid specifications in Definition~\ref{definition_1}, with budget usage not exceeding that of their respective sets.}
\label{mball6&7&8}
\end{figure*}

\begin{figure}[!t]
\centerline{\includegraphics[width=0.5\columnwidth]{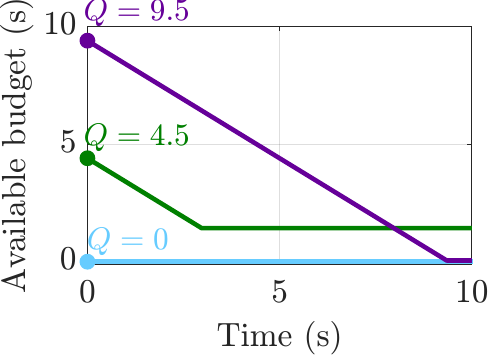}}
\caption{\!\!\!Residual budget along the trajectories of Fig.~\ref{mball6&7&8}. Each trajectory expends at most its allocated budget, which is no greater than the budget of the set containing it.}
\label{mball9}
\end{figure}

\section{Conclusion}
This paper introduces a novel class of reachability problems, termed soft-constrained reachability, which generalizes classical formulations for the verification of safety-critical systems subject to constraints whose violations are tolerable within specified limits. Allowing such violations is not only essential but, under conflicting performance and safety demands, often unavoidable. A representative problem in this class, which is the focus of this study, identifies the set of initial conditions from which the system can robustly reach a target set, satisfy hard constraints at all times, and ensure cumulative soft constraint violations remain within a user-specified budget. A framework based on HJ reachability analysis is developed to compute this set. Importantly, this formulation generalizes the classical reach-avoid problem, recovering it as a special case when the soft constraint budget is set to zero.

By enabling formal safety guarantees under heterogeneous safety constraint types—including both hard and soft constraints—this work addresses a fundamental limitation in classical reachability analysis. As a result, it significantly broadens the applicability of reachability methods to practical, real-world safety-critical systems, where rigid constraint enforcement is frequently infeasible due to conflicting operational demands. Future work will extend this framework to accommodate different types of soft constraints as well as alternative violation metrics. Furthermore, scalable computational methods suitable for high-dimensional systems will be explored.

\appendices
\section*{Appendix} 
\phantomsection 
\label{apx:prop2}
\noindent{\textbf{A.\;Proof of Proposition~\ref{prop5_soft}}}:
\textit{Step 1:} We show that \(\widetilde{\mathcal{RA}}_0 \subseteq \mathcal{RA}\). Let \( Q = 0 \) and assume \((t, x) \in \widetilde{\mathcal{RA}}_{0}\). 
By the definition of \(\widetilde{\mathcal{RA}}_{0}\), \(\forall \delta \in \Delta_t\), \(\exists \bm{\alpha} \in \mathcal{A}_t\), \(\exists \tau_{\bm{\alpha}} \in [t, T]\) such that the trajectory \(\phi^{\bm{\alpha}, \delta[\bm{\alpha}]}_{t,x}(\cdot)\) satisfies the following three conditions simultaneously:  
1) \(\phi^{\bm{\alpha}, \delta[\bm{\alpha}]}_{t,x}(\tau_{\bm{\alpha}}) \in \mathbb{T}_{\tau_{\bm{\alpha}}} \cap \mathbb{C}_{2,\tau_{\bm{\alpha}}}\);  
2) \(\forall s \in [t,\tau_{\bm{\alpha}}],\ \phi^{\bm{\alpha},\delta[\bm{\alpha}]}_{t,x}(s) \in \mathbb{C}_{1,s}\); and  
3) \(\int_t^{\tau_{\bm{\alpha}}} \mathbf{1}_{\mathbb{C}_{2,s}^{C}}(\phi^{\bm{\alpha},\delta[\bm{\alpha}]}_{t,x}(s))\, ds \leq 0\).
Since \(\mathbb{C}_{2,s}^{C}\) is an open set and 
\(\phi^{\bm{\alpha}, \delta[\bm{\alpha}]}_{t,x}(s)\) 
is continuous, the mapping
\(
s \mapsto \mathbf{1}_{\mathbb{C}_{2,s}^{C}}\bigl(\phi^{\bm{\alpha}, \delta[\bm{\alpha}]}_{t,x}(s)\bigr)
\)
is a measurable function. Moreover, because the indicator function is nonnegative, the Lebesgue integral
\(
F(\tau_{\bm{\alpha}}) := \int_t^{\tau_{\bm{\alpha}}} \mathbf{1}_{\mathbb{C}_{2,s}^{C}}\bigl(\phi^{\bm{\alpha}, \delta[\bm{\alpha}]}_{t,x}(s)\bigr) \, ds
\)
is well-defined and nonnegative. Since \( F(\tau_{\bm{\alpha}}) \leq 0 \), it must follow that
\(
F(\tau_{\bm{\alpha}}) = 0.
\) This equality implies that
\(
\mathbf{1}_{\mathbb{C}_{2,s}^{C}}\bigl(\phi^{\bm{\alpha}, \delta[\bm{\alpha}]}_{t,x}(s)\bigr) = 0 \quad \text{for almost every } s \in [t,\tau_{\bm{\alpha}}].
\)
Thus, for almost every \(s \in [t,\tau_{\bm{\alpha}}]\),
\(
\phi^{\bm{\alpha}, \delta[\bm{\alpha}]}_{t,x}(s) \in \mathbb{C}_{2,s}.
\)
This in turn implies that
\(
c_2\bigl(s,\phi^{\bm{\alpha}, \delta[\bm{\alpha}]}_{t,x}(s)\bigr) \le 0 \quad \text{for almost every } s \in [t,\tau_{\bm{\alpha}}].
\)
Since \(c_2(t,x)\) is continuous on \([0,T] \times \mathbb{R}^n\) and \(\phi^{\bm{\alpha}, \delta[\bm{\alpha}]}_{t,x}(s)\) is continuous in \(s\) on \([t,T]\), it follows that \( c_2\bigl(s,\phi^{\bm{\alpha}, \delta[\bm{\alpha}]}_{t,x}(s)\bigr) \le 0~\text{for all}~{s \in [t,\tau_{\bm{\alpha}}]}\). This in turn implies that
\(
\max_{s \in [t,\tau_{\bm{\alpha}}]} c_2\bigl(s,\phi^{\bm{\alpha}, \delta[\bm{\alpha}]}_{t,x}(s)\bigr) \le 0.
\)
\noindent Therefore, it can be concluded that \(\forall \delta \in \Delta_t\), \(\exists \bm{\alpha} \in \mathcal{A}_{t}\), \(\exists \tau_{\bm{\alpha}} \in [t, T]\) such that the trajectory \(\phi^{\bm{\alpha}, \delta[\bm{\alpha}]}_{t,x}(\cdot)\) satisfies the following conditions:  
1) \(\phi^{\bm{\alpha}, \delta[\bm{\alpha}]}_{t,x}(\tau_{\bm{\delta}}) \in \mathbb{T}_{\tau_{\bm{\delta}}}\);  
2) \(\forall s \in [t, \tau_{\bm{\delta}}],~ \phi^{\bm{\alpha}, \delta[\bm{\alpha}]}_{t,x}(s) \in \mathbb{C}_{1,s} \cap \mathbb{C}_{2,s}\).
\noindent Hence, \( (t, x) \in {\mathcal{RA}} \), which implies that 
\(
\widetilde{\mathcal{RA}}_{0} \subseteq \mathcal{RA}
\). \textit{Step 2:} We show that \(\mathcal{RA} \subseteq \widetilde{\mathcal{RA}}_0\). Assume \((t,x) \in \mathcal{RA}\). Then, by definition of \(\mathcal{RA}\), \(V(t,x) \le 0\). Hence, for every \(\delta \in \Delta_t\), there exists \(\bm{\bar{\alpha}} \in \mathcal{A}_t\) and \(\tau_{\bm{\bar{\alpha}}} \in [t,T]\) such that the trajectory \(\phi^{\bm{\bar{\alpha}},\delta[\bm{\bar{\alpha}}]}_{t,x}(\cdot)\) satisfies all the following conditions:  
1) \(\phi^{\bm{\bar{\alpha}},\delta[\bm{\bar{\alpha}}]}_{t,x}(\tau_{\bm{\bar{\alpha}}}) \in \mathbb{T}_{\tau_{\bm{\bar{\alpha}}}}\);  
2) \(\forall\, s \in [t,\tau_{\bm{\bar{\alpha}}}],~ \phi^{\bm{\bar{\alpha}},\delta[\bm{\bar{\alpha}}]}_{t,x}(s) \in \mathbb{C}_{1,s} \cap \mathbb{C}_{2,s}\).
Since \(\phi^{\bm{\bar{\alpha}},\delta[\bm{\bar{\alpha}}]}_{t,x}(s) \in \mathbb{C}_{2,s}\) for all \(s \in [t,\tau_{\bm{\bar{\alpha}}}]\), the indicator function 
\(
\mathbf{1}_{\mathbb{C}_{2,s}^{C}}\bigl(\phi^{\bm{\bar{\alpha}},\delta[\bm{\bar{\alpha}}]}_{t,x}(s)\bigr)
\)
is zero for all \(s \in [t,\tau_{\bm{\bar{\alpha}}}]\). Consequently,
\(
\int_t^{\tau_{\bm{\bar{\alpha}}}} \mathbf{1}_{\mathbb{C}_{2,s}^{C}}\bigl(\phi^{\bm{\bar{\alpha}},\delta[\bm{\bar{\alpha}}]}_{t,x}(s)\bigr)\,ds = 0.
\)
Thus, \(\phi^{\bm{\bar{\alpha}},\delta[\bm{\bar{\alpha}}]}_{t,x}(\cdot)\) satisfies the following conditions:  
1) \(\phi^{\bm{\bar{\alpha}},\delta[\bm{\bar{\alpha}}]}_{t,x}(\tau_{\bm{\bar{\alpha}}}) \in \mathbb{T}_{\tau_{\bm{\bar{\alpha}}}} \cap \mathbb{C}_{2,\tau_{\bm{\bar{\alpha}}}}\);  
2) \(\forall\, s \in [t,\tau_{\bm{\bar{\alpha}}}],~ \phi^{\bm{\bar{\alpha}},\delta[\bm{\bar{\alpha}}]}_{t,x}(s) \in \mathbb{C}_{1,s}\);  
3) \(\int_t^{\tau_{\bm{\bar{\alpha}}}} \mathbf{1}_{\mathbb{C}_{2,s}^{C}}\bigl(\phi^{\bm{\bar{\alpha}},\delta[\bm{\bar{\alpha}}]}_{t,x}(s)\bigr)\,ds \le 0\).  
Hence, \((t,x) \in \widetilde{\mathcal{RA}}_0\), which implies \(\widetilde{\mathcal{RA}}_0 \supseteq \mathcal{RA}\).
\noindent Combining both cases shows that \( \mathcal{RA} = \widetilde{\mathcal{RA}}_{0} \). \hfill$\blacksquare$\
\vspace{0.2em}
\phantomsection           
\label{apx:prop4}               
\noindent{\textbf{B.\;Proof of Proposition~\ref{prop:RA_{Q} characterization}}}:
We prove that:(i) \((t,x,Q)\!\in\!\mathcal{S}\!\Rightarrow\! W(t,x,Q)\le 0\); 
(ii) \(W(t,x,Q)\!\le\!0\!\Rightarrow\!(t,x,Q)\in\mathcal{S}\).\\
\textit{(i)} Suppose, for contradiction, that \((t, x, Q)\!\in\!\mathcal{S}\) and \(W(t,x,Q)>0\). Define the function \(C\) as
\begingroup            
\setlength{\abovedisplayskip}{0.3ex}     
\setlength{\abovedisplayshortskip}{0pt}
\setlength{\belowdisplayskip}{0.3ex}   
\setlength{\belowdisplayshortskip}{0pt}
\begin{equation*}
\begin{aligned}[t]
& \hspace{0.1em} C(t,x,Q,\tau,\delta,\bm{a}) \mathrel{:=} \max\Big\{\max_{s\in[t,\tau]} c_1(s,\phi^{\bm{a},\delta[\bm{a}]}_{t,x}(s)), \\ 
&\hspace{0.1em} c_2(\tau,\phi^{\bm{a},\delta[\bm{a}]}_{t,x}(\tau)), \int_t^\tau\!\!\displaystyle \!\!\mathord{\mathbf{1}}_{\mathbb{C}_{2,s}^{C}}(\phi^{\bm{a},\delta[\bm{a}]}_{t,x}(s))\,ds\!-\!Q, g(\tau, \phi^{\bm{a},\delta[\bm{a}]}_{t,x}(\tau))\!\Big\}
\end{aligned}
\end{equation*}
\endgroup
Therefore, \(W\) can be re-written as \\
\(\begin{aligned}
W(t,x,Q) &= \sup_{\delta \in \Delta_t} \inf_{\bm{a} \in \mathcal{A}_t} \min_{\tau \in [t,T]} C(t,x,Q,\tau,\delta,\bm{a}) \\[0.1ex]
\end{aligned}\). \\
Since \(W(t,x,Q)>0\), there exist \(\epsilon>0\) and \(\bar\delta\in\Delta_t\) such that \(C(t,x,Q,\tau,\bar\delta,\bm a)>\epsilon\) for all \(\bm a\in\mathcal A_t\) and all \(\tau\in[t,T]\).\\
\noindent On the other hand, because \((t,x,Q)\in\mathcal{S}\), for every
\(\delta\in\Delta_t\) there exist
\(\bm a_\delta\in\mathcal{A}_t\) and \(\tau_\delta\in[t,T]\) such that  
(1) \(\phi^{\bm a_\delta,\delta[\bm a_\delta]}_{t,x}(\tau_\delta)\in
     \mathbb{T}_{\tau_\delta}\cap\mathbb{C}_{2,\tau_\delta}\),  
(2) \(\displaystyle\int_{t}^{\tau_\delta}
     \mathbf{1}_{\mathbb{C}_{2,s}^{C}}\!\bigl(
     \phi^{\bm a_\delta,\delta[\bm a_\delta]}_{t,x}(s)\bigr)\,ds\le Q\), and  
(3) \(\phi^{\bm a_\delta,\delta[\bm a_\delta]}_{t,x}(s)\in
       \mathbb{C}_{1,s}\) for all \(s\in[t,\tau_\delta]\).
Therefore, with \(\delta = \bar{\delta}\) and using~\eqref{eq:signed_distance}, the following hold: \(
\begin{aligned}
&\text{(a)}~\!\!\max_{s \in [t,\tau_{\bar{\delta}}]} c_1\bigl(s, \!\phi^{\bm{a}_{\bar{\delta}}, \bar{\delta}[\bm{a}_{\bar{\delta}}]}_{t,x}(s)\bigr)\!\leq\!0, 
\!\!\! &\text{(b)}~\!c_2\bigl(\tau_{\bar{\delta}}, \phi^{\bm{a}_{\bar{\delta}}, \bar{\delta}[\bm{a}_{\bar{\delta}}]}_{t,x}(\tau_{\bar{\delta}})\bigr)\!\leq 0, 
\end{aligned}
\)
\(
\begin{aligned}
&\text{(c)}~\int_t^{\tau_{\bar{\delta}}} \mathord{\mathbf{1}}_{\mathbb{C}_{2,s}^{C}}\bigl(\phi^{\bm{a}_{\bar{\delta}}, \bar{\delta}[\bm{a}_{\bar{\delta}}]}_{t,x}(s)\bigr)\,ds - Q \leq 0, \\
&\text{(d)}~g\bigl(\tau_{\bar{\delta}}, \phi^{\bm{a}_{\bar{\delta}}, \bar{\delta}[\bm{a}_{\bar{\delta}}]}_{t,x}(\tau_{\bar{\delta}})\bigr) \leq 0.
\end{aligned}
\)\\
Therefore, \(C(t, x, Q,\tau_{\bar{\delta}}, \bar{\delta}, \bm{a}_{\bar{\delta}})\leq0\), contradicting the earlier conclusion that \(C(t, x, Q, \tau, \bar{\delta}, \bm{a})> \epsilon>0\) for all \(\bm{a} \in \mathcal{A}_t\) and \(\tau \in [t, T]\). Hence, the assumption \(W(t, x, Q)>0\) leads to a contradiction; this implies\((t,x,Q)\in  \mathcal{S} \Longrightarrow W(t, x, Q) \leq 0.\)\\
\textit{(ii)} Let \((t,x,Q)\)  be such that \(W(t,x,Q)\!\leq\!0\) and suppose, for contradiction, that \( (t,x,Q)\!\notin\!\mathcal{S}\). Then, by definition, \(\exists\tilde{\delta}\!\in\!\Delta_t\) such that \(\forall \bm{a} \in \mathcal{A}_t,~\forall \tau \in [t,T]\), at least one of the following holds: 
\((1)\ \phi_{t,x}^{\bm{a},\tilde{\delta}[\bm{a}]}(\tau) \notin \mathbb{T}_\tau \cap \mathbb{C}_{2,\tau}\); 
\((2)\ \exists s \in [t,\tau]\) such that \(\phi_{t,x}^{\bm{a},\tilde{\delta}[\bm{a}]}(s) \notin \mathbb{C}_{1,s}\); 
\((3)\ \int_t^\tau \mathbf{1}_{\mathbb{C}_{2,s}^C}\bigl(\phi_{t,x}^{\bm{a},\tilde{\delta}[\bm{a}]}(s)\bigr)\,ds\,>\,Q\). Hence, \(\exists\,\eta\!>\!0\) such that \(\forall \bm{a} \in \mathcal{A}_t,~\forall \tau \in [t,T]\), 
\(C(t,x,Q,\tau,\tilde{\delta},\bm{a})\!>\!\eta.
\) Since \(W(t,x,Q)\le 0\) and
\( \begin{aligned}
W(t,x,Q) &= \sup_{\delta \in \Delta_t} \inf_{\bm{a} \in \mathcal{A}_t} \min_{\tau \in [t,T]} C(t,x,Q,\tau,\delta,\bm{a}) \leq 0, 
\end{aligned}\)
it follows that for every \(\epsilon>0\) and every \(\delta\in\Delta_t\), there exist
\(\tilde{\bm a}\in\mathcal A_t\) and \(\tilde{\tau}\in[t,T]\) such that
\(
C\bigl(t,x,Q,\tilde{\tau},\delta,\tilde{\bm a}\bigr)\le \epsilon .
\)
\noindent Setting \( \epsilon = \eta/2 \) and \( \delta = \tilde{\delta} \) leads to a contradiction. Hence, the assumption \( (t,x,Q) \notin \mathcal{S}\) is false, and we conclude \( W(t,x,Q) \leq 0 \;\Rightarrow\; (t,x,Q)\in\mathcal{S}\).\hfill$\blacksquare$\ \\

\vspace{-0.95em}
\phantomsection 
\label{apx:prop5}           
\!\!\!\!\!\!\noindent{\textbf{C.\;Proof of Proposition~\ref{prop:monotonicity}}}:
Let \(0 \leq Q_1 < Q_2 \leq T\) and \( (t, x) \in \widetilde{\mathcal{RA}}_{Q_1}\). Let \(C\) denote the function introduced in the proof of Proposition~\ref{prop:RA_{Q} characterization}. For all \(\bm{a} \in \mathcal{A}_t\), \(\delta \in \Delta_t\), and \(\tau \in [t,T]\) \\
\(\begin{aligned}
  & C(t,x,Q_{2},\tau,\delta,\bm{a}) \leq C(t,x,Q_{1},\tau,\delta,\bm{a})\\
\implies\quad & W(t,x,Q_{2}):=\sup_{\delta \in \Delta_t} \inf_{\bm{a} \in \mathcal{A}_t} \min_{\tau \in [t,T]} C(t,x,Q_{2},\tau,\delta,\bm{a}) \leq \\ 
& \sup_{\delta \in \Delta_t} \inf_{\bm{a} \in \mathcal{A}_t} \min_{\tau \in [t,T]} C(t,x,Q_{1},\tau,\delta,\bm{a}):=W(t,x,Q_{1}).
\end{aligned}
\)\\
\noindent\(\forall (t, x)\!\in\!\widetilde{\mathcal{RA}}_{Q_1}\), Proposition~\ref{prop:RA_{Q} characterization} gives \(W(t,x,Q_1)\!\leq\!0\). Thus, \(W(t,x,Q_2)\!\leq\!0\), so \((t, x)\!\in\! \widetilde{\mathcal{RA}}_{Q_2}\). Hence, \(\widetilde{\mathcal{RA}}_{Q_1}\!\subseteq\!\widetilde{\mathcal{RA}}_{Q_2}\).\hfill$\blacksquare$\ \\
\\
\phantomsection 
\label{apx:prop6}              
\!\!\noindent{\textbf{D.\;Proof of Theorem~\ref{prop6}}}: \textbf{1)} Since \(W(t, x, Q)\) is continuous in \(Q\), the set \(S_{t,x} := \{Q \in [0, T] : W(t, x, Q) \leq 0\}\) is compact as it is bounded (it is contained in \([0, T]\)) and closed (it is the preimage of the closed set \((-\infty, 0]\) under \(W(t,x,\cdot)\)). By the extreme value theorem, if \(S_{t,x}\!\neq\!\emptyset\), then \(W(t,x,\cdot)\) attains a unique minimum on \(S_{t,x}\), and we define \(Q_{\min}(t,x)\!:=\!\min S_{t,x}\!\in\![0,T]\). If \(S_{t,x}\!=\!\emptyset\), set \(Q_{\min}(t,x)\!:=\!+\infty\). Thus, \(Q_{\min}(t,x)\) is well-defined \(\forall (t,x) \in [0,T] \times \mathbb{R}^n\).\\
\noindent \textbf{2)} First, suppose \((t,x)\! \in \! \widetilde{\mathcal{RA}}_{(t_{1},t_{2}]}\). Then there exists \(Q \!\in \! (t_{1},t_{2}]\) such that \(W(t,x,Q) \le 0\), and \(W(t,x,\tilde{Q})\! > \!0\) for all \(\tilde{Q}\!\in\![0,t_{1}]\). By Propositions~\ref{prop:RA_{Q} characterization} and~\ref{prop:monotonicity}, this implies \((t,x)\!\in\!\widetilde{\mathcal{RA}}_{t_{2}}\) and \((t,x)\!\notin\!\widetilde{\mathcal{RA}}_{t_{1}}\), so \(\widetilde{\mathcal{RA}}_{(t_{1},t_{2}]} \!\subseteq \!\widetilde{\mathcal{RA}}_{t_{1}}^C \cap \widetilde{\mathcal{RA}}_{t_{2}}\). Conversely, assume \((t,x)\!\in\!\widetilde{\mathcal{RA}}_{t_{1}}^C\!\cap\! \widetilde{\mathcal{RA}}_{t_{2}}\). Then \(W(t,x,Q) > 0\) for all \(Q \in [0,t_{1}]\), and there exists \(\tilde{Q} \in [0,t_{2}]\) with \(W(t,x,\tilde{Q}) \le 0\). Thus, \(Q_{\min}(t,x) \in (t_{1},t_{2}]\), so \((t,x) \in \widetilde{\mathcal{RA}}_{(t_{1},t_{2}]}\), and hence \(\widetilde{\mathcal{RA}}_{(t_{1},t_{2}]} \supseteq \widetilde{\mathcal{RA}}_{t_{1}}^C \cap \widetilde{\mathcal{RA}}_{t_{2}}\). Combining both inclusions yields \(\widetilde{\mathcal{RA}}_{(t_{1},t_{2}]} = \widetilde{\mathcal{RA}}_{t_{1}}^C \cap \widetilde{\mathcal{RA}}_{t_{2}}\).\\
\noindent \textbf{3)}~For \(t_1 < t_2 \in [0, T]\), let \(\{t_n\} \subset (t_1, t_2)\) be an increasing sequence with \(t_n \to t_2\) as \(n \to \infty\). Then,\\
\(
\begin{aligned}
\lim_{n \to \infty} \widetilde{\mathcal{RA}}_{(t_{n}, t_{2}]} 
&= \lim_{n \to \infty} \{ (t, x) : Q_{\min}(t, x) \in (t_{n}, t_{2}] \} \\
\begin{array}{c}\scriptstyle (\text{Since } (t_{n}, t_{2}] \supseteq \\ \scriptstyle (t_{n+1}, t_{2}]\ \forall n)\end{array} 
&= \bigcap_{n=1}^\infty \{ (t, x) : Q_{\min}(t, x) \in (t_{n}, t_{2}] \} \\
&= \{ (t, x) : Q_{\min}(t, x) = t_{2} \}.
\end{aligned}
\)\\
Since the limit along any increasing sequence in \((t_1,t_{2})\) converging to \(t_{2}\) is identical, it follows that
\(
\lim_{t_1\to t_2^-} \widetilde{\mathcal{RA}}_{(t_1,t_2]}
=\{(t,x): Q_{\min}(t,x)=t_2\}.
\)
As shown earlier
\(
\widetilde{\mathcal{RA}}_{(t_{1},t_{2}]} = \widetilde{\mathcal{RA}}_{Q=t_{1}}^C \cap \widetilde{\mathcal{RA}}_{Q=t_{2}}.
\)
Thus, taking the limit as \(t_{1}\to t_{2}^-\) yields\\
\(
\begin{array}{rcl}
\lim\limits_{t_1\to t_2^-} \widetilde{\mathcal{RA}}_{(t_{1}, t_{2}]} 
& = & \lim\limits_{t_1\to t_2^-}
\widetilde{\mathcal{RA}}_{t_{1}}^C \cap \widetilde{\mathcal{RA}}_{t_{2}} \\
\llap{(\text{by Prop.~\ref{prop:monotonicity}})} 
& = & \displaystyle\bigcap_{\substack{\alpha < t_2}} \big[ \widetilde{\mathcal{RA}}_{\alpha}^C \cap \widetilde{\mathcal{RA}}_{t_{2}} \big] \\
& = & \big[ \displaystyle\bigcap_{\substack{\alpha < t_2}} \widetilde{\mathcal{RA}}_{\alpha}^C \big] \cap \widetilde{\mathcal{RA}}_{t_{2}}.\\
& = & \{(t,x): Q_{\min}(t,x)=t_2\}. \hfill\blacksquare\
\end{array}
\)\\

\phantomsection         
\label{apx:prop7}            
\noindent{\textbf{E.\;Proof of Proposition~\ref{prop:existence_absolute_continuity}}}:
\textit{Existence:} Let \( \bm{a} \in \mathcal{A}_t \), \( \bm{b} \in \mathcal{B}_t \), and \( (t, x, z) \in [0, T] \times \mathbb{R}^n \times\mathbb{R}\). By Carathéodory's existence theorem, there exists a unique absolutely continuous solution \( \phi^{\bm{a},\bm{b}}_{t, x} \) to the first \( n \) ODEs in~\eqref{eq:augmented_ode}. For the \((n+1)^{\text{th}}\) ODE, define
\(
\hat{f}(s) = \mathbf{1}_{\mathbb{C}_{2,s}^{C}}\big( \phi^{\bm{a}, \bm{b}}_{t, x}(s) \big).
\) \( \hat{f}(s) \) is measurable in \( s \) (see the proof of Proposition~\ref{prop5_soft}) and bounded by 1. Moreover, since \( \hat{f}(s) \) is independent of \( z \), it is trivially continuous in \( z \). Hence, by Carathéodory's existence theorem, there exists an absolutely continuous solution \( \xi^{\bm{a}, \bm{b}}_{t,x,z} \) to the \((n+1)^{\text{th}}\) ODE, yielding a solution \( \tilde{\phi}^{\bm{a}, \bm{b}}_{t,x,z} = \big( \phi^{\bm{a}, \bm{b}}_{t, x},\, \xi^{\bm{a}, \bm{b}}_{t,x,z} \big) \) to~\eqref{eq:augmented_ode}.\\
\noindent\textit{Uniqueness:} Suppose \( \tilde{\phi}_{t,x,z;1}^{\bm{a}, \bm{b}} \) and \( \tilde{\phi}_{t,x,z;2}^{\bm{a}, \bm{b}}\) are two distinct solutions to~\eqref{eq:augmented_ode}. By the existence proof, the first \( n \) ODEs have a unique solution \( \phi^{\bm{a}, \bm{b}}_{t,x} \), so both solutions have identical trajectories for the first \( n \) states:
\(
\tilde{\phi}_{t,x,z;i}^{\bm{a}, \bm{b}}(s) = \big( \phi^{\bm{a}, \bm{b}}_{t,x}(s),\; \xi_{t,x,z;i}^{\bm{a}, \bm{b}}(s) \big), \quad i = 1, 2.
\)
Each \(\xi_{t,x,z;i}^{\bm{a}, \bm{b}}(s)\) satisfies
\(
\xi_{t,x,z;i}^{\bm{a}, \bm{b}}(s) = z + \int_t^s \mathbf{1}_{\mathbb{C}_{2,\psi}^{C}} \big( \phi_{t,x}^{\bm{a}, \bm{b}}(\psi) \big)\, d\psi, \quad \forall s \in [t, T].
\)
Since the right-hand side is identical for both \( i = 1, 2 \), we conclude \( \xi_{t,x,z;1}^{\bm{a}, \bm{b}} = \xi_{t,x,z;2}^{\bm{a}, \bm{b}} \), and hence \( \tilde{\phi}_{t,x,z;1}^{\bm{a}, \bm{b}}= \tilde{\phi}_{t,x,z;2}^{\bm{a}, \bm{b}} \). \hfill$\blacksquare$\
\vspace{0.5em}

\phantomsection   
\label{apx:prop8}             
\noindent{\textbf{F.\;Proof of Proposition~\ref{introduction_of_lip_func}}}:
\textbf{1)} This is immediate from the definition of \(h_\epsilon(s, x)\). \\
\textbf{2)} Let \((s, x, a, b)\!\in\![0,T]\!\times\!\mathbb{R}^n\!\times\!\mathbb{A}\!\times\!\mathbb{B}\), and choose \(\epsilon_1, \epsilon_2 > 0\) with \(\epsilon_1 \!<\!\epsilon_2\), so that \(\tfrac{1}{\epsilon_2} \!<\!\tfrac{1}{\epsilon_1}\). Set \(d:= c_{2}(s,x)\) and consider:

\textbf{Case 1:} \(d \le 0\). Then \(d/\epsilon_i \le 0\) for \(i\in\{1,2\}\), hence
\(\max(0,d/\epsilon_2)=\max(0,d/\epsilon_1)\),
which implies
\(\min(1,\max(0,d/\epsilon_2))=\min(1,\max(0,d/\epsilon_1))\).

\textbf{Case 2:} \(d > 0\). Then \(\tfrac{d}{\epsilon_i} > 0\) for both \(i = 1,2\), and since \(\epsilon_1 < \epsilon_2\), it follows that \(\tfrac{d}{\epsilon_1} > \tfrac{d}{\epsilon_2}\). Hence,
\(
\min(1, \max(0, \tfrac{d}{\epsilon_2})) \leq \min(1, \max(0, \tfrac{d}{\epsilon_1}))
\). 
\noindent Hence, in both cases 1 and 2, \(h_{\epsilon_2}(s, x) \leq h_{\epsilon_1}(s, x)\), and since \(\tilde{f}_\epsilon(s, x, a, b)\) includes \(-h_\epsilon(s, x)\) as its last component, we have
\(
\tilde{f}_{\epsilon_1}(s, x, a, b) \leq \tilde{f}_{\epsilon_2}(s, x, a, b).
\) Therefore, the family \(\{\tilde{f}_\epsilon\}_{\epsilon > 0}\) is monotone decreasing as \(\epsilon \to 0^+\).\\
\noindent Next, we establish the pointwise convergence of \( \tilde{f}_\epsilon \) as \( \epsilon \to 0^+ \).
\(
\lim_{\epsilon \to 0^+} \tilde{f}_{\epsilon}(s, x, a, b) =
\begin{bmatrix}
f(s, x, a, b) \\
-\lim_{\epsilon \to 0^+} h_{\epsilon}(s, x)
\end{bmatrix}.
\)\\
\noindent Hence, it suffices to consider the limit of \( h_{\epsilon} \) as \( \epsilon \to 0^+ \). The function \( h_{\epsilon} \) is continuous with respect to \( \epsilon \), which allows the limit to be taken directly through the function. This yields:\\
\(
\begin{aligned}
\lim_{\epsilon\to0^+} h_\epsilon(s,x)
&=\min\!\Bigl\{1,\max\!\Bigl(0,\lim_{\epsilon\to0^+}\tfrac{1}{\epsilon}\,c_{2}(s,x)\Bigr)\Bigr\}\\[0.0pt]
&=\begin{cases}
1 & \text{if } c_{2}(s,x)>0 \;\bigl(\tfrac{1}{\epsilon}c_{2}(s,x)\!\to\!+\infty\bigr),\\[4pt]
0 & \text{if } c_{2}(s,x)=0 \;\bigl(\tfrac{1}{\epsilon}c_{2}(s,x)=0\bigr),\\[4pt]
0 & \text{if } c_{2}(s,x)<0 \;\bigl(\tfrac{1}{\epsilon}c_{2}(s,x)\!\to\!-\infty\bigr).
\end{cases}
\end{aligned}
\)\\
\noindent Therefore, if \( x \in \mathbb{C}_{2,s} \) (i.e., \( c_{2}(s,x) \leq 0 \)), the limit of \( h_{\epsilon}(s, x) \) is \( 0 \). If \( x \notin \mathbb{C}_{2,s} \) (i.e., \( c_{2}(s,x) > 0 \)), the limit of \( h_{\epsilon}(s, x) \) is \( 1 \). This proves that \( h_{\epsilon}(s, x) \) converges to
\(
\mathbf{1}_{\mathbb{C}_{2,s}^c}(x).
\)
Substituting into the augmented dynamics \( \tilde{f}_{\epsilon} \), the limit is:\\
\(
\lim_{\epsilon \to 0^+} \tilde{f}_{\epsilon}(s, x, a, b) = 
\begin{bmatrix}
f(s, x, a, b) \\
-\mathbf{1}_{\mathbb{C}_{2,s}^c}(x)
\end{bmatrix}.
\)\\
\textbf{3)} The following lemma is used in the proof.
\begin{lemma}\label{lemma1}
Let \( A, B \subset \mathbb{R}^n \) be two compact sets. Let \( d_A(x) \) and \( d_B(x) \) denote the signed distance functions associated with \( A \) and \( B \), respectively, and let \( d_H(A, B) \) denote the Hausdorff distance between \( A \) and \( B \). For any \(x \in \mathbb{R}^n\), the signed-distance functions satisfy \(|d_A(x)-d_B(x)| \le d_H(A,B)\), where the Hausdorff distance is defined by  
\(d_H(A,B)=\max\{\sup_{a\in A}\inf_{b\in B}\|a-b\|,\;\sup_{b\in B}\inf_{a\in A}\|a-b\|\}\).
\end{lemma}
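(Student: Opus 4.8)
The plan is to reduce the two‑sided bound to a single one‑sided inequality and then dispatch it by a short case analysis on the signs of the two signed distances. By symmetry in $A$ and $B$ (note $d_H$ is symmetric), it suffices to prove $d_A(x)\le d_B(x)+d_H(A,B)$ for every $x$; interchanging the roles of $A$ and $B$ yields the reverse inequality, and together they give $|d_A(x)-d_B(x)|\le d_H(A,B)$. Throughout I would write $r:=d_H(A,B)$ and $d(x,S):=\inf_{s\in S}\|x-s\|$ for the unsigned point‑to‑set distance, so that the signed distance decomposes as $d_S(x)=d(x,S)-d(x,S^{c})$, with exactly one of the two terms vanishing according to whether $x\in S$ or $x\notin S$.

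First I would record the elementary building block for \emph{unsigned} distances, namely $|d(x,A)-d(x,B)|\le r$ for all $x$. This follows from the definition of $d_H$, which gives the inclusions $A\subseteq\{y:d(y,B)\le r\}$ and $B\subseteq\{y:d(y,A)\le r\}$ (every point of one set lies within $r$ of the other, with the witness attained by compactness). Taking a nearest point $b^{*}\in B$ to $x$ and a point $a^{*}\in A$ with $\|b^{*}-a^{*}\|\le r$, the triangle inequality yields $d(x,A)\le\|x-a^{*}\|\le d(x,B)+r$, and symmetrically for $d(x,B)$. Next I would split into cases by the location of $x$ relative to $A$ and $B$. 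When $x\notin A$ and $x\notin B$, both signed distances coincide with the unsigned ones and the claim is exactly this building block; when $x\in A$ and $x\in B$, the signed distances reduce to $-d(x,A^{c})$ and $-d(x,B^{c})$ and one argues analogously using nearest points on the boundaries.

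The delicate case, and the step I expect to be the main obstacle, is the mixed one, say $x\in B\setminus A$, where $d_A(x)=d(x,A)\ge 0$ while $d_B(x)=-d(x,B^{c})\le 0$, so that $d_A(x)-d_B(x)=d(x,A)+d(x,B^{c})$ is a \emph{sum} of two nonnegative distances rather than a difference. A direct triangle‑inequality estimate only controls the difference $d(x,A)-d(x,B^{c})$, so the sum must be bounded geometrically: I would travel from $x$ toward a nearest point $a^{*}\in A$, locate the point $w$ at which the segment $[x,a^{*}]$ meets $\partial B$, and then use $w\in B\subseteq\{y:d(y,A)\le r\}$ together with $\|x-w\|\ge d(x,B^{c})$ to collapse $d(x,A)+d(x,B^{c})$ to at most $r$.

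The crux is precisely this interior/mixed configuration: a naive decomposition only gives $|d_A-d_B|\le d_H(A,B)+d_H(A^{c},B^{c})$, and eliminating the complement term $d_H(A^{c},B^{c})$ to recover the sharper stated bound $r$ requires exploiting that the extremal configuration can be realized on the shared boundary region of the two sets (as encoded by the point $w$ above). I would therefore concentrate the bulk of the effort on certifying that $w$ attains this estimate; this is the step where the compactness and the boundary compatibility of the sets under consideration are genuinely used, and where the argument would otherwise fail for sets whose interiors differ drastically.
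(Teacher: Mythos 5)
Your reduction to a one-sided bound, your unsigned-distance building block, and the case $x\notin A\cup B$ are all correct, but the two cases involving interior points do not go through — and cannot. In the mixed case $x\in B\setminus A$, the facts you assemble are $d(x,B^{c})\le\|x-w\|$, $d(w,A)\le r$, and (since $w$ lies on the segment to the nearest point $a^{*}$, which is then also nearest to $w$) $d(x,A)=\|x-w\|+d(w,A)$; combining them yields only $d(x,A)+d(x,B^{c})\le 2\|x-w\|+r$, and the ``depth'' term $\|x-w\|$ cannot be collapsed away. No repair is possible, because the inequality you are trying to certify in that case is false: take $n=1$, $A=\{-1,1\}$, $B=[-1,1]$, $x=0$. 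Then $d_A(0)=1$ and $d_B(0)=-1$, so $|d_A(0)-d_B(0)|=2$, while $d_H(A,B)=1$ (here $w=a^{*}=1$ and your estimate $2\|x-w\|+r=2$ is exactly attained). The case $x\in A\cap B$ is likewise not ``analogous'': there one needs $|d(x,A^{c})-d(x,B^{c})|\le d_H(A^{c},B^{c})$, and $d_H(A^{c},B^{c})$ is not controlled by $d_H(A,B)$ — e.g.\ $A=[-1,1]$, $B=\{-1,0,1\}$, $x=0$ gives $|d_A(0)-d_B(0)|=1>\tfrac12=d_H(A,B)$. So the crux you isolated is real: the sharp general estimate is exactly your ``naive'' one, $|d_A-d_B|\le d_H(A,B)+d_H(A^{c},B^{c})$ (the first example attains the sum, so it cannot be improved to the max, let alone to $d_H(A,B)$ alone), and the complement term cannot be eliminated for arbitrary compact sets. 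The lemma as stated is therefore false; your suspicion that the argument ``would otherwise fail for sets whose interiors differ drastically'' is not a gap in your strategy but a counterexample to the statement itself.

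For comparison, the paper does not actually prove the lemma: it asserts that the result ``follows directly from the definitions'' and defers to Theorem~1 of \cite{kraft2018hausdorff}. What does follow directly from the definitions is your building block for unsigned distances, $\sup_x|d(x,A)-d(x,B)|\le d_H(A,B)$, together with the reverse inequality $d_H(A,B)\le\sup_x|d_A(x)-d_B(x)|$; a bound on the signed-distance difference in terms of $d_H(A,B)$ alone is unobtainable, because the signed distance also encodes the geometry of the complements. A correct statement must either weaken the conclusion to $|d_A-d_B|\le d_H(A,B)+d_H(A^{c},B^{c})$ or add a hypothesis controlling $d_H(A^{c},B^{c})$ (equivalently, the boundaries); note that this is also what the paper's own application in Appendix~F requires, namely bounded temporal variation of the complements $\mathbb{C}_{2,s}^{c}$ in addition to that of $\mathbb{C}_{2,s}$, since the Lipschitz-in-time property of $c_{2}$ is deduced from this lemma.
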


The result of the lemma follows directly from the definitions of the Hausdorff distance and the signed distance function. For a detailed proof, see Theorem 1 in \cite{kraft2018hausdorff}. \\
\indent Since \(\mathbb{C}_{2,t}\) has uniformly bounded temporal variation, (as assumed in Section~\ref{sec:constraint_sets}), it follows that
\(
\exists\,L_v>0 \text{ such that }\)
\(
d_H(\mathbb{C}_{2,s}, \mathbb{C}_{2,s'}) \leq L_v |s - s'|, \quad \forall s, s' \in [0, T].
\) From Lemma~\ref{lemma1}, this implies a corresponding bound on the signed distance function \(c_{2}\), given by: \\
\(
|c_{2}(s,x) - c_{2}(s',x)| \leq L_v |s - s'|, \quad \forall x \in \mathbb{R}^n, \, \forall s, s' \in [0, T].
\)\\ 
Therefore, \(c_{2}\) is Lipschitz continuous with respect to time.\\
\indent Let \(x_1, x_2 \in \mathbb{R}^n\) and \(s_1, s_2 \in [0, T]\). By definition of the signed–distance function, \(c_{2}(s,\cdot)\) is \(1\)-Lipschitz in \(x\); in particular, for all \(s\in[0,T]\) \\
\(
\left| c_{2}(s,x_1) - c_{2}(s,x_2) \right| \le \|x_1 - x_2\|.
\) Using the triangle inequality on \( |h_{\epsilon}(s_1, x_1) - h_{\epsilon}(s_2, x_2)| \), we get:\\
\(
\begin{aligned}
\left| h_{\epsilon}(s_1, x_1) - h_{\epsilon}(s_2, x_2) \right| 
&\leq \left| h_{\epsilon}(s_1, x_1) - h_{\epsilon}(s_1, x_2) \right| \\
&\quad + \left| h_{\epsilon}(s_1, x_2) - h_{\epsilon}(s_2, x_2) \right|.
\end{aligned}
\)\\
\noindent For any \( u, v \in \mathbb{R} \), the functions \( \max(0, \cdot) \) and \( \min(1, \cdot) \) are 1-Lipschitz, so:
\(
\begin{aligned}
|\max(0, u) - \max(0, v)| &\leq |u - v|,
\end{aligned}
\)
\(
\begin{aligned}
|\min(1, u) - \min(1, v)| &\leq |u - v|.
\end{aligned}
\)Using these bounds and the Lipschitz continuity of \(c_{2}\), each term is bounded as:
\begingroup           
\setlength{\abovedisplayskip}{0.5pt}
\setlength{\belowdisplayskip}{0.5pt}
\[
\begin{aligned}
|h_{\epsilon}(s_1,x_1)-h_{\epsilon}(s_1,x_2)|
      &\le \frac{1}{\epsilon}\,\|x_1-x_2\|,\\
|h_{\epsilon}(s_1,x_2)-h_{\epsilon}(s_2,x_2)|
      &\le \frac{L_v}{\epsilon}\,|s_1-s_2|.
\end{aligned}
\]
\endgroup  
Combining these bounds leads to:\\
\(
\left| h_{\epsilon}(s_1, x_1) - h_{\epsilon}(s_2, x_2) \right| \leq \frac{1}{\epsilon} \|x_1 - x_2\| + \frac{L_v}{\epsilon} |s_1 - s_2|.
\)\\
\noindent
Thus \(h_\epsilon\) is Lipschitz in \((s,x)\). Since \(f\) is uniformly continuous, bounded, and uniformly Lipschitz in \(x\) on \([0,T]\times\mathbb{A}\times\mathbb{B}\), \(\tilde f_\epsilon\) is, for every \(\epsilon>0\), bounded and uniformly continuous, and Lipschitz in \((x,z)\) (uniformly in \((s,a,b)\)).\hfill$\blacksquare$\ 

\vspace{0.3em}
\phantomsection 
\label{apx:thm2}           
\noindent{\textbf{G.\;Proof of Theorem~\ref{family_value_functions}}}:
\noindent \textbf{1)} For every \( \epsilon>0 \), \(\tilde{f}_{\epsilon}(t,x,z,a,b)\) is bounded, uniformly
continuous, and Lipschitz in \(x,z\), uniformly over
\((t,a,b)\!\in\![0,T]\times\mathbb{A}\times\mathbb{B}\) (see Proposition~\ref{introduction_of_lip_func}). Since \(c_1(t,x),\,c_2(t,x)\), and \(g(t,x)\) are Lipschitz, it follows from\cite{Evans} that \(W_{\epsilon}(t,x,z)\) is Lipschitz continuous in \(t,x,z\).\\
\noindent \textbf{2)} Let \(0\!<\!\epsilon_{1}\!<\!\epsilon_{2}\). For any \((t, x, z)\!\in\![0, T]\!\times\!\mathbb{R}^n\times\!\mathbb{R}\), \(\bm{a}\!\in\!\mathcal{A}_{t}\), and \(\delta\!\in\!\Delta_{t}\), denote the unique Lipschitz continuous trajectories that solves the dynamics \(\tilde{f}_{\epsilon_{1}}\) and \(\tilde{f}_{\epsilon_{2}}\) by
\begingroup
\setlength{\abovedisplayskip}{0.5pt}
\setlength{\belowdisplayskip}{0.5pt}
\[
\begin{aligned}
\tilde{\phi}^{\bm{a}, \delta[\bm{a}]}_{t, x, z}(s) 
&= \big( \phi^{\bm{a}, \delta[\bm{a}]}_{ t, x}(s),\ 
\xi^{\bm{a}, \delta[\bm{a}]}_{\epsilon_i, t, x, z}(s) \big), 
\quad i = 1, 2
\end{aligned}
\]
\endgroup
\noindent Since \(h_{\epsilon_{1}}\!\geq\!\!h_{\epsilon_{2}}\), it follows that \(\forall\!s\!\in\![t, T]\): 
\(
\xi^{\bm{a}, \delta[\bm{a}]}_{\epsilon_1, t, x, z}(s)\!\leq\!\xi^{\bm{a}, \delta[\bm{a}]}_{\epsilon_2, t, x, z}(s).
\) Consider the following functional 
\begingroup
\setlength{\abovedisplayskip}{-1.5pt}
\setlength{\belowdisplayskip}{0.5pt}
\begin{equation*}
\begin{aligned}
J_\epsilon(t, x, z, \bm{a}, \delta[\bm{a}]) 
= \min_{\tau \in [t, T]} \max \Big\{ &
\max_{s \in [t, \tau]} c_1\big(s, \phi^{\bm{a}, \delta[\bm{a}]}_{t, x}(s) \big), \\
& \hspace*{-13.2em} c_2\big(\tau, \phi^{\bm{a}, \delta[\bm{a}]}_{t, x}(\tau)\big),\; -\xi^{\bm{a}, \delta[\bm{a}]}_{\epsilon, t, x, z}(\tau),\;
g\big(\tau, \phi^{\bm{a}, \delta[\bm{a}]}_{t, x}(\tau)\big) \Big\}
\end{aligned}
\end{equation*}
\endgroup
\noindent Hence: 
\(
J_{\epsilon_{1}}(t, x, z, \bm{a}, \delta[\bm{a}]) \geq J_{\epsilon_{2}}(t, x, z, \bm{a}, \delta[\bm{a}])
\).
Taking the supremum over all admissible strategies \(\delta\) and the infimum over all control signals \(\bm{a}\), the following holds:
\begingroup
\setlength{\abovedisplayskip}{0.1pt}
\setlength{\belowdisplayskip}{0.1pt}
\[
\begin{aligned}
W_{\epsilon_1}(t,x,z)
  &= \sup_{\delta\in\Delta_t}\,\inf_{\bm{a}\in\mathcal{A}_t}
     J_{\epsilon_{1}}(t,x,z,\bm{a},\delta[\bm{a}])\\
  &\ge \sup_{\delta\in\Delta_t}\,\inf_{\bm{a}\in\mathcal{A}_t}
        J_{\epsilon_{2}}(t,x,z,\bm{a},\delta[\bm{a}]) = W_{\epsilon_2}(t,x,z).
\end{aligned}
\]
\endgroup
\noindent This shows that \(W_{\epsilon}\) increases monotonically as \( \epsilon \to 0^+ \).\\
\noindent \textbf{5)} Fix \((t,x,z)\!\in[0,T]\times\mathbb{R}^n\times\mathbb{R}\) and set \(\tilde x=(x,z)\).
For any controls \(\bm a\in\mathcal A_t\), \(\bm b\in\mathcal B_t\): \\
\noindent\textit{\textbullet}\;
By Proposition~\ref{prop:existence_absolute_continuity} the augmented
ODE, \(\tilde f\), admits a unique, absolutely continuous trajectory
\(
\tilde\phi_{t,\tilde x}^{\bm a,\bm b}(\cdot)
  =\bigl(\phi_{t,x}^{\bm a,\bm b}(\cdot),
         \xi_{t,\tilde x}^{\bm a,\bm b}(\cdot)\bigr):
   [t,T]\!\to\!\mathbb{R}^n\times\mathbb{R}.
\)\\
\noindent\textit{\textbullet}
For each \(\epsilon>0\), let
\(\tilde\phi_{\epsilon,t,\tilde x}^{\bm a,\bm b}:[t,T]\to
  \mathbb{R}^n\times\mathbb{R}\)
be the solution of \(\tilde f_\epsilon\).
Proposition~\ref{introduction_of_lip_func} shows that
\(\tilde f_\epsilon\) is uniformly continuous in
\((s,\tilde x,a,b)\) and globally Lipschitz in \(\tilde x\)
(uniform in \(s,a,b\)); hence, by Carathéodory’s existence theorem, there exists a unique Lipschitz continuous trajectory, which can be decomposed as
\(
\tilde\phi_{\epsilon,t,\tilde x}^{\bm a,\bm b}
  =(\phi_{t,x}^{\bm a,\bm b},
    \xi_{\epsilon,t,\tilde x}^{\bm a,\bm b}),
\)
\emph{where
\(\xi_{\epsilon,t,\tilde x}^{\bm a,\bm b}\) satisfies the last ODE of
\(\tilde f_\epsilon\) almost everywhere on \([t,T]\).}
\noindent\begin{lemma}\label{lem:infimum_difference_bound}\;
\emph{For real functions \(X,Y:D\!\to\!\mathbb{R}\),}
{\setlength{\abovedisplayskip}{-1pt}
 \setlength{\belowdisplayskip}{-1pt}
 \[
   \bigl|\inf_{x\in D}X(x)-\inf_{x\in D}Y(x)\bigr|
      \le \sup_{x\in D}|X(x)-Y(x)|
 \]}
 \end{lemma}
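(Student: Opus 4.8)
The plan is to prove the two one-sided estimates \(\inf_{x\in D}X(x)-\inf_{x\in D}Y(x)\le \sup_{x\in D}|X(x)-Y(x)|\) and its mirror obtained by swapping \(X\) and \(Y\), then combine them into the stated absolute-value bound. First I would set \(M:=\sup_{x\in D}|X(x)-Y(x)|\) and dispose of the trivial case \(M=+\infty\), where the inequality holds automatically. In the intended application \(X\) and \(Y\) are bounded cost functionals assembled from the bounded signed-distance functions \(c_1,c_2,g\) and the bounded budget trajectory \(\xi\), so \(M<\infty\) and every infimum appearing below is finite; I would record this to guarantee that \(|\inf X-\inf Y|\) is well defined.

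Assuming \(M<\infty\), the key step is the pointwise sandwich that follows immediately from \(|X(x)-Y(x)|\le M\): for every \(x\in D\),
\[
X(x)\le Y(x)+M
\quad\text{and}\quad
Y(x)\le X(x)+M .
\]
I would then fix an arbitrary \(y\in D\) and chain \(\inf_{x\in D}X(x)\le X(y)\le Y(y)+M\); since \(y\) is arbitrary, taking the infimum over \(y\) yields \(\inf_{x}X(x)\le \inf_{y}Y(y)+M\), that is, \(\inf X-\inf Y\le M\). Exchanging the roles of \(X\) and \(Y\) gives \(\inf Y-\inf X\le M\), and the two inequalities together produce \(|\inf X-\inf Y|\le M\), which is exactly the claim.

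There is no substantive obstacle here: the result is the standard Lipschitz-type stability of the infimum under the uniform (sup) norm, resting only on the order-preserving and constant-shift properties of \(\inf\). The single point meriting care is well-definedness of \(|\inf X-\inf Y|\), which is handled by the finiteness remark above. Downstream, I expect to apply this lemma—together with its supremum analogue, obtained by replacing \((X,Y)\) with \((-X,-Y)\) and using \(\sup=-\inf(-\,\cdot\,)\)—to pass a uniform, trajectory-level bound between the functionals associated with \(\tilde f\) and \(\tilde f_\epsilon\) through the inner \(\inf_{\bm a}\) and the outer \(\sup_\delta\) in the definition of \(W_\epsilon\), thereby controlling \(|W-W_\epsilon|\).
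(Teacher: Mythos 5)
Your proof is correct, and it is in fact more rigorous than the paper's own argument. The paper proves the lemma by asserting that, from \(\inf_{D} X \le X(x)\) and \(\inf_{D} Y \le Y(x)\), ``subtracting and taking absolute values'' yields \(\lvert \inf_{D} X - \inf_{D} Y\rvert \le \lvert X(x)-Y(x)\rvert\) for every \(x\), after which it takes the supremum over \(x\). That pointwise intermediate claim is a non sequitur: two same-direction inequalities \(a\le b\) and \(c\le d\) do not imply \(\lvert a-c\rvert\le\lvert b-d\rvert\). Concretely, with \(D=\{1,2\}\), \(X(1)=0\), \(X(2)=100\), \(Y(1)=10\), \(Y(2)=100\), one has \(\lvert \inf X-\inf Y\rvert=10\) while \(\lvert X(2)-Y(2)\rvert=0\), so the paper's pointwise bound fails even though the final conclusion (after the supremum) is true. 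Your route — the pointwise sandwich \(X(y)\le Y(y)+M\) with \(M:=\sup_{D}\lvert X-Y\rvert\), then \(\inf_{x}X\le X(y)\le Y(y)+M\), infimum over \(y\), and symmetrization in \(X\) and \(Y\) — is the standard, airtight proof of the \(1\)-Lipschitz property of the infimum under the sup-norm, and it is exactly the repair the paper's argument needs. Your care about degenerate cases (\(M=+\infty\), or infinite infima making \(\lvert\inf X-\inf Y\rvert\) ill-defined) is likewise something the paper glosses over; in the intended application the functionals are bounded, so the caveat is harmless, but recording it is good practice.
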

\noindent\textit{Proof.}\;
For every \(x\in D\) we have \(\inf_D X \le X(x)\) and
\(\inf_D Y \le Y(x)\).
Subtracting and taking absolute values gives
\(\lvert \inf_D X - \inf_D Y\rvert
  \le |X(x)-Y(x)|\);
taking the supremum over \(x\) proves the claim. \\ 
For any strategy \(\!\delta\!\in\!\Delta_t\), control signal \(\!\bm a\!\in\!\mathcal A_t\), and \(\!\epsilon\!>\!0\), let
{\setlength{\abovedisplayskip}{4pt}%
 \setlength{\abovedisplayshortskip}{4pt}%
 \setlength{\belowdisplayskip}{2pt}%
 \setlength{\belowdisplayshortskip}{2pt}%
 \setlength{\jot}{2pt}%
 \[
   \xi_\kappa(s):=
   \begin{cases}
     \xi_{t,\tilde x}^{\bm a,\delta[\bm a]}(s), & \kappa=0,\\
     \xi_{\epsilon,t,\tilde x}^{\bm a,\delta[\bm a]}(s), & \kappa=\epsilon ,
   \end{cases}
   \qquad
   \kappa\in\{0,\epsilon\}.
 \]}
Define $F_\kappa(\tau,\bm a,\delta[\bm a])$, $J_\kappa(t,x,z,\bm a,\delta[\bm a])$,
and $\Delta_J$ as
{\setlength{\abovedisplayskip}{4pt}
 \setlength{\abovedisplayshortskip}{4pt}
 \setlength{\belowdisplayskip}{2pt}
 \setlength{\belowdisplayshortskip}{2pt}
 \setlength{\jot}{2pt}
 \begin{gather*}
   F_\kappa(\tau,\bm a,\delta[\bm a])
   := \max\!\Bigl\{
\max_{s\in[t,\tau]}
          c_1\!\bigl(s,\phi_{t,x}^{\bm a,\delta[\bm a]}(s)\bigr),\\[-2pt]
          c_2\!\bigl(\tau,\phi_{t,x}^{\bm a,\delta[\bm a]}(\tau)\bigr),
          -\xi_\kappa(\tau),\;
        g\!\bigl(\tau,\phi_{t,x}^{\bm a,\delta[\bm a]}(\tau)\bigr)
      \Bigr\},\\[3pt]
J_\kappa(t,x,z,\bm a,\delta[\bm a])
 := \min_{\tau\in[t,T]} F_\kappa(\tau,\bm a,\delta[\bm a]),~
\Delta_J \!:= \!\lvert J_0 - J_\epsilon\rvert.
 \end{gather*}}
 \(
\begin{aligned}
\Delta_J &= \Bigl|
     \mathop{\min}\limits_{\tau\in[t,T]}
       F_0(\tau,\bm a,\delta[\bm a]) -
     \mathop{\min}\limits_{\tau\in[t,T]}
       F_\epsilon(\tau,\bm a,\delta[\bm a])
   \Bigr| \\[-0.5pt]
&\le \mathop{\sup}\limits_{\tau\in[t,T]}
      \bigl|F_0(\tau,\bm a,\delta[\bm a]) -
            F_\epsilon(\tau,\bm a,\delta[\bm a])\bigr|      
      \;\text{\scriptsize(by Lemma~\ref{lem:infimum_difference_bound})} \\[-0.5pt]
&\le \mathop{\sup}\limits_{\tau\in[t,T]}
      \bigl|\xi^{\bm a,\delta[\bm a]}_{t,\tilde x}(\tau) -
            \xi^{\bm a,\delta[\bm a]}_{\epsilon,t,\tilde x}(\tau)\bigr|
     \;(\text{\scriptsize$\substack{\max\ \text{operator is}\\[-1pt]1\text{-Lipschitz})}$}) \\[-0.5pt]
&= \bigl|\xi^{\bm a,\delta[\bm a]}_{t,\tilde x}(T) -
          \xi^{\bm a,\delta[\bm a]}_{\epsilon,t,\tilde x}(T)\bigr| 
\;\text{\scriptsize$
\!\!  \bigl(
    \substack{|\xi^{\bm a,\delta[\bm a]}_{t,\tilde x}(\tau)-
               \xi^{\bm a,\delta[\bm a]}_{\epsilon,t,\tilde x}(\tau)|\ \text{is continuous}\\[-1pt]\text{\ in}~\tau~
               \text{and increasing on }[t,T]}\bigr)$}\\[-0.5pt]
&\le \int_t^T\!
      \Bigl|
        \mathbf 1_{\mathbb C_{2,s}^C}
          \bigl(\phi^{\bm a,\delta[\bm a]}_{t,x}(s)\bigr)\\[-4.8pt]
        &\quad\quad\; -\min\!\Bigl[1,
           \max\!\bigl(0,
             \tfrac{1}{\epsilon}\,
             c_{2}(s, 
               \phi^{\bm a,\delta[\bm a]}_{t,x}(s)
           \bigr)\Bigr]
      \Bigr|\,ds.
\;\text{\scriptsize$
  \bigl(
    \substack{\text{by triangle}\\[-0.5pt]\text{inequality}}
  \bigr)
$}
\end{aligned}
\)\\
\noindent Define the set \( E_{\epsilon, t, x}(\bm{a}, \delta[\bm{a}]) \subseteq [t,T] \) as the collection of times \( s \) for which \( \phi^{\bm{a}, \delta[\bm{a}]}_{t, x}(s) \) lies within an\;\( \epsilon \)-neighborhood of the boundary of \( \mathbb{C}_{2,s} \), i.e., \( E_{\epsilon,t,x}(\bm{a},\delta[\bm{a}])=\{ s\in[t,T] : 0<c_{2}(s,\phi^{\bm{a},\delta[\bm{a}]}_{t,x}(s))<\epsilon \} \). This set is open and measurable as it is the preimage of the open interval \( (0,\epsilon)\subseteq \mathbb{R} \) under the measurable mapping \( s\mapsto c_{2}(s,\phi^{\bm{a},\delta[\bm{a}]}_{t,x}(s))\). The functions 
\( \mathbf{1}_{\mathbb{C}_{2,s}^C}(\phi^{\bm{a},\delta[\bm{a}]}_{t,x}(s)) \) and 
\( \min(1,\max(0,\tfrac{1}{\epsilon}c_{2}(s,\phi^{\bm{a},\delta[\bm{a}]}_{t,x}(s)))) \) only differ on this set. Thus, the integral bound on \(\Delta_J\) reduces to
\(\int_{E_{\epsilon,t,x}(\bm{a},\delta[\bm{a}])} \bigl|1 - \min\bigl(1,\max(0,\tfrac{1}{\epsilon} c_{2}(s,\phi^{\bm{a},\delta[\bm{a}]}_{t,x}(s)))\bigr)\bigr|\,ds\).  
Since the integrand is bounded by \(1\), it follows that  
\(\bigl|J_{0}(t, x, z, \bm{a}, \delta[\bm{a}]) - J_\epsilon(t, x, z, \bm{a}, \delta[\bm{a}])\bigr| \leq \mu\bigl(E_{\epsilon, t, x}(\bm{a}, \delta[\bm{a}])\bigr)\),  
where \(\mu(\cdot)\) denotes the Lebesgue measure defined on \([0,T]\). Given that \(J_{0}(t, x, z, \bm{a}, \delta[\bm{a}]) \geq J_{\epsilon}(t, x, z, \bm{a}, \delta[\bm{a}])\), we obtain:  
\(
\begin{aligned}
 \big|\sup_{\delta \in \Delta_{t}} \inf_{\bm{a} \in \mathcal{A}_{t}} 
 J_{0}(t, x, z, \bm{a}, \delta[\bm{a}]) 
 - \sup_{\delta \in \Delta_{t}} \inf_{\bm{a} \in \mathcal{A}_{t}} 
 J_\epsilon(t, x, z, \bm{a}, \delta[\bm{a}]) \big| 
&\\
\leq \sup_{\delta \in \Delta_{t}} \inf_{\bm{a} \in \mathcal{A}_{t}} 
\mu\big(E_{\epsilon, t, x}(\bm{a}, \delta[\bm{a}])\big),
\end{aligned}
\)
where \(\mu\bigl(E_{\epsilon, t, x}(\bm{a}, \delta[\bm{a}])\bigr) = \int_t^T \mathbf{1}_{E_{\epsilon, t, x}(\bm{a}, \delta[\bm{a}])}(s)\,ds\).\\
\noindent Because \(f(s,x,a,b)\) is uniformly continuous and Lipschitz in \(x\) (uniformly in \((s,a,b)\in[0,T]\times\mathbb{A}\times\mathbb{B}\)), hence the trajectory \(\phi^{\bm{a},\delta[\bm{a}]}_{t,x}(s)\) is uniformly continuous in time \(s\). Let \((\bm a_n,\delta_n)_{n\in\mathbb N}\) be a sequence with \(\bm a_n \to \bm a_0\) in \(L^\infty([t,T];\mathbb A)\), and set \(\bm b_n := \delta_n[\bm a_n]\), \(\bm b_0 := \delta_0[\bm a_0]\), with \(\bm b_n \to \bm b_0\) in \(L^\infty([t,T];\mathbb B)\). By Grönwall’s inequality and the Dominated Convergence Theorem,
\(
\phi^{\bm a_n,\delta_n[\bm a_n]}_{t,x}(\cdot)
\)
converges uniformly on \([t,T]\) to
\(
\phi^{\bm a_0,\delta_0[\bm a_0]}_{t,x}(\cdot).
\)
As shown in the proof of Proposition~\ref{introduction_of_lip_func}, \(c_{2}(s,x)\) is
Lipschitz in \((s,x)\). Hence the compositions
\(
s\mapsto c_2\!\big(s,\phi^{\bm a_n,\delta_n[\bm a_n]}_{t,x}(s)\big)
\)
converge uniformly to
\(
s\mapsto c_2\!\big(s,\phi^{\bm a_0,\delta_0[\bm a_0]}_{t,x}(s)\big).
\) Fix \(s\in E_{\epsilon,t,x}(\bm a_0,\delta_0)\). Then there exists \(N\) such that for all \(n\ge N\), \(s\in E_{\epsilon,t,x}(\bm a_n,\delta_n)\). Thus, \( \mathop{\liminf}\limits_{n\to\infty} \mathbf{1}_{E_{\epsilon,t,x}(\bm a_n,\delta_n)}(s) \ge \mathbf{1}_{E_{\epsilon,t,x}(\bm a_0,\delta_0)}(s)=1 \). Conversely, if \( s \notin E_{\epsilon,t,x}(\bm a_0,\delta_0) \), then  \( \mathop{\liminf}\limits_{n\to\infty} \mathbf{1}_{E_{\epsilon,t,x}(\bm a_n,\delta_n)}(s) \ge \mathbf{1}_{E_{\epsilon,t,x}(\bm a_0,\delta_0)}(s)=0 \).
\noindent Combining both cases yields, for each \(s \in [t, T]\),
\begingroup\setlength{\abovedisplayskip}{2pt}\setlength{\belowdisplayskip}{2pt}
\(
\mathop{\liminf}\limits_{n\to\infty} \mathbf{1}_{E_{\epsilon,t,x}(\bm{a}_{n}, \delta_{n})}(s) \geq \mathbf{1}_{E_{\epsilon,t,x}(\bm{a}_{0}, \delta_{0})}(s)
\)
\endgroup
\noindent which shows that, for each fixed \(s\in[t,T]\), the map
\((\bm{a},\delta)\mapsto \mathbf{1}_{E_{\epsilon,t,x}(\bm{a},\delta)}(s)\)
is lower semicontinuous. A similar argument shows that \(\displaystyle \operatorname*{inf}_{\bm{a} \in \mathcal{A}_t} \mathbf{1}_{E_{\epsilon,t,x}(\bm{a}, \delta)}(s)\) is also lower semi-continuous in \(\delta\) for each fixed \(s \in [t, T]\). Since
\(\mathbf{1}_{E_{\epsilon,t,x}(\bm{a}, \delta)}(s)\) and \(\displaystyle \operatorname*{inf}_{\bm{a} \in \mathcal{A}_t} \mathbf{1}_{E_{\epsilon,t,x}(\bm{a}, \delta)}(s)\) are normal integrands (i.e., lower semicontinuous and measurable), $\Delta_t$ and $\mathcal{A}_t$ are decomposable spaces (i.e., if \(f,\,g\in \Delta_t\) (or \(\mathcal{A}_t\)) and \(A\) is any measurable subset of \([t, T]\), then \(
h(\omega) := f(\omega)\, \mathbf{1}_A(\omega) + g(\omega)\, \mathbf{1}_{A^c}(\omega)
\) belongs to \(\Delta_t\) (or \(\mathcal{A}_t\))), and $\mathbf{1}_{E_{\epsilon,t,x}(\bm{a},\delta)}(s)$ is bounded for every \(\bm{a}\in\mathcal{A}_t\) and \(\delta\in\Delta_t\), all conditions of Rockafellar's Interchange Theorem~\cite{Rockfellar} hold. Therefore, the supremum and infimum operations may be interchanged with the integral.\\
\(
\begin{aligned}
\sup_{\mathclap{\delta \in \Delta_t}}~\inf_{\mathclap{\bm{a} \in \mathcal{A}_t}}
\mu\Bigl(E_{\epsilon,t,x}(\bm{a},\delta[\bm{a}])\Bigr)
&=\!\!\int_t^T\!\!\!\sup_{\mathclap{\delta \in \Delta_t}}~ \inf_{\mathclap{\bm{a}\in\mathcal{A}_t}}
\mathbf{1}_{E_{\epsilon,t,x}(\bm{a},\delta[\bm{a}])}(s)\,ds \\
&=\!\!\int_t^T\!\!\!\mathbf{1}_{\bigcup\limits_{\delta\in\Delta_t}\!\bigcap\limits_{\bm{a}\in\mathcal{A}_t}E_{\epsilon,t,x}(\bm{a},\delta[\bm{a}])}(s)\,ds \\
&=\;\mu\Bigl(\bigcup_{\mathclap{\delta\in\Delta_t}}\,\bigcap_{\bm{a}\in\mathcal{A}_t}
E_{\epsilon,t,x}(\bm{a},\delta[\bm{a}])\Bigr) \\[-1.4ex]
\llap{\(\fontsize{7}{8}\selectfont
\begin{array}{c}
\text{(since }E_{\epsilon,t,x}(\bm{a},\delta[\bm{a}])\subseteq [t,T] \\[1mm]
\forall\,\bm{a}\in\mathcal{A}_t,\ \forall\,\delta\in\Delta_t\text{)}
\end{array}\!\)}
&\le \mu([t,T]) = T-t
\end{aligned}
\)\\
\noindent Hence, it directly follows that: \\
 \(
\begin{aligned}
\big| W(t, x, z) - W_\epsilon(t, x, z) \big| 
&\leq \mu\Bigl(\bigcup_{\mathclap{\delta\in\Delta_t}}\,\bigcap_{\bm{a}\in\mathcal{A}_t}
E_{\epsilon,t,x}(\bm{a},\delta[\bm{a}])\Bigr) \\
&\leq T - t.
\end{aligned}
\)\\
\textbf{3)} Fix \( (t,x) \in [0,T] \times \mathbb{R}^n \). 
For every \(\!\epsilon\!>\!0\!\) we have  
\(\mu\bigl(
      \mathop{\bigcup}\limits_{\delta \in \Delta_t}\!
      \mathop{\bigcap}\limits_{\bm{a} \in \mathcal{A}_t}\!
      E_{\epsilon,t,x}(\bm{a},\delta[\bm{a}])
\bigr)
\le
\mu\bigl(
      \mathop{\bigcup}\limits_{\delta \in \Delta_t}
      E_{\epsilon,t,x}(\bm{a},\delta[\bm{a}])
\bigr).\) Since each \(E_{\epsilon,t,x}(\bm{a},\delta[\bm{a}])\subset [0,T]\) has finite measure, and for any \(\bm{a} \in \mathcal{A}_t\) the family 
\(\mathop{\bigcup}\limits_{\delta \in \Delta_t} E_{\epsilon,t,x}(\bm{a},\delta[\bm{a}])\) decreases as \(\epsilon \downarrow 0\), the continuity-from-above property of the Lebesgue measure gives  
\(\displaystyle
\lim_{\epsilon \to 0^{+}}
\mu\bigl(
      \mathop{\bigcup}\limits_{\delta \in \Delta_t}
      E_{\epsilon,t,x}(\bm{a},\delta[\bm{a}])
\bigr)
=
\mu\bigl(
      \mathop{\bigcap}\limits_{\epsilon > 0}
      \mathop{\bigcup}\limits_{\delta \in \Delta_t}
      E_{\epsilon,t,x}(\bm{a},\delta[\bm{a}])
\bigr).\) Since it is clear that
\(\mathop{\bigcap}\limits_{\epsilon > 0}
  \mathop{\bigcup}\limits_{\delta \in \Delta_t}
  E_{\epsilon,t,x}(\bm{a},\delta[\bm{a}])\!=\!\varnothing,\)  
we obtain  
\(\displaystyle
\lim_{\epsilon \to 0^{+}}
\mu\bigl(
      \mathop{\bigcup}\limits_{\delta \in \Delta_t}\!
      \mathop{\bigcap}\limits_{\bm{a} \in \mathcal{A}_t}\!
      E_{\epsilon,t,x}(\bm{a},\delta[\bm{a}])
\bigr) = 0.\)\\
\noindent Therefore, by part \textbf{5)}, \(W_{\epsilon}\) converges to \(W\) pointwise as \(\epsilon \downarrow 0\).\\
\noindent \textbf{4)} For every $\epsilon>0$, $W_\epsilon$ is Lipschitz continuous by part~\textbf{1)}, hence measurable. Since $W_\epsilon$ is finite on $[0,T]\times\mathbb{R}^n\times\mathbb{R}$ and converges pointwise to $W$, hence $W$ is also finite and measurable as the pointwise limit of finite and measurable functions. Let $D\!\subset\![0,T]\!\times\!\mathbb{R}^n\!\times\!\mathbb{R}$ be compact; then $\mu(D)\!<\!\infty$. By Egorov's Theorem~\cite{Royden}, for any $\eta\!>\!0$, there exists a measurable set $\tilde{D}\!\subseteq\!D$ with $\mu(D\!\setminus\!\tilde{D})<\eta$ on which the convergence $W_\epsilon\to W$ is uniform. Define \(
A_\epsilon := \{(t,x,z)\in D : |W(t,x,z)-W_\epsilon(t,x,z)|\!>\!\eta\}.
\) Then we have \(
\mu(A_\epsilon)=\mu(A_\epsilon\cap\tilde{D})+\mu(A_\epsilon\cap(D\!\setminus\!\tilde{D})).
\) Due to uniform convergence on $\tilde{D}$, there exists $\epsilon_0>0$ such that for all $\!\epsilon\!\leq\!\epsilon_0$, 
\(
|\!W(t,x,z)\!-\!W_\epsilon(t,x,z)\!|\!<\!\eta~\forall (t,x,z)\in\!\tilde{D},
\)
hence $A_\epsilon\!\cap\!\tilde{D}\!=\!\emptyset$. Thus, for every $\epsilon\!\leq\!\epsilon_0$, \(\!
\mu(A_\epsilon)\!=\!\mu(D\!\setminus\!\tilde{D})<\!\eta.\) Since $\!\eta\!>\!0$ is arbitrary, we have $W_\epsilon\!\to\!W$ in measure on $D$.\hfill$\blacksquare$\

\vspace{0.29em}
\phantomsection
\label{apx:thm3}           
\!\!\!\!\!\!\noindent{\textbf{H.\;Proof of Proposition~\ref{thm2}}}:
\noindent For \(t\in[0,T]\), \((x,z)\in\mathbb R^{n}\times\mathbb R\) and \(\epsilon>0\), set \(c(t,x,z):=\max\!\bigl(c_2(t,x),-z,g(t,x)\bigr)\).
Because each argument of the \(\max\) is Lipschitz continuous, so is \(c\). Using \(c\) the value function becomes
\(
\begin{aligned}
W_\epsilon(t, x, z)&=\sup_{\delta\in\Delta_t}\inf_{a\in\mathcal{A}_t}\min_{\tau\in [t,T]}\max\Big\{\!\max_{s\in[t,\tau]} c_1\big(s,\phi^{\bm{a},\delta[\bm{a}]}_{t,x}(s)\big),\\
&\quad c\big(\tau,\phi^{\bm{a},\delta[\bm{a}]}_{t,x}(\tau),\xi^{\bm{a},\delta[\bm{a}]}_{\epsilon,t,x,z}(\tau)\big)\!\Big\}.
\end{aligned}
\)
\noindent Therefore, by Lemma~2 in~\cite{c3}, for all \( t\!\in\![0, T)\), \(\tilde{\delta}\!>0\) with \( t+\tilde{\delta}\!\leq T \), and \( (x, z)\!\in\!\mathbb{R}^n\!\times\!\mathbb{R}\), \( W_\epsilon(t, x, z) \) satisfies:
\(
\begin{aligned}
W_\epsilon(t,x,z)
  &=\!\sup_{\delta\in\Delta_t}\!\inf_{a\in\mathcal{A}_t} \Big\{\!\min \Big[ \min_{\tau\in[t,t+\tilde{\delta}]} \max \Big(\\[-0.1ex]
& \hspace{-0.9em}\max_{s\in[t,\tau]} c_1\bigl(s, \phi^{\bm{a},\delta[\bm{a}]}_{t,x}(s) \bigr),\,
   c\bigl( \tau, \phi^{\bm{a},\delta[\bm{a}]}_{t,x}(\tau), \xi^{\bm{a},\delta[\bm{a}]}_{\epsilon,t,x,z}(\tau\!)\bigr)
   \Big),\\[-0.1ex]
&\hspace{-0.5em} \max \Big(
W_\epsilon\bigl( t+\tilde{\delta}, \phi^{\bm{a},\delta[\bm{a}]}_{t,x}(t+\tilde{\delta}),
                   \xi^{\bm{a},\delta[\bm{a}]}_{\epsilon,t,x,z}(t+\tilde{\delta}) \bigr),\\[-0.1ex]
&\quad
   \max_{\tau\in[t,t+\tilde{\delta}]} c_1\bigl(\tau, \phi^{\bm{a},\delta[\bm{a}]}_{t,x}(\tau) \bigr)
   \Big) \Big] \Big\}
\end{aligned}
\)\\
\noindent which is equivalent to~\eqref{eq:dynamic_prog}. This proves Proposition~\ref{thm2}.\hfill$\blacksquare$\\\
\vspace{0.095em}
\noindent{\textbf{Proof of Theorem~\ref{final_theorem}}}:
With \((c_2,-z,g)\) replaced by the function \(c\), the construction of \(W_\epsilon\) in
Proposition~\ref{thm2} places the problem within the scope of Theorem~1
in~\cite{c3}. Following the same proof as Theorem~1 in~\cite{c3}, we conclude that \(W_\epsilon\) is the unique viscosity
solution of the resulting HJI–VI. \hfill\(\blacksquare\)

\vspace{0.095em}
\phantomsection   
\label{apx:prop9}           
\noindent{\textbf{I.\;Proof of Proposition~\ref{Prop_9}}}:
The proof is analogous to the~\hyperref[apx:prop4]{proof of Proposition~\ref{prop:RA_{Q} characterization}} and, as such, is omitted. \hfill\(\blacksquare\)

\phantomsection      
\label{apx:prop10}           
\noindent{\textbf{J.\;Proof of Theorem~\ref{Prop_10}}}:
\textbf{1)}
    For any \(\!\epsilon\!>\!0\), let \((t, x)\in\widetilde{\mathcal{RA}}_{Q}^{\epsilon}\) and \(Q\in(0, T]\). Define \( b := \mu\!\left(
  \bigcup_{\delta\in\Delta_t}
  \bigcap_{\bm a\in\mathcal A_t}
  E_{\epsilon,t,x}\!\big(\bm a,\delta[\bm a]\big)
\right) \). By definition, \(
    W_\epsilon(t,x,Q)\leq-b.\) From Theorem~\ref{family_value_functions},
    \(
    W(t,x,Q)\leq W_\epsilon(t,x,Q)+b\leq0.
    \) Thus, \(W(t,x,Q)\leq 0\); therefore, \((t, x) \in \mathcal{RA}_Q\). Since \((t,x)\) and \(Q\) were chosen arbitrarily, it follows that, for every \(\epsilon>0\),
    \(
    \widetilde{\mathcal{RA}}_{Q}^{\epsilon}\subseteq \widetilde{\mathcal{RA}}_{Q}.\)\\
\noindent \textbf{2)} Let \(\epsilon>0\) and \(z\in[0,T]\). At the terminal time \(T\),
\(
W_\epsilon(T,x,z)=\max\{\,c_1(T,x),\,c_2(T,x),\,-z,\,g(T,x)\,\},
\)
where \(c_1(T,\cdot),\,c_2(T,\cdot),\,g(T,\cdot)\) are signed–distance functions whose subzero level sets are the compact sets \(\mathbb{C}_{1,T},\,\mathbb{C}_{2,T},\,\mathbb{T}_T\); hence \(\{\,x: W_\epsilon(T,x,z)\le 0\,\}=\mathbb{C}_{1,T}\cap\mathbb{C}_{2,T}\cap\mathbb{T}_T\) is compact. Since \(W_\epsilon\) is Lipschitz in time, there exists \(L>0\) with \(\lvert W_\epsilon(T,x,z)-W_\epsilon(t,x,z)\rvert\le L\lvert T-t\rvert\) for all \(t\in[0,T]\). Because \(W_\epsilon(T,x,z)\to +\infty\) as \(\|x\|\to\infty\), we can choose \(R>0\) so that \(W_\epsilon(T,x,z)>LT\) whenever \(\|x\|>R\). It follows that for every \(t\in[0,T]\), \(\{\,x: W_\epsilon(t,x,z)\le 0\,\}\subseteq B_R(0);\) thus, for each fixed \(t\), the zero–sublevel set is bounded and, since \(W_\epsilon\) is continuous in \(x\), it is also closed, hence compact. Therefore, \(\widetilde{\mathcal{RA}}_Q\) with \(Q=z\) is contained in the compact set \([0,T]\times B_R(0)\). Define \(\mathcal{S}_\epsilon(Q):=\{(t,x)\in[0,T]\times B_R(0): -b< W(t,x,Q)\le 0\}\).
Therefore, \(
\widetilde{\mathcal{RA}}_{Q}^{\epsilon}\,\triangle\,\widetilde{\mathcal{RA}}_{Q}\subseteq \mathcal{S}_\epsilon(Q).\) This implies that
\(\!\lim_{\epsilon\downarrow 0}\,\mu\!\left(\!\widetilde{\mathcal{RA}}_{Q}^{\epsilon}\,\triangle\,\widetilde{\mathcal{RA}}_{Q}\!\right)
\!\le\!
\lim_{\epsilon\downarrow 0}\,\mu\!\left(\mathcal{S}_\epsilon(Q)\right).
\) Since \([0,T]\times B_R(0)\) has finite measure, we may apply the continuity from above property of the Lebesgue measure to the decreasing family
\(\{\mathcal{S}_\epsilon(Q)\}_{\epsilon>0}\):
for any sequence \(\!\epsilon_k\!\downarrow\! 0\),
\(
\lim_{k\to\infty}\mu\!\left(\mathcal{S}_{\epsilon_k}(Q)\right)
\!=\!\mu\!\left(\bigcap_{k=1}^\infty \mathcal{S}_{\epsilon_k}(Q)\right)
=\mu\!\left(\{(t,x)\in[0,T]\times B_R(0): W(t,x,Q)=0\}\right).
\) Therefore,
\[
\begin{aligned}
\lim_{\epsilon \downarrow 0}\,\mu\!\left(\widetilde{\mathcal{RA}}_{Q}^{\epsilon}\,\triangle\,\widetilde{\mathcal{RA}}_{Q}\right)
&\le \\
&\hspace{-5em}\mu\!\left(\{(t,x)\in[0,T]\times B_R(0):\, W(t,x,Q)=0\}\right).
\end{aligned}
\]
Hence, the limit vanishes whenever the zero-level set of \(W(\cdot,\cdot,Q)\) has measure zero. \hfill\(\blacksquare\)

\section*{Acknowledgment}
The authors thank Scott Moura, Jingqi Li, Marsalis Gibson, Jason Choi, and Bryce Ferguson for the insightful discussions.
\section*{References}

\begin{IEEEbiography}[{\includegraphics[width=1in,height=1.25in,clip,keepaspectratio]{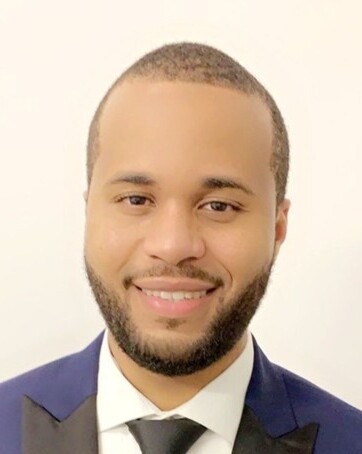}}]{Chams Eddine Mballo}
received the Ph.D.\ in aerospace engineering (2022) and an M.S.\ in mathematics (2021) from the Georgia Institute of Technology, Atlanta. He is currently a Postdoctoral Fellow at the University of California, Berkeley. His research focuses on safety‐critical control, reachability analysis, human–machine interaction, and sustainable aviation for next-generation electric aircraft.
\end{IEEEbiography}

\begin{IEEEbiography}[{\includegraphics[width=1in,height=1.25in,clip,keepaspectratio]{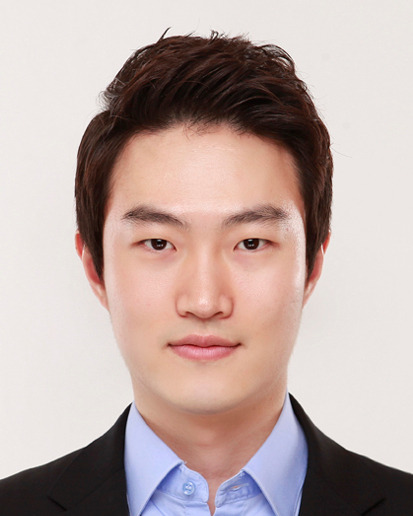}}]{Donggun Lee} received the Ph.D.\ degree in mechanical engineering from the University of California, Berkeley, CA, USA. He was a Postdoctoral Associate at the Massachusetts Institute of Technology (MIT), Cambridge, MA, USA. He is currently an Assistant Professor in Mechanical and Aerospace Engineering at North Carolina State University (NCSU), Raleigh, NC, USA. His research interests include control theory and machine learning, with applications to robotics and vehicles.
\end{IEEEbiography}

\begin{IEEEbiography}[{\includegraphics[width=1in,height=1.25in,clip,keepaspectratio]{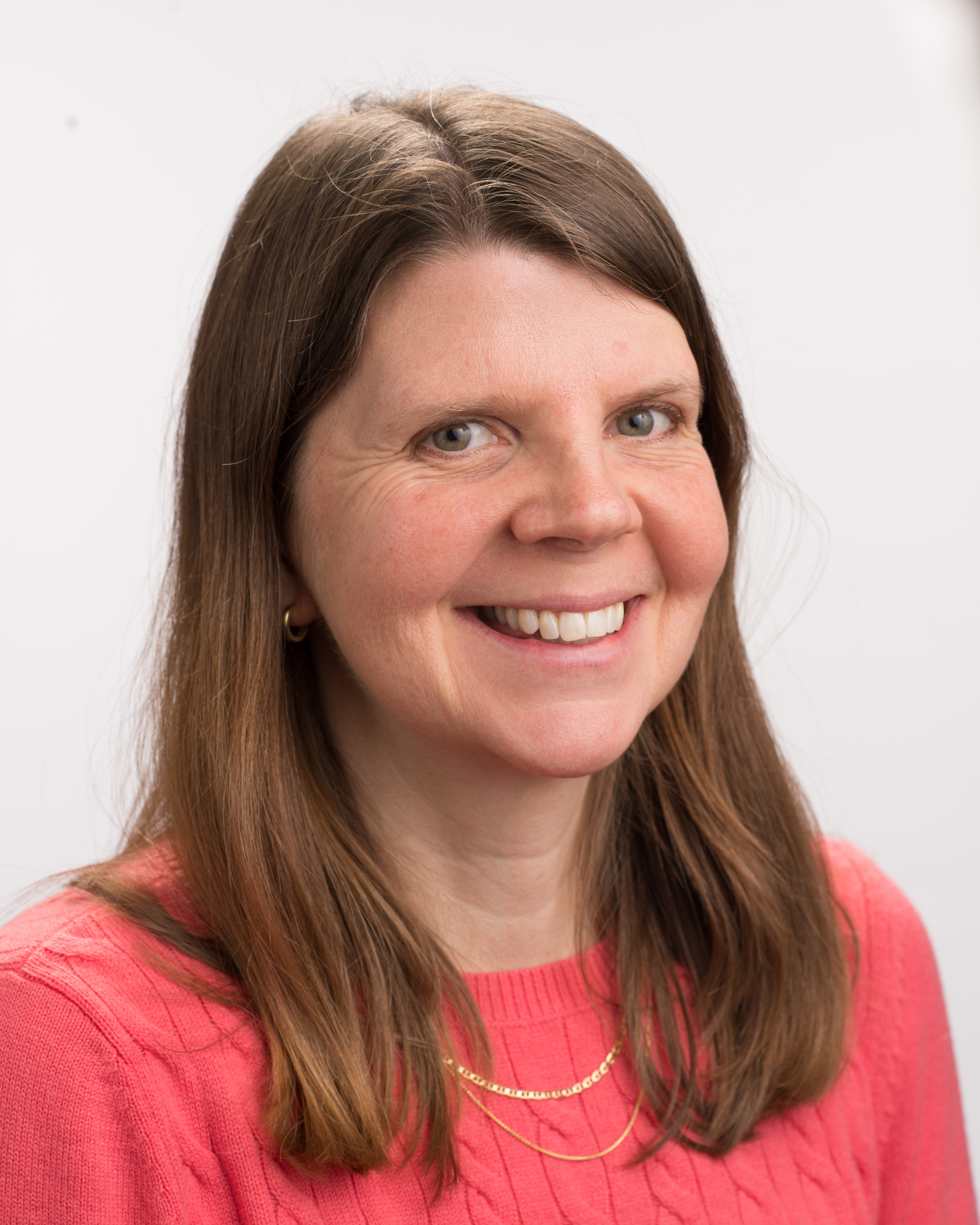}}]{Claire J. Tomlin}
(Fellow, IEEE) received the Ph.D.\ degree in electrical engineering and computer science from the University of California, Berkeley, CA, USA, in 1998. She is the James and Katherine Lau Professor of Engineering and the Professor and Chair with the Department of Electrical Engineering and Computer Sciences at UC Berkeley, Berkeley, CA, USA. From 1998 to 2007, she was an Assistant, Associate, and Full Professor in Aeronautics and Astronautics at Stanford University, Stanford, CA. In 2005, she joined UC Berkeley. Her research interests include control theory and hybrid systems, with applications to air traffic management, UAV systems, energy, robotics, and systems biology.\\
\indent Dr.~Tomlin is a MacArthur Foundation Fellow (2006). She was the recipient of the IEEE Transportation Technologies Award in 2017. In 2019, she was elected to the National Academy of Engineering and the American Academy of Arts and Sciences.
\end{IEEEbiography}

\begin{thebibliography}{00}
\bibitem{c12}
I.~M.~Mitchell, A.~M.~Bayen, and C.~J.~Tomlin,
``A time-dependent Hamilton–Jacobi formulation of reachable sets for continuous dynamic games,''
\emph{IEEE Trans.\ Autom.\ Control}, vol.~50, no.~7, pp.~947--957, Jul.\ 2005.

\bibitem{c11}
S.~Bansal, M.~Chen, S.~Herbert, and C.~J.~Tomlin,
``Hamilton–Jacobi reachability: A brief overview and recent advances,''
in \emph{Proc.\ IEEE Conf.\ Decision Control (CDC)}, Melbourne, Australia, Dec.\ 2017, pp.~2242--2253.

\bibitem{c3}
J.~F.~Fisac, C.~Mo, C.~J.~Tomlin, and S.~S.~Sastry,
``Reach-avoid problems with time-varying dynamics, targets and constraints,''
in \emph{Proc.\ 18th Int.\ Conf.\ Hybrid Syst.: Comput.\ Control (HSCC)}, Seattle, WA, Apr.\ 2015, pp.~11--20. 

\bibitem{c_11}
I.~M.~Mitchell and J.~A.~Templeton,
``A toolbox of Hamilton–Jacobi solvers for analysis of nondeterministic continuous and hybrid systems,''
in {\it Hybrid Systems: Computation and Control},  
Berlin, Germany: Springer, 2005, pp.~480--494.

\bibitem{c_12}
M.~Chen, S.~Herbert, S.~Bansal, and C.~J.~Tomlin,
``Optimal Control Helper Toolbox,'' GitHub repository, 2023.

\bibitem{c14}
J.~F.~Fisac, A.~K.~Akametalu, M.~N.~Zeilinger, S.~Kaynama, J.~Gillula, and C.~J.~Tomlin,
``A general safety framework for learning-based control in uncertain robotic systems,''
\emph{IEEE Trans.\ Autom.\ Control}, vol.~64, no.~7, pp.~2737--2752, Jul.\ 2019.

\bibitem{c15}
J.~J.~Choi, A.~Agrawal, K.~Sreenath, C.~J.~Tomlin, and S.~Bansal,
``Computation of regions of attraction for hybrid limit cycles using reachability: An application to walking robots,''
\emph{IEEE Robot.\ Autom.\ Lett.}, vol.~7, no.~2, pp.~4504--4511, Apr.\ 2022.

\bibitem{c17}
A.~M.~Bayen, I.~M.~Mitchell, M.~M.~K.~Oishi, and C.~J.~Tomlin,
``Aircraft autolander safety analysis through optimal-control-based reach-set computation,''
\emph{J.\ Guid.\ Control Dyn.}, vol.~45, no.~7, pp.~68--77, Jan./Feb.\ 2022.

\bibitem{c16}
T.~Hsu, J.~J.~Choi, D.~Amin, C.~J.~Tomlin, S.~C.~MacWherter, and M.~Piedmonte,
``Towards flight envelope protection for the NASA tilt-wing eVTOL flight-mode transition using Hamilton–Jacobi reachability,''
\emph{J.\ Amer.\ Helicopter Soc.}, vol.~69, no.~2, pp.~1--18, Apr. 2024.

\bibitem{c25}
C.~Dangwal, D.~Zhang, L.~D.~Couto, P.~Gill, B.~Sebastien, W.~Zeng, and S.~Moura,
``Pack-level state-of-power prediction for heterogeneous cells,''
in \emph{Proc.\ Amer.\ Control Conf.\ (ACC)}, Atlanta, GA, USA, Jun.\ 2022, pp.~1066--1073. 

\bibitem{c24}
G.~Gursoy,
``Direct adaptive flight envelope protection,''
Ph.D.\ dissertation, Dept.\ Aerosp.\ Eng., Middle East Tech.\ Univ., Ankara, Turkey, 2016.

% \bibitem{Bertsekas2014}
% D.~P.~Bertsekas, \emph{Constrained Optimization and Lagrange Multiplier Methods}. Belmont, MA, USA: Athena Scientific, 2014, pp.~179--301.

% \bibitem{Betts2010}
% J.~T.~Betts, \emph{Practical Methods for Optimal Control and Estimation Using Nonlinear Programming}. Philadelphia, PA, USA: SIAM, 2010, pp.~1--49, 123--217.

% \bibitem{Bemporad1999}
% A.~Bemporad and M.~Morari, ``Control of systems integrating logic, dynamics, and constraints,'' \emph{Automatica}, vol.~35, no.~3, pp.~407--427, Mar.\ 1999.

% \bibitem{MIP}
% M.~Vitus, V.~Pradeep, G.~Hoffmann, S.~Waslander, and C.~Tomlin,
% ``Tunnel-MILP: Path planning with sequential convex polytopes,''
% in {\it Proc.\ AIAA Guid., Navig.\ Control Conf.\ Exhibit}, Honolulu, HI, USA, Aug.~2008, Paper~2008-7132.

% \bibitem{MIP_2}
% A.~Richards and J.~P.~How,
% ``Aircraft trajectory planning with collision avoidance using mixed integer linear programming,''
% in {\it Proc.\ Amer.\ Control Conf.\ (ACC),} Anchorage, AK, USA, May~2002, pp.~1936--1941.

\bibitem{altman}
E.~Altman, \emph{Constrained Markov Decision Processes}. Boca Raton, FL, USA: CRC Press, 1999, pp.~407--427.

\bibitem{Russel2020}
R.~H.~Russel, M.~Benosman, and J.~van~Baar,
“Robust constrained MDPs: Soft-constrained robust policy optimization under model uncertainty,”
\emph{arXiv:2010.04870}, 2020.

\bibitem{Carrara2018}
N.~Carrara, R.~Laroche, J.-L.~Bouraoui, T.~Urvoy, and O.~Pietquin,
``A fitted-Q algorithm for budgeted MDPs,''
in \emph{Proc.\ 14th Eur.\ Workshop Reinforcement Learning (EWRL)}, Lille, France, Oct.~2018.

\bibitem{Lin2023}
Q.~Lin, B.~Tang, Z.~Wu, C.~Yu, S.~Mao, Q.~Xie, X.~Wang, and D.~Wang, 
“Safe offline reinforcement learning with real-time budget constraints,” 
in \emph{Proc.\ 40th Int.\ Conf.\ Mach.\ Learn.\ (ICML)}, 
Honolulu, HI, USA, Jul.\ 23--29, 2023, pp.~21127--21152.

\bibitem{Jiang2024}
H.~Jiang, T.~Mai, P.~Varakantham, and H.~Hoang,
“Reward penalties on augmented states for solving richly constrained RL effectively,”
in \emph{Proc.\ 38th AAAI Conf.\ Artif.\ Intell.\ (AAAI-24)}, Vancouver, BC, Canada, Feb.\ 20–27, 2024,
pp.~19867–19875.

\bibitem{Gu2024}
S.~Gu, D.~Huang, M.~Wen, G.~Chen, and A.~Knoll,
“Safe multi-agent learning with soft-constrained policy optimization in real robot control,”
\emph{IEEE Trans.\ Ind.\ Informat.}, vol.~20, no.~9, pp.~10706--10716, Sept.\ 2024.

\bibitem{Huang2021}
S.~H.~Huang, A.~Abdolmaleki, G.~Vezzani, P.~Brakel, D.~J.~Mankowitz, 
M.~Neunert, S.~Bohez, Y.~Tassa, N.~Heess, M.~Riedmiller, and R.~Hadsell, 
“A constrained multi-objective reinforcement learning framework,” 
in \emph{Proc.\ 5th Conf.\ Robot Learning\ (CoRL)}, 
London, U.K., Nov.\ 8--11, 2021, pp.~883--893.


\bibitem{Fisac2019}
J.~F.~Fisac, N.~F.~Lugovoy, V.~Rubies-Royo, S.~Ghosh, and C.~J.~Tomlin,
``Bridging Hamilton–Jacobi safety analysis and reinforcement learning,''
in \emph{Proc.\ IEEE Int.\ Conf.\ Robot.\ Autom.\ (ICRA)}, Montreal, QC, Canada, May 2019, pp.~8550--8556.

\bibitem{Yu2022}
D.~Yu, H.~Ma, S.~Li, and J.~Chen,
“Reachability constrained reinforcement learning,”
in \emph{Proc. 39th Int. Conf. Mach. Learn. (ICML)}, Baltimore, MD, USA, Jul.\ 17--23, 2022,
pp.~25636--25655.

\bibitem{Ganai2023}
M.~Ganai, Z.~Gong, C.~Yu, S.~Herbert, and S.~Gao,
“Iterative reachability estimation for safe reinforcement learning,”
in \emph{Proc. 37th Int. Conf. Neural Inf. Process. Syst. (NeurIPS)}, New Orleans, LA, USA,
Dec.\ 10--16, 2023, Art.\ no.~3058.

\bibitem{ding2020natural}
D.~Ding, K.~Zhang, T.~Başar, and M.~Jovanović,
``Natural policy-gradient primal–dual method for constrained Markov decision processes,''
in \emph{Proc.\ Adv.\ Neural Inf.\ Process.\ Syst.}\ 33 (NeurIPS), Vancouver, BC, Canada (virtual), Dec.~2020, pp.~8378--8390. 

\bibitem{ying2023policy}
D.~Ying, M.~A.~Guo, Y.~Ding, J.~Lavaei, and Z.-J.~Shen,
“Policy-based primal-dual methods for convex constrained Markov decision processes,”
in \emph{Proc. 37th AAAI Conf. Artif. Intell. (AAAI-23)}, Washington, DC, USA,
Feb.\ 7--14, 2023, pp.~10963--10971.

\bibitem{achiam2017constrained}
J.~Achiam, D.~Held, A.~Tamar, and P.~Abbeel,
``Constrained policy optimization,''
in {\it Proc.\ 34th Int.\ Conf.\ Mach.\ Learn.\ (ICML),} Sydney, NSW, Australia, Aug.~2017, pp.~22--31.

\bibitem{chow2018risk}
Y.~Chow, M.~Ghavamzadeh, L.~Janson, and M.~Pavone,
``Risk-constrained reinforcement learning with percentile risk criteria,''
{\it J.\ Mach.\ Learn.\ Res.}, vol.~18, Paper~167, pp.~1--51, 2018.

\bibitem{mpc_1}
D.~Q.~Mayne,
``Model predictive control: Recent developments and future promise,''
{\it Automatica}, vol.~50, no.~12, pp.~2967--2986, Dec.~2014.

\bibitem{mpc_3}
V.~Raman, A.~Donz{\'e}, D.~Sadigh, R.~M.~Murray, and S.~A.~Seshia,
``Reactive synthesis from signal temporal logic specifications,''
in {\it Proc.\ 18th Int.\ Conf.\ Hybrid Syst.: Comput.\ Control (HSCC),} Seattle, WA, USA, Apr.~2015, pp.~239--248.

\bibitem{mpc_4}
J.~M.~Filho, E.~Lucet, and D.~Filliat,
``Real-time distributed receding-horizon motion planning and control for mobile multi-robot dynamic systems,''
in {\it Proc.\ IEEE Int.\ Conf.\ Robot.\ Autom.\ (ICRA),} Singapore, May~2017, pp.~657--663.

\bibitem{mpc_5}
C.~Liu, S.~Lee, S.~Varnhagen, and H.~E.~Tseng,
``Path planning for autonomous vehicles using model predictive control,''
in {\it Proc.\ IEEE Intell.\ Vehicles Symp.\ (IV),} Los Angeles, CA, Jun.~2017, pp.~174--179.

\bibitem{mpc_6}
M.~M.~G.~Ardakani, B.~Olofsson, A.~Robertsson, and R.~Johansson,
``Real-time trajectory generation using model predictive control,''
in {\it Proc.\ IEEE Int.\ Conf.\ Autom.\ Sci.\ Eng.\ (CASE),} Gothenburg, Sweden, Aug.~2015, pp.~942--948.

\bibitem{mpc_7}
D.~Ren, W.~Lu, J.~Lv, L.~Zhang, and B.~Xue,
``Model predictive control with reach-avoid analysis,''
in {\it Proc.\ 32nd Int.\ Joint Conf.\ Artif.\ Intell.\ (IJCAI),} Macao, China, Aug.~2023, pp.~5437--5445.

\bibitem{safety_filters}
K.~P.~Wabersich, A.~J.~Taylor, J.~J.~Choi, K.~Sreenath, C.~J.~Tomlin, 
A.~D.~Ames, and M.~N.~Zeilinger, 
“Data-driven safety filters: Hamilton–Jacobi reachability, control barrier functions, and predictive methods for uncertain systems,” 
\emph{IEEE Control Syst.\ Mag.}, vol.~43, no.~5, pp.~137--177, Oct.\ 2023.


\bibitem{mpc_9}
X.~Zhang, A.~Liniger, and F.~Borrelli,
``Optimization-based collision avoidance,''
{\it IEEE Trans.\ Control Syst.\ Technol.}, vol.~29, no.~3, pp.~972--983, May~2021.

\bibitem{mpc_soft_constraints1}
E.~C.~Kerrigan and J.~M.~Maciejowski,
``Soft constraints and exact penalty functions in model predictive control,''
in {\it Proc.\ UKACC Int.\ Conf.\ Control,} Cambridge, U.K., Sep.~2000, pp.~2319--2327.

\bibitem{mpc_soft_constraints2}
J.~Oravec and M.~Bako{\v{s}}ov{\'a},
``Soft constraints in the robust MPC design via LMIs,''
in {\it Proc.\ Amer.\ Control Conf.\ (ACC),} Boston, MA, USA, Jul.~2016, pp.~3588--3593.

\bibitem{mpc_soft_constraints3}
K.~P.~Wabersich and M.~N.~Zeilinger,
``Predictive control barrier functions: Enhanced safety mechanisms for learning-based control,''
{\it IEEE Trans.\ Autom.\ Control}, vol.~68, no.~5, pp.~2638--2651, May~2023.

\bibitem{mpc_soft_constraints4}
K.~P.~Wabersich, R.~Krishnadas, and M.~N.~Zeilinger,
``A soft constrained MPC formulation enabling learning from trajectories with constraint violations,''
{\it IEEE Control Syst.\ Lett.}, vol.~6, pp.~980--985, 2022.

\bibitem{mpc_soft_constraints5}
A.~Richards,
``Fast model predictive control with soft constraints,''
{\it Eur.\ J.\ Control}, vol.~25, pp.~51--59, Apr.~2015.

\bibitem{mpc_soft_constraints6}
M.~N.~Zeilinger, M.~Morari, and C.~N.~Jones,
``Soft constrained model predictive control with robust stability guarantees,''
{\it IEEE Trans.\ Autom.\ Control}, vol.~59, no.~5, pp.~1190--1202, May~2014.

\bibitem{mpc_soft_constraints7}
N.~Chatzikiriakos, K.~P.~Wabersich, F.~Berkel, P.~Pauli, and A.~Iannelli,
``Learning soft constrained MPC value functions: Efficient MPC design and implementation providing stability and safety guarantees,''
in {\it Proc.\ 6th Annu.\ Learn.\ Dyn.\ Control Conf.\ (L4DC),} Oxford, U.K., Jul.~2024, pp.~387--398.

\bibitem{mpc_soft_constraints8}
C.~Løvaas, M.~M.~Seron, and G.~C.~Goodwin,
``Robust output-feedback MPC with soft state constraints,''
\emph{IFAC Proc.\ Vol.}, vol.~41, no.~2, pp.~13157--13162, 2008, 17th IFAC World Congress.

\bibitem{mpc_11}
P.~Pas, M.~Schuurmans, and P.~Patrinos,
“Alpaqa: A matrix‑free solver for nonlinear MPC and large‑scale nonconvex optimization,”
in \emph{Proc. Eur. Control Conf. (ECC)}, London, U.K., Jul.\ 12--15, 2022, pp.~417--422.

\bibitem{barrier_functions}
A.~D.~Ames, X.~Xu, J.~W.~Grizzle, and P.~Tabuada,
“Control barrier function based quadratic programs for safety critical systems,”
\emph{IEEE Trans. Autom. Control}, vol.~62, no.~8, pp.~3861--3876, Aug.\ 2017.

\bibitem{barrier_functions2}
A.~Ramesh~Kumar, K.-C.~Hsu, P.~J.~Ramadge, and J.~F.~Fisac,
“Fast, smooth, and safe: Implicit control barrier functions through reach-avoid differential dynamic programming,”
\emph{IEEE Control Syst. Lett.}, vol.~7, pp.~2994--2999, 2023.

\bibitem{soft_barrier_functions}
J.~Lee, J.~Kim, and A.~D.~Ames,
“Hierarchical relaxation of safety-critical controllers: Mitigating contradictory safety conditions with application to quadruped robots,”
in \emph{Proc. IEEE/RSJ Int. Conf. Intell. Robots Syst. (IROS)}, Detroit, MI, USA, Oct.\ 2023, pp.~2384--2391.

\bibitem{soft_barrier_functions1}
W.~Xiao, N.~Mehdipour, A.~Collin, A.~Y.~Bin-Nun, E.~Frazzoli, 
R.~D.~Tebbens, and C.~Belta, 
“Rule-based optimal control for autonomous driving,” 
in \emph{Proc.\ ACM/IEEE 12th Int.\ Conf.\ Cyber-Physical Syst.\ (ICCPS)}, 
Nashville, TN, USA, May~19--21, 2021, pp.~143--154.

\bibitem{NN_1}
S.~Bansal and C.~J.~Tomlin,
``DeepReach: A deep-learning approach to high-dimensional reachability,''
in \emph{Proc.\ IEEE Int.\ Conf.\ Robot.\ Autom.\ (ICRA)}, Xi’an, China, May 2021, pp.~1817--1824.

\bibitem{NN_2}
C.~Dawson, B.~Lowenkamp, D.~Goff, and C.~Fan,
“Learning safe, generalizable perception-based hybrid control with certificates,”
\emph{IEEE Robot. Autom. Lett.}, vol.~7, no.~2, pp.~1904--1911, Apr.\ 2022.

\bibitem{NN_3}
S.~M.~Richards, F.~Berkenkamp, and A.~Krause,
“The Lyapunov neural network: Adaptive stability certification for safe learning of dynamical systems,”
in \emph{Proc. 2nd Conf. Robot Learn. (CoRL)}, Zürich, Switzerland, Oct.\ 29--31, 2018,
pp.~466--476.

% \bibitem{NN_4}
% A.~Dener, M.~A.~Miller, R.~M.~Churchill, T.~Munson, and C.-S.~Chang,
% “Training neural networks under physical constraints using a stochastic augmented Lagrangian approach,”
% \emph{arXiv:2009.07330}, Sep.\ 2020.

% \bibitem{NN_5}
% M.~Raissi, P.~Perdikaris, and G.~E.~Karniadakis,
% ``Physics-informed neural networks: A deep learning framework for solving forward and inverse problems involving nonlinear partial differential equations,''
% \emph{J.\ Comput.\ Phys.}, vol.~378, pp.~686--707, Feb.\ 2019.

% \bibitem{NN_6}
% X.~Li, J.~Deng, J.~Wu, S.~Zhang, W.~Li, and Y.-G.~Wang, 
% “Physics-informed neural networks with soft and hard boundary constraints for solving advection–diffusion equations using Fourier expansions,” 
% \emph{Comput.\ Math.\ Appl.}, vol.~159, pp.~60--75, Jan.\ 2024.

% \bibitem{NN_7}
% P.~Márquez-Neila, M.~Salzmann, and P.~Fua,
% “Imposing hard constraints on deep networks: Promises and limitations,”
% \emph{arXiv:1706.02025}, 2017.

\bibitem{NN_8}
Y.~Min and N.~Azizan,
``Hard-constrained neural networks with universal approximation guarantees,''
\emph{arXiv:2410.10807}, Oct.\ 2025.

\bibitem{NN_9}
J.~Tordesillas, J.~P.~How, and M.~Hutter,
``RAYEN: Imposition of hard convex constraints on neural networks,''
\emph{arXiv:2307.08336}, Jul.\ 2023.

\bibitem{NN_10}
A.~V.~Konstantinov and L.~V.~Utkin, ``A new computationally simple approach for implementing neural networks with output hard constraints,'' {\it Dokl. Math.}, vol.~108, suppl.~2, pp.~S233--S241, 2023. 

\bibitem{NN_11}
P.~L.~Donti, D.~Rolnick, and J.~Z.~Kolter,
``DC3: A learning method for optimization with hard constraints,''
in \emph{Proc.\ Int.\ Conf.\ Learn.\ Represent.\ (ICLR)}, Vienna, Austria, May 2021.

\bibitem{Evans}
L.~C.~Evans and P.~E.~Souganidis,
``Differential games and representation formulas for solutions of Hamilton–Jacobi–Isaacs equations,''
\emph{Indiana Univ.\ Math.\ J.}, vol.~33, no.~5, pp.~773--797, 1984.

\bibitem{c1}
E.~A.~Coddington and N.~Levinson, 
\emph{Theory of Ordinary Differential Equations}. 
Malabar, FL, USA: Krieger, 1984, pp.~1--11.

\bibitem{c2}
O.~Bokanowski and H.~Zidani,
``Minimal-time problems with moving targets and obstacles,''
{\it IFAC PapersOnLine}, vol.~44, no.~1, pp.~2589--2593, Jan.~2011.

\bibitem{c29_1}
M.~Bardi and I.~Capuzzo-Dolcetta,
{\it Optimal Control and Viscosity Solutions of Hamilton–Jacobi–Bellman Equations.}
Boston, MA, USA: Birkhäuser, 1997, pp.~25--96.

\bibitem{crandall1983}
M.~G.~Crandall and P.-L.~Lions,
“Viscosity solutions of Hamilton–Jacobi equations,”
\emph{Trans. Amer. Math. Soc.}, vol.~277, no.~1, pp.~1--42, May 1983.

\bibitem{souganidis_1985}
P.~E.~Souganidis,
``Approximation schemes for viscosity solutions of Hamilton–Jacobi equations,''
{\it J.\ Differ.\ Equ.}, vol.~59, no.~1, pp.~1--43, Jan.~1985.

\bibitem{Royden}
H.~L.~Royden and P.~M.~Fitzpatrick,
{\it Real Analysis}, 4th~ed.
Boston, MA, USA: Pearson, 2010, pp.~64--67.

\bibitem{c2_15}
I.~Gribkovskaia, {\O}.~Halskau, and M.~Y.~Kovalyov,
``Minimizing takeoff and landing risk in helicopter pickup and delivery operations,''
{\it Omega}, vol.~55, pp.~73--80, Feb.~2015.

\bibitem{c2_prime}
J.~Hauser, S.~Sastry, and G.~Meyer,
``Nonlinear control design for slightly non-minimum phase systems: Application to V/STOL aircraft,''
{\it Automatica}, vol.~28, no.~4, pp.~665--679, Apr.~1992.

\bibitem{c2_15_1}
C.~E.~Mballo, R.~Walters, and J.~V.~R.~Prasad,
``Load limiting control for component life extension,''
\emph{J.\ Guid.\ Control Dyn.}, vol.~48, no.~2, pp.~255--268, Feb.\ 2024.

\bibitem{FAR1990}
Federal Aviation Administration,
``Federal Aviation Regulations, Sec.\ 25.125: Landing,''
{\it Code Fed.\ Reg.}, Title 14, Part 25, 1990.

\bibitem{c_2_16}
A.~K.~Akametalu, C.~J.~Tomlin, and M.~Chen,
``Reachability-based forced landing system,''
{\it J.\ Guid.\ Control Dyn.}, vol.~41, no.~12, pp.~2529--2542.

\bibitem{kraft2018hausdorff}
D.~Kraft,
``Computing the Hausdorff distance of two sets from their signed distance functions,''
\emph{arXiv:1812.06740}, Dec.~2018.

\bibitem{Rockfellar}
R.~T.~Rockafellar,
``Integral functionals, normal integrands and measurable selections,''
in {\it Nonlinear Operators and the Calculus of Variations},  
Lecture Notes in Mathematics, vol.~543, Berlin, Germany: Springer, 1976, pp.~157--207.

\end{thebibliography}
\end{document}